\documentclass[11pt]{article}
\usepackage[utf8]{inputenc}
\usepackage{amsmath, amssymb, amsthm}
\usepackage{graphicx}
\usepackage{geometry}
\usepackage{authblk}
\usepackage{xr-hyper}
\usepackage{hyperref}
\usepackage{algorithm}
\usepackage[noend]{algpseudocode}
\usepackage{caption}
\usepackage{subcaption}
\usepackage{fancyhdr}
\usepackage{booktabs}
\usepackage{dsfont}
\usepackage{tikz}

\geometry{
    a4paper,
    left=1in,
    right=1in,
    top=1in,
    bottom=1in
}

\newcommand{\D}{\mathcal{D}}

\newcommand{\EE}{\mathbb{E}}
\newcommand{\RR}{\mathbb{R}}
\newcommand{\PP}{\mathbb{P}}

\DeclareMathOperator*{\argmax}{argmax}
\DeclareMathOperator*{\argmin}{argmin}

\newtheorem{prop}{Proposition}
\newtheorem{lema}{Lemma}
\newtheorem{teo}{Theorem}
\theoremstyle{definition}

\newtheorem{defin}{Definition}
\newtheorem{cor}[prop]{Corollary}

\bibliographystyle{alpha}

\begin{document}

\title{\huge{ Differentially Private Algorithms for Linear Queries via Stochastic Convex Optimization}}

\author{
  \textbf{Giorgio Micali}$^{1}$, \textbf{Clement Lezane}$^{1}$, and \textbf{Annika Betken}$^{1}$ \\
  \texttt{g.micali@utwente.nl, c.lezane@utwente.nl, a.betken@utwente.nl}
}
\date{}  
\maketitle
\footnotetext[1]{University of Twente, Faculty of Electrical Engineering, Mathematics, and Computer Science (EEMCS), Drienerlolaan 5, 7522 NB Enschede, Netherlands}


\begin{abstract}

This article establishes a method to answer a finite set of  linear queries on a given dataset while ensuring differential privacy.
To achieve this, we
formulate the corresponding task  as a saddle-point problem, i.e. an optimization problem  whose solution corresponds to a distribution minimizing the difference between answers to the linear  queries based on the true distribution and answers from a differentially private distribution.  
Against this background, we establish two new algorithms for corresponding differentially private data release:
the first is 
based on the differentially private Frank-Wolfe method,  the second combines randomized smoothing with stochastic convex optimization techniques for a solution to the saddle-point problem.
While previous works assess the accuracy of 
differentially private algorithms
with reference to
the empirical  data distribution, a key contribution of our work is a more  natural evaluation of the proposed algorithms' accuracy 
with reference to the true data-generating distribution. 
\end{abstract}



\section{Introduction}
As data analysis plays a pivotal role in modern society, the need to protect private information on individuals  becomes increasingly important. Ensuring privacy of sensitive information requires robust and effective methods. Among various approaches, differential privacy \cite{dwork2006calibrating}  emerged as a widely recognized and adopted framework. In fact, today, many public  data releases, such as the 2020 US Census \cite{example2}, are required to adhere to differential privacy standards.

This article studies private synthetic data generation for query release, i.e. 
it aims at the construction of synthetic datasets 
as replacement for datasets containing sensitive information, while  balancing  
the accuracy of  answers to a predefined set of data queries and the protection of sensitive information in the original dataset. 

Unlike commonly known standard privatization methods that are based on the principle of data protection obtained through the incorporation of noise at  output level, we propose algorithms that generate a synthetic data distribution 
that  is subject to differential privacy, while close to the data-generating distribution.
One fundamental advantage of this method is that once a differentially private distribution is obtained, a  synthetic, private dataset  can be generated by  sampling independently from this distribution.

\subsection*{Related Works}

Generation of  private synthetic data for query release is at the heart of research within the field of differential privacy. 
Many established approaches
aiming at (approximate) accuracy in answering a high number of statistical queries through synthetic data release, while, at the same time preserving privacy,  
are based on the Private Multiplicative Weights (PMW) mechanism introduced by \cite{hardt2010multiplicative}.
Starting  from a uniform distribution on the data universe, PMW iteratively updates a synthetic data distribution based on the errors of all previously answered queries.
Recent works  improved the computational efficiency of PMW:
The Multiplicative Weights Exponential Mechanism (MWEM), introduced in  \cite{HM12}, builds upon PMW by using the exponential mechanism to focus on the queries with the largest errors, optimizing a surrogate loss function. This allows MWEM to achieve better accuracy and faster convergence, making it more efficient than PMW for practical purposes.
 While MWEM maintains a full distribution over the data domain, which leads to high computational complexity especially in high-dimensional settings, DualQuery (established in \cite{pmlr-v32-gaboardi14}) instead maintains a distribution over the query class, which is typically much smaller than the data domain. Moreover, \cite{vtb20}
 proposes three new algorithms for synthetic data release: FEM, sepFEM and DQRS. 
 FEM (Follow-the-Perturbed-Leader with Exponential Mechanism) selects queries using the Exponential Mechanism (like MWEM), but differs by updating through a perturbed optimization problem with stochastic updates. sepFEM is a variant of FEM which achieves better error rates for some classes of queries. The algorithm, however,  is less general as it relies on the assumption that for any two distinct pairs of elements in the data domain, there must exist a query for which their answers differ. DQRS is a variant of DualQuery, which implements rejection sampling to reuse previous samples, reducing the number of oracle calls and queries that had not been sampled before.
An overview of these algortihms' error bounds can be found in Table \ref{tab:synthetic_algorithms} (see also \cite{vtb20}). 

Another two methods for private, synthetic data release have been proposed by \cite{LVW21}:  Private Entropy Projection (PEP) and Generative Networks with the Exponential Mechanism (GEM). PEP builds upon MWEM by adaptively reusing past query measurements and optimizing learning rates resulting in improved accuracy and faster convergence than MWEM. It, however, retains the computational overhead of iterating over the full data distribution. GEM addresses this issue by training neural networks to compactly represent data distributions. While GEM shows strong empirical performance in high-dimensional settings on the one hand, it raises new challenges related to neural network training and hyperparameter tuning on the other.

Other relevant work in the context of differentially private query release include \cite{pmlr-v119-bassily20a} which studies the private and public sample complexities of the Private Query Release Assisted by Public Data (PAP) algorithms and gives upper and lower bounds on both. Another key contribution is \cite{pmlr-v139-aydore21a} which introduces the Relaxed Adaptive Projection (RAP) Mechanism, involving differentially private optimization techniques to address high-dimensional query release problems. Lastly, we highlight recent work by \cite{ZSJ24} which presents an alternative approach to private synthetic data generation for images through the use of foundation models.

\subsection*{Our contributions}
 We revisit the query release problem formulation introduced by \cite{LVW21}, where the goal is to find a distribution within a specified family that minimizes the maximum error across all queries. Unlike previous approaches that rely on the Adaptive Measurements algorithm \cite{hardt2010multiplicative}—which minimizes a surrogate loss at each iteration to solve an easier problem—we directly tackle the core objective by optimizing a regularized version of the maximum error among all query answers within the framework of Differentially Private Stochastic Convex Optimization (DP-SCO). Instead of employing adaptive measurements, we solve the problem using differentially private optimization techniques, proposing two algorithms: a private Frank-Wolfe method (DPFW)
 and a randomized smoothing mirror descent approach (DPAM). 
 By selecting an appropriate regularization constant, we
 minimize accuracy bounds for 
 both methods achieve.

Furthermore, we assess the accuracy of our methods DPFW and DPAM by providing guarantees with respect to the true data-generating distribution.
To the best of our knowledge, this is the first work that offers such guarantees.
In fact, and in contrast to our analysis, prior works  typically analyze the  empirical risk only, i.e. the risk computed from the dataset’s empirical distribution.

Notably, the accuracy of DPFW with respect to the population loss aligns with its accuracy on the empirical counterpart. The latter is comparable to most established state-of-the-art accuracy bounds for the aforementioned methods. Furthermore, the upper bound for the population loss of our second algorithm coincides with its empirical counterpart, and for a large query class, DPAM achieves higher accuracy compared than all previously discussed methods.


\section{Preliminaries}\label{sec:preliminaries}

In the following, we establish notations and assumptions that will be used consistently throughout this manuscript.
Let $\mathcal{Z}$, the data universe,  be a finite sample space and  $\mathcal{P}$ a probability distribution supported
on $\mathcal{Z}$. Without loss of generality we assume that $\mathcal{Z}=\{1, \ldots, k\}$, where $|\mathcal{Z}|=k$. We refer to $\mathcal{P}$ as  {\em target distribution}. Due to finiteness of $\mathcal{Z}$, $\mathcal{P}$ can be represented
as $(\mathcal{P}(z))_{z \in \mathcal{Z}}$. We define the set of probability distributions  on $\mathcal{Z}$ by
$\Delta_{k} := \{ (\mathcal{D}(z))_{z \in \mathcal{Z}} \in \mathbb{R}^{k}_+ : \sum_{z\in \mathcal{Z}} \mathcal{D}(z) = 1 \}$. Let us denote with $S_n = \{z_1, \dots, z_n\}$ a dataset of $n$ points obtained by sampling independently from $\mathcal{Z}$ according to  $\mathcal{P}$. From $S_n$ we retrieve the empirical distribution $\mathcal{P}_n$, i.e. $\mathcal{P}_n(z) = |\{i \in [n] \,:\,  z_i=z\}|/n$. We refer to any function 
$q : \mathcal{Z} \to [-1, 1]$ as a {\em query} of $\mathcal{Z}$, and we define the {\em statistical query} associated to $q$ with respect to the distribution $\mathcal{P}$ as the
quantity $ \mathbb{E}_{\mathbf{z} \sim \mathcal{P}}[q(\mathbf{z})]$. 
If $\mathcal{P}_n$ is the empirical distribution associated to $S_n $, the statistical query
takes the form $\frac{1}{n} \sum_{i} q(z_i)$, which is known as \textit{counting query} \cite{dwork2014algorithmic}. 
Since each query $q$ is a function defined on $\mathcal{Z}$, it can be represented as  $q = (q(z))_{z \in \mathcal{Z}} \in [-1, 1]^{k}$. 
A query is said to be {\em answered} when the vector $q = (q(z))_{z \in \mathcal{Z}}$ is reported.
Similarly, any distribution $\mathcal{D} \in \Delta_{k}$ can be expressed as $\mathcal{D} = (\mathcal{D}(z))_{z \in \mathcal{Z}}$. Depending on the context, we refer to queries $q$ and distributions $\mathcal{D}$ as functions or vectors. Against this background, $\mathbb{E}_{\mathbf{z} \sim \mathcal{P}}[q(\mathbf{z})]=\langle q, \mathcal{P}\rangle$, where $\langle \cdot, \cdot \rangle$ is the standard scalar product in $\RR^{k}$.

In many practical applications, queries take two values only. 
If, for instance,  $S_n$ represents a set of individuals, a query $q$ could 
ask for the presence/absence of a specific feature in these individuals. In this particular case, a corresponding query may only take the  values $0$ and $1$ ($q(z)=1$ indicating the presence of the feature for the individual $z$, while $q(z)=0$ indicates its absence).
The statistical query $ \mathbb{E}_{\mathbf{z} \sim \mathcal{P}}[q(\mathbf{z})]$ then corresponds to the expected proportion of individuals with a corresponding feature.

Within the field of differential privacy, an objective is to answer a prescribed set of queries $\mathcal{Q}\subset \{ q:\mathcal{Z}\longrightarrow [-1,1]\}$ as accurately as possible, while  preserving the privacy of  individuals. For this,  we construct a privacy-preserving distribution
$\mathcal{P}^{\text{priv}}$ over $\mathcal{Z}$, whose statistical properties are close to $\cal P$, i.e., more precisely,  
that its statistical queries closely
approximate those of  $\mathcal{P}$.
Mathematically speaking,
this is achieved by minimization of
$\max\limits_{q \in \mathcal{Q}} \left|\mathbb{E}_{\mathbf{z} \sim \mathcal{P}}[q(\mathbf{z})] - \mathbb{E}_{\mathbf{z} \sim \mathcal{P}^{\text{priv}}}[q(\mathbf{z})]\right|$, over all $\mathcal{P}^{\text{priv}}$  subject to privacy, in the sense of differential
privacy defined below. For this, we assume $\mathcal{Q}$ to be symmetric, i.e.  $\mathcal{Q}=-\mathcal{Q}$, such that, we could as well be aiming at minimizing 
$\max_{q \in \mathcal{Q}} \left(\mathbb{E}_{\mathbf{z} \sim \mathcal{P}}[q(\mathbf{z})] - \mathbb{E}_{\mathbf{z} \sim \mathcal{P}^{\text{priv}}}[q(\mathbf{z})]\right)$.

Since $\text{conv}(\mathcal{Q})$ is a polyhedron and  the maximum of a linear function over a polyhedron is achieved at one of its vertices, we can, furthermore,  replace $\mathcal{Q}$ by its convex hull. 
As a result, the considered optimization problem can be reduced to the   {\em saddle-point formulation} 
\begin{equation}
\label{primal}
\min\limits_{\mathcal{D} \in \Delta_{k}}\phi(\mathcal{D}), 
\ \text{where }
\ \phi(\mathcal{D}):=\max\limits_{q \in \mbox{conv}(\mathcal{Q})}\langle q, \mathcal{P}- \mathcal{D} \rangle \tag{\textcolor{blue}{$P$}}
\end{equation}
see \cite{LVW21}.
Our goal is to design a differentially private  algorithm that approximates the optimal solution to \eqref{primal} starting from a finite dataset $S_n = \{z_1, \dots, z_n\}$ consisting of $n$ data points independently drawn from $\mathcal{Z}$ according to $\mathcal{P}$.

\subsection*{Differential Privacy}

Differential privacy applies to randomized algorithms, i.e. algorithms  employing randomness as part of their logic or procedure
for the protection of  privacy.
 A randomized algorithm   is considered differentially private if adding, removing or replacing a single data point in the dataset it applies to does not significantly change the distribution of its output. 
Two datasets 
 $S_1$ and $S_2$ that
 differ in one data point only are called 
{\em neighboring}. 
 
Mathematically, differential privacy can be  formalized 
as follows:

\begin{defin} \cite{dwork2014algorithmic}
Let $\varepsilon, \delta \geq  0 $. A randomized algorithm $\mathcal{A}: \mathcal{X}\longrightarrow \mathcal{O}$ is  $(\varepsilon,\delta)$ - differentially  private (DP) if for every pair of neighboring datasets $S_1$, $S_2$ and every subset $T\subset \mathcal{O}$,
\begin{equation}
 \PP( \mathcal{A}(S_1) \in T) \leq e^\varepsilon \PP(\mathcal{A}(S_2) \in T)+\delta\;.
 \label{def:eps-delta-DP}
\end{equation}
\label{def:DP}
\end{defin}
Note that the parameters $\varepsilon$ and $\delta$ quantify  privacy. If $\varepsilon=0$ and $\delta=0$, it follows  that $\PP (\mathcal{A}(S_1)\in T)=\PP(\mathcal{A}(S_2)\in T)$. Thus, the distribution of the output of $\mathcal{A}$ is independent from the data and protects privacy perfectly. If $\delta=0$, we say that $\mathcal{A}$ satisfies pure privacy and/or that it is  $\varepsilon-$DP.



\section{A regularized optimization problem}\label{sec:regularization}

Without imposing any restrictions on the set of admissible distributions
$\Delta_{k}$ in the saddle-point problem \eqref{primal}, the solution to this optimization problem is given by  $ {\mathcal{D}}^\ast =\mathcal{P}$.
Our goal, however, is to find an approximation to this solution that is $(\varepsilon, \delta)$-DP.
Moreover, the optimization problem as stated in \eqref{primal} does not satisfy strong convexity. For this reason,   we modify the optimization problem through incorporation of a regularization transforming it to a strongly convex-concave problem.

More precisely, for $\alpha \geq 0$, we define the {\em regularized saddle-point problem }\eqref{primal_alpha} as
\begin{equation}
\begin{aligned}
    &\min\limits_{\mathcal{D} \in \Delta_{k}} \max_{q \in \text{conv}(\mathcal{Q})}\mathcal{L}_\alpha(q, \mathcal{D})\quad 
    \label{primal_alpha}
    =    &\min\limits_{\mathcal{D} \in \Delta_{k} }\phi_\alpha(\mathcal{D}),
\end{aligned}
\tag{$\textcolor{blue}{P_\alpha}$}
\end{equation}
where 
\begin{align*}
    \mathcal{L}_{\alpha}(q, \mathcal{D})&:= \langle q, \mathcal{P}-\mathcal{D} \rangle + \alpha H(\mathcal{D})\phi_{\alpha}(\mathcal{D}):= \max\limits_{q \in \text{conv}(\mathcal{Q})}\mathcal{L}_\alpha(q, \mathcal{D})
\end{align*}
and  $H(\mathcal{D})=\sum_{z\in \mathcal{Z}} \mathcal{D}(z) \log (\mathcal{D}(z))$ is the negative entropy of $\mathcal{D} \in\Delta_{k}$.

The use of the entropy regularization term is motivated by two key observations: first, in practice, the set $\mathcal{Z}$ often contains a number of elements that is much larger than the sample size, i.e. $n < k$. 
As a result, for the empirical distribution $\mathcal{P}_n$ over $\mathcal{Z}$,  $\mathcal{P}_n(z) = 0$ for some $z\in \mathcal{Z}$. Entropy regularization addresses this by pushing the optimum towards the uniform distribution, thereby ensuring that most points in $\mathcal{Z}$ have a non-zero probability of being sampled, even if they are not present in $S_n$. Secondly, the entropy function $H$ is strongly convex ensuring 
 strong convexity of $\phi_{\alpha}$  and therefore a unique solution of \eqref{primal_alpha}.

The regularization parameter $\alpha$ controls the suboptimality of the solution with respect to the unregularized problem.
A value of $\alpha$ that is close to zero ensures the regularized problem \eqref{primal_alpha} to be similar to the original problem \eqref{primal}. 

For an explicit solution of the regularized optimization problem \eqref{primal_alpha}, we consider the corresponding dual problem:
\begin{equation}
\begin{aligned}
    &\max_{q \in \mbox{conv}({\cal Q})} \, \psi_{\alpha}(q),
    \end{aligned}
    \tag{$\textcolor{blue}{D_\alpha}$} 
    \label{dual_alpha}
\end{equation}
where  $\psi_\alpha(q):= \langle q, \mathcal{P} \rangle -\alpha  H^\ast \left( \frac{q}{\alpha}\right) $ and $H^\ast(y)=\log \left( \sum_{j} e^{y_j}\right)$ (the Fenchel conjugate of $H$).  As a consequence of Zalinescu's theorem \cite{zlinescu1983uniformly}, the Fenchel conjugate of $H$   is 1-smooth with respect to the $\|\cdot\|_\infty$, making \eqref{dual_alpha} a smooth concave optimization problem over a polyhedron.

Accordingly,  \eqref{dual_alpha}  can be expressed as 
maximization of
an  objective function  of the form $\mathbb{E}_{\mathbf{z} \sim \mathcal{P}}[\ell(q, \mathbf{z})]$, where $\ell(q,\mathbf{z}):=q(\mathbf{z})-\alpha H^\ast \left( \frac{q}{\alpha} \right)$ is both $L_0$-Lipschitz and $L_1$-smooth for suitable constants $L_0$ and $L_1$. Since the sets $\Delta_{k}$ and $\text{conv}(\mathcal{Q})$ are convex, the design of an algorithm solving the dual falls within the framework of Differentially Private Stochastic Convex Optimization (DP-SCO).
Lastly, according to Slater's theorem strong duality holds, i.e. $\phi^\ast= \psi^\ast$, where $\phi^\ast$ and $\psi^\ast$ denote minimum and maximum for \eqref{primal_alpha} and \eqref{dual_alpha}, respectively. 

In the following, we will focus on the regularized dual problem. We will later show how solving this problem leads to a differentially private solution of the unregularized primal, which is our main interest.



\section{Solving the dual: The differentially private Frank-Wolfe algorithm}
\label{sec:sol_dual}

This section establishes the differentially private Frank-Wolfe algorithm (DPFW) as a solution to the dual optimization problem \eqref{dual_alpha}.
An analysis of the corresponding solution to the unregularized optimization problem 
\eqref{primal}
is done in three steps, each corresponding to one of the following three subsections.
Section \ref{sec:FW_1} quantifies the difference in the values of the dual $\psi_{\alpha}$ evaluated in the true maximizer and its approximation through the output of DPFW.
Section \ref{sec:FW_2}  quantifies
the difference in the values  of the  primal function 
$\phi_{\alpha}$
evaluated in the true minimizer and its approximation resulting from
the output of  DPFW.
Section \ref{sec:FW_3}
considers
an optimal choice of the regularization parameter $\alpha$
and quantifies the overall accuracy of the resulting  value of the unregularized optimization problem \eqref{primal}.
Most notably, the quantification of accuracy in Section \ref{sec:FW_3}
is based on  the true distribution $\mathcal{P}$, not on the empirical distribution $\mathcal{P}_n$ associated with $S_n$.
For a high-level illustration of the individual steps leading to a differentially private solution of  \eqref{primal} see Figure \ref{fig:procedure}.

\subsection{Bounding the  dual gap}
\label{sec:FW_1}

The (non-privatized) Frank-Wolfe algorithm (FW) is an optimization algorithm designed for solving constrained convex optimization problems.
It aims at  minimization of  a convex (or maximization of a concave) and differentiable objective function over 
a compact and convex feasible set.
Since $\psi_{\alpha}$ is concave and $1$-smooth  and 
$\mbox{conv}(\mathcal{Q})$ compact and convex,
it is therefore particularly suited for solving  the dual \eqref{dual_alpha}. 
While competing methods such as (stochastic) gradient descent require a projection step back to the feasible set in each iteration,  FW only needs the solution of a convex problem over the same set in each iteration, and automatically stays in the feasible set. Under the constraints of the dual 
this is known to be computationally efficient
(\cite{jaggi2013frankwolfe},\cite{frank1956quadratic}).

The main idea of  FW is to linearize the objective function, resulting in the consideration of the following,  simpler optimization problem over the feasible set:
\begin{align}\label{linearization}
    \argmax\limits_{q \in {\cal Q}}  \langle \nabla\psi_\alpha(q_0), q \rangle
\end{align}
with initialization in some $q_0 \in \mbox{conv}(\mathcal{Q})$.

In the particular setting that we consider,
the gradient
\[
    \nabla \psi_\alpha (q)= \mathcal{P} - \nabla H^\ast \left( \frac{q}{\alpha} \right)
\]
 depends on the unknown data  distribution \(\mathcal{P}\), while we only have access to the dataset \(S = \{z_1, \ldots, z_n\} \stackrel{i.i.d.}{\sim} \mathcal{P}\). For an implementation of the algorithm,  \(\nabla \psi_\alpha\) is therefore approximated by
\begin{equation}
     \hat{\nabla}\psi_\alpha(q) := \mathcal{P}_n - \nabla H^\ast \left( \frac{q}{\alpha} \right),
     \label{eq:estimator_gradient}
\end{equation}
Moreover, 
for privatization of the algorithm, solving the linearized optimization problem  is randomized  by replacing \eqref{linearization} with the optimization problem 
\begin{align}
 s_0= \argmax\limits_{q \in {\cal Q}} \{  \langle \hat{\nabla}\psi_\alpha(q_0), q\rangle + u_{s, 0} \}, 
 \label{eq:laplace}
 \end{align}
where  $u_{s,0} \overset{i.i.d}{\sim} \text{Lap}(\lambda)$ and $\text{Lap}(\lambda)$ denotes a Laplace distribution with parameter $\lambda$; see Claim 3.9 in \cite{dwork2014algorithmic} (Report Noisy Max). 
An approximation of the maximizer $q_1$  is given by moving a step of size $\gamma$ in direction of the maximizer, i.e. 
\begin{align*}
    q_1=q_0+\gamma (s_0-q_0).
\end{align*}
Iteration of these steps results in a sequence of locations $q_0, q_1, \ldots$ approximating the maximizer of $\psi_{\alpha}$.
FW iterates until a stopping criterion is met (typically when the gradient's magnitude becomes small (indicating convergence), or after a fixed number of iterations).
In contrast to this, the  differentially private Frank-Wolfe algorithm (DPFW)  chooses its output $q^{out}_\alpha$ uniformly from 
$q_0, \ldots, q_{T-1}$, where $T$ corresponds to 
a predefined number of iterations.

The most significant difference between FW and DPFW, however, corresponds to the introduction of Laplace noise to the optimization step \eqref{eq:laplace}.
The intuition behind this so-called
\emph{Report Noisy Max} is to calibrate the amount of noise, i.e. 
the choice of the parameter $\lambda$,
 to the algorithm's sensitivity (i.e., how much its output changes in response to a single data point) thereby introducing privacy; for a more detailed description of \emph{Report Noisy Max} see Section 
 \ref{sec:report_noisy_max} in the appendix.

An algorithmic description of DPFW specifying   number of iterations, step size, and Laplace noise
as needed for the subsequently derived theoretical guarantees is provided by Algorithm \ref{alg:DPFW}.
The specifications of parameters in the algorithm depend on  
the diameter of the set \(\mathcal{Q}\) which, with respect to the \(\|\cdot\|_r\)-norm,   is denoted
 by \(D_r\).
 
\begin{algorithm}
\caption{Differentially private Frank-Wolfe algorithm (DPFW)
}
\begin{algorithmic}[1]
  \State \textbf{Input}: ($S_n$,  ${\cal Q}$,  $(\varepsilon, \delta)$,  $\alpha$)
  \State Number of iterations: $ T = \frac{D_1^{3/2} \varepsilon n}{\sqrt{32\alpha \log(1/\delta) }\log (2 |\mathcal{Q}|)}$\;
  \State Step size: $\gamma=2 \sqrt{\frac{\alpha}{TD_\infty}  }$\;
  \State Noise $\lambda = \frac{4 D_1\sqrt{2T \log(1/\delta)}}{\varepsilon n}$\;
\State Initialization: $q_0 \in \cal Q$\;
  \For{$t=0, \ldots, T-1$}
  \State Compute $\hat{\nabla}\psi_\alpha(q_t)$ according to \eqref{eq:estimator_gradient}
  \State For all $s\in \mathcal{Q}$ sample $u_{s,t} \overset{i.i.d}{\sim}$ Lap ($\lambda$) 
  \State $s_t= \argmax\limits_{s \in {\cal Q}} \{  \langle \hat{\nabla}\psi_\alpha(q_t), s \rangle + u_{s,t} \} $ 
   \State  $q_{t+1}=q_t+ \gamma (s_t-q_t)$
  \EndFor
  \State \textbf{Output}: {$q^{\text{out}}_\alpha= q_U$ for $U \sim $ Uni$(\{0,\ldots,T-1\})$}
\end{algorithmic}
\label{alg:DPFW}
\end{algorithm}

The evaluation of the approximation of the maximizer of the dual \eqref{dual_alpha}
through the output of DPFW
is based on the following observation:
since $\psi_\alpha$ is concave and differentiable, it holds that
\begin{equation*}
\begin{aligned}
        0&\leq  \psi_\alpha(q^\ast_\alpha) - \psi_\alpha(q^{out}_\alpha)\leq \max_{q \in \mbox{conv}({\cal Q})}\langle \nabla \psi_\alpha(q), q^{out}_\alpha-q\rangle \;:=g_\mathcal{\psi_\alpha}(q^{out}_\alpha),
\end{aligned}
\end{equation*}
 where  $q^\ast_\alpha$ denotes the maximizer of $\psi_\alpha$.
We refer to $g_\mathcal{\psi_\alpha}(q^{out}_\alpha)$ as \textit{Frank-Wolfe Gap}. Naturally, an upper bound on $g_{\psi_\alpha}(q^{out}_\alpha)$ is also an upper bound for the  difference $\psi_\alpha(q^\ast_\alpha)-\psi_\alpha(q^{out}_\alpha)$ which quantifies the error
arising from utilization of DPFW to the end of maximizing $\psi_{\alpha}$. 
The following theorem verifies $(\varepsilon, \delta)$-differential privacy of DPFW and gives an explicit upper bound for the Frank-Wolfe gap.

\begin{teo}
 The differentially private Frank -Wolfe algorithm DPFW  is  $(\varepsilon,\delta)$-DP. Moreover, the Frank-Wolfe Gap $g_\mathcal{\psi_\alpha}(q^{out}_\alpha):=\max_q\langle \nabla \psi_\alpha(q^{out}_\alpha), q^{out}_\alpha -q\rangle$  satisfies
   \begin{align}
     &\EE[g_ {\psi_\alpha}(q^\text{out}_\alpha)] =\mathcal{O}\left( \frac{ \log^{1/4} (1/\delta) \log^{1/2}(|\mathcal{Q}|)}{\alpha^{1/4} (\varepsilon n)^{1/2}} + D_1\sqrt{\frac{ \log(k)}{n}}\right) \;.  \notag 
   \end{align}
 \label{thm:utility_guarantee}
\end{teo}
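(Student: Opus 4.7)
The plan is to prove privacy and utility separately. Privacy follows by combining Report Noisy Max with advanced composition; utility follows from a Frank--Wolfe descent computation that tracks in parallel the statistical error in $\hat{\nabla}\psi_\alpha$ and the Laplace noise perturbing the linearized subproblem, then optimizes over the parameters.

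For privacy, the only data-dependent quantity in each iteration is $\mathcal{P}_n$, which appears linearly through $\hat{\nabla}\psi_\alpha(q_t) = \mathcal{P}_n - \nabla H^\ast(q_t/\alpha)$. Swapping a single record of $S_n$ changes every score $\langle \hat{\nabla}\psi_\alpha(q_t), s\rangle$ by at most $\mathcal{O}(1/n)$ (with the precise constant tied to the scale of $\mathcal{Q}$), so line~9 is an instance of Report Noisy Max with bounded sensitivity and is therefore $\varepsilon_0$-DP for some $\varepsilon_0$ inversely proportional to $n\lambda$. Advanced composition over the $T$ releases, combined with the prescribed value of $\lambda$, yields $(\varepsilon,\delta)$-DP, and privacy of $q^{\text{out}}_\alpha$ follows by post-processing.

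For utility, I would use that $H^\ast$ is $1$-smooth with respect to $\|\cdot\|_\infty$, hence $\psi_\alpha$ is $(1/\alpha)$-smooth in the same norm, and apply the standard Frank--Wolfe descent inequality
\[
\psi_\alpha(q_{t+1}) \ge \psi_\alpha(q_t) + \gamma\langle \nabla \psi_\alpha(q_t), s_t - q_t\rangle - \frac{\gamma^2 D_\infty^2}{2\alpha}.
\]
Writing $s^\ast_t := \argmax_{s\in\mathcal{Q}} \langle \nabla\psi_\alpha(q_t), s\rangle$, the key identity is
\[
\langle \nabla\psi_\alpha(q_t), s^\ast_t - s_t\rangle = \langle \nabla\psi_\alpha(q_t) - \hat{\nabla}\psi_\alpha(q_t), s^\ast_t - s_t\rangle + \langle \hat{\nabla}\psi_\alpha(q_t), s^\ast_t - s_t\rangle,
\]
in which the first inner product is bounded via H\"older by $D_1\|\mathcal{P}_n - \mathcal{P}\|_\infty$ and the second is at most $u_{s_t,t} - u_{s^\ast_t,t}$ by the defining inequality of the noisy argmax in line~9. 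Summing the descent inequality over $t = 0,\ldots,T-1$, telescoping $\psi_\alpha(q_T) - \psi_\alpha(q_0)$ against a crude bound of order $1+\alpha\log k$, dividing by $T$, and using that $q^{\text{out}}_\alpha$ is a uniform draw from the iterates gives
\[
\EE\bigl[g_{\psi_\alpha}(q^{\text{out}}_\alpha)\bigr] = \mathcal{O}\!\left(\frac{1+\alpha\log k}{T\gamma} + \frac{\gamma D_\infty^2}{\alpha} + D_1\,\EE\|\mathcal{P}_n - \mathcal{P}\|_\infty + \lambda\log|\mathcal{Q}|\right),
\]
where the last term uses that the expected maximum of $|\mathcal{Q}|$ i.i.d.\ $\mathrm{Lap}(\lambda)$ variables is $\mathcal{O}(\lambda\log|\mathcal{Q}|)$. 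Substituting the algorithmic choices of $T$, $\gamma$ and $\lambda$, and bounding $\EE\|\mathcal{P}_n - \mathcal{P}\|_\infty = \mathcal{O}(\sqrt{\log k/n})$ by Hoeffding with a union bound over the $k$ coordinates, reduces the claim to algebraic simplification.

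The main obstacle I expect is the privacy bookkeeping: pinning down the exact effective sensitivity of the score vector that produces the $D_1$ factor in $\lambda$, and verifying that the prescribed $(T,\lambda)$ pair is precisely what advanced composition demands at level $(\varepsilon,\delta)$. The utility argument itself is essentially a noisy Frank--Wolfe computation; its only real subtlety is the joint balancing of the four error terms above against the free parameters $\gamma$, $T$ and $\lambda$ to realize the advertised rate.
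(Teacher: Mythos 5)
Your proposal matches the paper's proof essentially step for step: privacy via Report Noisy Max with sensitivity $D_1/n$ composed through the Advanced Composition Theorem and post-processing, and utility via the smooth Frank--Wolfe descent inequality, with the gradient error split into the statistical term $D_1\|\mathcal{P}_n-\mathcal{P}\|_\infty$ (bounded as $\mathcal{O}(\sqrt{\log k/n})$) and the Laplace-noise term bounded in expectation by $\mathcal{O}(\lambda\log|\mathcal{Q}|)$ using the noisy-argmax inequality, followed by telescoping and balancing $\gamma,T$. The only cosmetic difference is that the paper bounds the telescoped term $\psi_\alpha(q_0)-\psi_\alpha(q_T)$ by $2D_\infty$ via the $2$-Lipschitzness of $\psi_\alpha$ rather than your crude $1+\alpha\log k$, which is slightly cleaner and avoids a spurious $\log k$ factor when $\alpha$ is not small.
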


Theorem \ref{thm:utility_guarantee} highlights the typical trade-off offered by DP algorithms: stronger privacy protection (smaller values of $(\varepsilon, \delta)$) results in weaker accuracy, while higher accuracy necessitates the reduction of noise, which in turn degrades privacy guarantees. Additionally, smaller values of  $\alpha$ indicate a larger Frank-Wolfe gap. At the same time, as stated in Section \ref{sec:regularization}, a solution to the regularized problem \eqref{primal_alpha}
is expected to be closer to a solution to the original optimization problem \eqref{primal} for small values of $\alpha$.
Moreover, if the number of queries is exponential, i.e $|\mathcal{Q}|=\exp(r)$ for some $r\geq 0$, the logarithmic factor $\log(|\mathcal{Q}|)$, enables answering them all in linear time.

Lastly, note that the second summand in \eqref{thm:utility_guarantee} arises from the evaluation  of the Frank-Wolfe gap against the  population loss. Although not explicitly shown in this work, our method, when evaluated in terms of empirical loss, provides the upper bound in Theorem \ref{thm:utility_guarantee} without the summand $D_1\sqrt{\log(k)/n}$.

\subsection{Bounding the primal gap}
\label{sec:FW_2}
In the previous section, we derived an approximate $(\varepsilon, \delta)$-DP solution to \eqref{dual_alpha} through the differentially private Frank-Wolfe algorithm DPFW. This section quantifies the deviation of the
 corresponding 
 value of the objective function $\phi$ in the primal  \eqref{primal}
  from the  actual optimum  
  $\phi^{\ast}$ of $\phi$.  
For this, we replace the maximizer in \eqref{primal_alpha} by the output 
 $q^{\text{out}}_\alpha$ of DPFW
and, subsequently,   analytically solve 
the resulting  optimization problem through implementation of the Karush–Kuhn–Tucker conditions. The  distribution solving the corresponding optimization problem is denoted as $\mathcal{P}^{\text{priv}}_\alpha$, i.e.
\begin{align}    \label{eq:whole_mechanism}
    \mathcal{P}_\alpha^{\text{priv}}=&
    \argmin\limits_{ \mathcal{D} \in \Delta_{\mathcal{Z}}}\mathcal{L}_{\alpha}(q_\alpha^{\text{out}}, \mathcal{D})=\argmin\limits_{ \mathcal{D} \in \Delta_{\mathcal{Z}}} \left( \langle q_\alpha^{\text{out}}, \mathcal{P} - \mathcal{D}\rangle+\alpha H(\mathcal{D}) \right)=:\mathcal{A}_{\alpha}(S_n).
\end{align}
It can be shown that the analytic solution to \eqref{eq:whole_mechanism} is $\mathcal{P}^{\text{priv}}_\alpha=\nabla H^\ast \left(\frac{q_\alpha^{\text{out}}}{\alpha} \right)$ (see Lemma \ref{KKT lemma} in the appendix).  To measure how much $\mathcal{P}^{\text{priv}}_\alpha$ deviates from the original $\mathcal{P}$, we estimate an upper bound on the primal gap, defined as follows: 
$$\textbf{Gap}_{\eqref{primal}} (\mathcal{D}) = \phi(\mathcal{D}) - \phi^\star \;,$$
where $\phi^{\star} := \min_{\mathcal{D} \in \Delta_{k}}\phi(\mathcal{D})$.  

\begin{figure}[ht]
    \centering
    \begin{tikzpicture}[node distance={4 cm}, thick,main/.style = {draw, circle}, scale=1.2, every node/.style={transform shape}]
        \node[] (1) {\eqref{primal}}; 
        \node[] (2) [right of=1] {\eqref{primal_alpha}};
        \node[] (3) [right of=2] {\eqref{dual_alpha}}; 
        \node[] (4) [above left of=1, node distance={2 cm}] {primal};
        \node [] (5) [below left of = 1, node distance={2cm}] {$\mathcal{P}^{\text{priv}}$};

        \draw[->] (1) to [out=60,in=120,looseness=0.7] node[midway, above] {regularized primal} (2);
        \draw[->] (2) to [out=60,in=120,looseness=0.7] node[midway, above] {regularized dual} (3);
        \draw[->] (3) to [out=210,in=270+60,looseness=0.7] node[midway, below] {$q^{\text{out}}_\alpha$ by DPFW } (2);
        \draw[->] (2) to [out=210,in=270+60,looseness=0.7] node[midway, below] {$\mathcal{P}^{\text{priv}}_\alpha$ by $\text{KKT}$ } (1);
        \draw[->] (4) to [out=270+60, in=120] (1);
        \draw[->] (1) to [out=270-30, in=30] (5);
    \end{tikzpicture}
    \caption{High-level illustration of the procedure to find $\mathcal{P}^{\text{priv}}$.}
    \label{fig:procedure}
\end{figure}
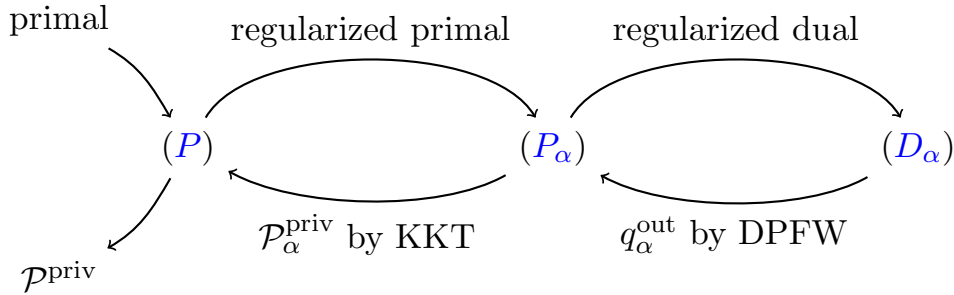

The primal  gap quantifies the deviation of a feasible point from its optimum. In the previous section, we established an upper bound on the Frank-Wolfe gap for the dual problem. This upper bound can be translated back to the primal setting, which is the content of Theorem \ref{thm:upper_bound_primalgap}. The entire process is summarized in Figure \ref{fig:procedure}. 

\begin{teo} For any $\alpha > 0$, we define $q^{\text{out}}_\alpha$ as the output of Algorithm \ref{alg:DPFW}, 
Then, for the distribution $\mathcal{P}_\alpha^{\text{priv}} = \mathcal{A}_\alpha(S_n)$, obtained via \eqref{eq:whole_mechanism}, the following holds:
 \begin{equation}
  \EE[\textbf{Gap}_{\eqref{primal}}(\mathcal{P}^{\text{priv}}) ] \leq  \EE[g_{\psi_\alpha}(q^{\text{out}}_\alpha)] +\alpha \log k \;.  
    \label{gap1}
\end{equation}  
\label{thm:upper_bound_primalgap}
\end{teo}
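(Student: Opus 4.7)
The plan is to reduce the primal gap of $\mathcal{P}^{\text{priv}}_\alpha$ to the Frank-Wolfe gap of $q^{\text{out}}_\alpha$ through two successive sandwich inequalities. First, I would pass from $\phi$ to the regularized $\phi_\alpha$: since $\phi_\alpha(\mathcal{D}) = \phi(\mathcal{D}) + \alpha H(\mathcal{D})$ (the entropy term does not depend on $q$) and $H(\mathcal{D}) \in [-\log k, 0]$ on $\Delta_k$, one has $\phi(\mathcal{D}) \leq \phi_\alpha(\mathcal{D}) + \alpha \log k$ pointwise. Symmetry of $\mathcal{Q}$ together with $\phi(\mathcal{P}) = 0$ gives $\phi^\star = 0$, so the remaining task is to show $\phi_\alpha(\mathcal{P}^{\text{priv}}_\alpha) \leq g_{\psi_\alpha}(q^{\text{out}}_\alpha)$.

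Next I would exploit the explicit KKT form $\mathcal{P}^{\text{priv}}_\alpha = \nabla H^\ast(q^{\text{out}}_\alpha/\alpha)$ (the lemma cited right after \eqref{eq:whole_mechanism}), which is equivalent to the Fenchel-conjugacy identity $\min_{\mathcal{D} \in \Delta_k} \mathcal{L}_\alpha(q^{\text{out}}_\alpha, \mathcal{D}) = \psi_\alpha(q^{\text{out}}_\alpha)$. This identity gives two things at once: $\mathcal{L}_\alpha(q^{\text{out}}_\alpha, \mathcal{P}^{\text{priv}}_\alpha) = \psi_\alpha(q^{\text{out}}_\alpha)$, and $\nabla \psi_\alpha(q^{\text{out}}_\alpha) = \mathcal{P} - \mathcal{P}^{\text{priv}}_\alpha$. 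Letting $q^\dagger$ be a maximizer of $\mathcal{L}_\alpha(\cdot, \mathcal{P}^{\text{priv}}_\alpha)$ over $\text{conv}(\mathcal{Q})$ so that $\phi_\alpha(\mathcal{P}^{\text{priv}}_\alpha) = \mathcal{L}_\alpha(q^\dagger, \mathcal{P}^{\text{priv}}_\alpha)$, I would use the fact that $\mathcal{L}_\alpha(\cdot, \mathcal{D})$ is affine in $q$ with slope $\mathcal{P} - \mathcal{D}$ to get
\begin{align*}
\phi_\alpha(\mathcal{P}^{\text{priv}}_\alpha) - \psi_\alpha(q^{\text{out}}_\alpha)
&= \mathcal{L}_\alpha(q^\dagger, \mathcal{P}^{\text{priv}}_\alpha) - \mathcal{L}_\alpha(q^{\text{out}}_\alpha, \mathcal{P}^{\text{priv}}_\alpha) \\
&= \langle q^\dagger - q^{\text{out}}_\alpha,\, \nabla \psi_\alpha(q^{\text{out}}_\alpha)\rangle \;\leq\; g_{\psi_\alpha}(q^{\text{out}}_\alpha).
\end{align*}

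Strong duality (Slater, as cited in the preliminaries) gives $\psi_\alpha^\star = \phi_\alpha^\star$, and the pointwise inequality $\phi_\alpha \leq \phi$ gives $\phi_\alpha^\star \leq \phi^\star = 0$, so $\psi_\alpha(q^{\text{out}}_\alpha) \leq 0$ and therefore $\phi_\alpha(\mathcal{P}^{\text{priv}}_\alpha) \leq g_{\psi_\alpha}(q^{\text{out}}_\alpha)$. Chaining with the first bridge and taking expectation over $S_n$ and the internal randomness of DPFW yields the claim. The only delicate step is the middle identification: the softmax KKT form requires $\mathcal{P}^{\text{priv}}_\alpha$ to lie in the relative interior of $\Delta_k$, which is forced automatically by the blow-up of $\nabla H$ on the boundary of the simplex. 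That legitimizes writing $\mathcal{L}_\alpha(q^{\text{out}}_\alpha, \mathcal{P}^{\text{priv}}_\alpha) = \psi_\alpha(q^{\text{out}}_\alpha)$; once it is in hand, the rest reduces to linearity of $\mathcal{L}_\alpha$ in $q$ and weak duality.
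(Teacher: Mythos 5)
Your proof is correct and follows essentially the same route as the paper: bridge $\phi$ to $\phi_\alpha$ at a cost of $\alpha\log k$, identify $\phi_\alpha(\mathcal{P}^{\text{priv}}_\alpha)-\psi_\alpha(q^{\text{out}}_\alpha)$ with a linearization term bounded by the Frank--Wolfe gap, and absorb the remaining slack via (weak) duality. Your repackaging of the middle step via the Fenchel-conjugacy identity $\min_{\mathcal{D}}\mathcal{L}_\alpha(q,\mathcal{D})=\psi_\alpha(q)$ together with affineness of $\mathcal{L}_\alpha$ in $q$ is tidier than the paper's explicit softmax substitution in Corollary~\ref{KKT lemma}, and using $\phi^\star=0$ and $\psi_\alpha(q^{\text{out}}_\alpha)\le 0$ upfront streamlines the bookkeeping, but the underlying argument and ingredients are identical.
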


\subsection{Bounding the overall gap}
\label{sec:FW_3}
Theorems  \ref{thm:utility_guarantee} and \ref{thm:upper_bound_primalgap}  establish  upper bounds on 
dual and primal gap of the proposed privacy-preserving algorithm solving the optimization problem \eqref{primal}.
By choosing $\alpha^\ast$ as the value that minimizes the right-hand side of \eqref{gap1}, we obtain 
\begin{equation}
    \label{eq:optimal_alpha}
    \alpha^\ast = \frac{ \log^{1/5}(1/\delta) \log^{2/5}|\mathcal{Q}|}{(\varepsilon n)^{2/5} \log^{4/5} k}\;
\end{equation}
and, accordingly, we define
$\mathcal{P}^{\text{priv}} = \mathcal{P}^{\text{priv}}_{\alpha^\ast}$.
The following theorem summarizes these findings:
\begin{teo} 
\label{MAINRESULTS}
 
 Let $q^{\text{out}}_{\alpha^\ast}$ be the output of Algorithm \ref{alg:DPFW}, and  let $\alpha^\ast$ be the value defined in \eqref{eq:optimal_alpha}. Then, the distribution $\mathcal{P}^{\text{priv}} = \mathcal{A}_{\alpha^\ast}(S_n)$, obtained through \eqref{eq:whole_mechanism}, guarantees that the maximum error in answering all queries from $\mathcal{Q}$ using $\mathcal{P}^{\text{priv}}$ is bounded as follows:
\begin{align*}
&  \EE\left[\max\limits_{q\in \mathcal{Q}}\langle q,\mathcal{P}- \mathcal{P}^\text{priv}\rangle \right] = \mathcal{O}\left( \frac{ \log^{1/5}(1/\delta) \log^{2/5} |{\cal Q}|\log^{1/5} k}{(\varepsilon n)^{2/5} }  +D_1\sqrt{\frac{  \log k}{n}}\right) \;.
\end{align*}
 \end{teo}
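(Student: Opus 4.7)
The plan is to reduce the claim to the primal-gap bound of Theorem~\ref{thm:upper_bound_primalgap} combined with the Frank--Wolfe-gap bound of Theorem~\ref{thm:utility_guarantee}, and then optimize the regularization parameter $\alpha$.

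First I would rewrite the left-hand side as a primal gap. Since $q \mapsto \langle q, \mathcal{P} - \mathcal{P}^{\text{priv}}\rangle$ is linear, its maximum over $\mathrm{conv}(\mathcal{Q})$ is attained at an extreme point, so $\max_{q\in\mathcal{Q}}\langle q,\mathcal{P}-\mathcal{P}^{\text{priv}}\rangle = \phi(\mathcal{P}^{\text{priv}})$. Moreover, symmetry $\mathcal{Q}=-\mathcal{Q}$ gives $\phi(\mathcal{D}) \geq 0$ for every $\mathcal{D}\in\Delta_k$, while the choice $\mathcal{D}=\mathcal{P}$ yields $\phi(\mathcal{P})=0$. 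Hence $\phi^{\star}=0$ and
\[
\EE\!\left[\max_{q\in\mathcal{Q}}\langle q,\mathcal{P}-\mathcal{P}^{\text{priv}}\rangle\right] \;=\; \EE[\textbf{Gap}_{\eqref{primal}}(\mathcal{P}^{\text{priv}})].
\]

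Next I would chain the two previously established estimates. Theorem~\ref{thm:upper_bound_primalgap} applied to $\alpha=\alpha^{\ast}$ gives
\[
\EE[\textbf{Gap}_{\eqref{primal}}(\mathcal{P}^{\text{priv}})] \;\leq\; \EE[g_{\psi_{\alpha^{\ast}}}(q^{\text{out}}_{\alpha^{\ast}})] + \alpha^{\ast}\log k,
\]
and substituting the bound from Theorem~\ref{thm:utility_guarantee} yields, up to constants,
\[
\EE[\textbf{Gap}_{\eqref{primal}}(\mathcal{P}^{\text{priv}})] \;=\; \mathcal{O}\!\left(\frac{\log^{1/4}(1/\delta)\,\log^{1/2}|\mathcal{Q}|}{\alpha^{1/4}(\varepsilon n)^{1/2}} + \alpha\log k + D_1\sqrt{\frac{\log k}{n}}\right).
\]

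Then I would optimize $\alpha$ in the first two (data-independent) terms, ignoring the sampling term $D_1\sqrt{\log k/n}$ which does not depend on $\alpha$. Writing $C := \log^{1/4}(1/\delta)\log^{1/2}|\mathcal{Q}|/(\varepsilon n)^{1/2}$, I differentiate $C\alpha^{-1/4}+\alpha\log k$ and solve to obtain the balancing value $\alpha^{\ast}\asymp (C/\log k)^{4/5}$, which up to constants is exactly
\[
\alpha^{\ast} \;=\; \frac{\log^{1/5}(1/\delta)\,\log^{2/5}|\mathcal{Q}|}{(\varepsilon n)^{2/5}\log^{4/5} k},
\]
as announced in \eqref{eq:optimal_alpha}. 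At this value the two $\alpha$-dependent terms are of the same order $C^{4/5}\log^{1/5}k$, giving
\[
C^{4/5}\log^{1/5}k \;=\; \frac{\log^{1/5}(1/\delta)\,\log^{2/5}|\mathcal{Q}|\,\log^{1/5}k}{(\varepsilon n)^{2/5}},
\]
which is the leading term in the stated bound; adding back the sampling term $D_1\sqrt{\log k/n}$ completes the argument.

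The computations are essentially routine; the only genuinely conceptual step is the first one, namely recognising that the statement about the maximum error in answering queries is precisely the primal gap with respect to the unregularized problem \eqref{primal}, so that Theorems~\ref{thm:utility_guarantee} and \ref{thm:upper_bound_primalgap} apply directly. The rest is an exercise in balancing two competing powers of $\alpha$, and checking that the optimal $\alpha^{\ast}$ is consistent with the regime in which the bounds of the earlier theorems were derived.
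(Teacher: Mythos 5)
Your proposal is correct and follows exactly the same route as the paper: identify the left-hand side with the primal gap via $\phi^\ast = 0$, chain Theorems~\ref{thm:utility_guarantee} and \ref{thm:upper_bound_primalgap}, then balance the two $\alpha$-dependent terms to recover $\alpha^\ast$ as in \eqref{eq:optimal_alpha}. The only difference is cosmetic — the paper states the $\phi^\ast=0$ observation at the end rather than at the start.
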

We consider the specific setting  \(\mathcal{Z} = \{0,1\}^d\) (see also Section \ref{sec:preliminaries}) to derive an upper bound for comparison with other algorithms (see Table \ref{tab:synthetic_algorithms} in Section \ref{sec:conclusion}).
For this, we do not consider the population risk $\max_{q\in \mathcal{Q}}\langle q,\mathcal{P}- \mathcal{P}^\text{priv}\rangle$, but its empirical analogue $\max_{q\in \mathcal{Q}}\langle q,\mathcal{P}_n- \mathcal{P}^\text{priv}\rangle$.
In this case, based on Theorem \ref{MAINRESULTS},  the following bound can be derived:
\[
\EE[\textbf{Gap}_{\eqref{primal}}(\mathcal{P}^{\text{priv}})] = \mathcal{O} \left( \frac{\log^{2/5}(|{\cal Q}|)\log^{1/5}(1/\delta) d^{1/5}}{(\varepsilon n)^{2/5}} \right)\;.
\]
Note that this term does not include the summand \(D_1\sqrt{\log k / n}\), arising from using the unbiased estimator \(\hat{\nabla} \psi_\alpha\) for \(\nabla \psi_\alpha\). In fact, replacing \(\mathcal{P}\) with the empirical distribution \(\mathcal{P}_n\) from \(S_n\), allows exact computation of \(\nabla \psi_\alpha(q) = \mathcal{P}_n - \nabla H^\ast(q/\alpha)\), eliminating the need for approximation.

\section{Solving the primal: Randomized Smoothing and Stochastic Composite Minimization}
\label{sec:primal}

This section provides yet another solution to the unregularized optimization problem \eqref{primal}.
In contrast to Section \ref{sec:sol_dual}, which focused on a solution by solving the regularized dual \eqref{dual_alpha}, this section provides a solution by solving the regularized primal \eqref{primal_alpha}.
For this, note that  \eqref{primal_alpha} can be written as
\begin{equation}
\begin{aligned}
    \min\limits_{\mathcal{D} \in \Delta_{k} }\left[\phi(\mathcal{D})+ \alpha H(\mathcal{D})\right] \quad 
\end{aligned}
\tag{$\textcolor{blue}{P_\alpha}$}
\end{equation}
where $\phi(\mathcal{D}):= \max_{q \in \text{conv}(\mathcal{Q})}\langle q, \mathcal{P}-\mathcal{D} \rangle$, and  $H(\mathcal{D})=\sum_{z\in \mathcal{Z}} \mathcal{D}(z) \log (\mathcal{D}(z))$ is the negative entropy of $\mathcal{D} \in\Delta_{{k}}$.
If $\phi$ and $H$ in  
    \eqref{primal_alpha}
satisfied  specific smoothness and convexity properties, the optimization problem would meet the requirements summarized as stochastic composite oracle
model in \cite{daspremont:hal-04230893} and could be solved by the so-called (non-)accelerated complementary composite
stochastic mirror-descent, an optimization technique proposed in  \cite{daspremont:hal-04230893}.
Non-smoothness of the inner maximization problem in \eqref{primal_alpha}, i.e.  non-smoothness of $\phi$, however, prevents from direct application of corresponding results.
To resolve this issue, we pursue  randomized smoothing 
of $\phi$ (Section \ref{eqn:rand_smooth}) and a subsequent application of  Algorithm 3.2 in \cite{daspremont:hal-04230893},
guaranteeing fast optimization rates (Section \ref{sec:stoch_comp_min}).

\subsection{Randomized Smoothing}\label{sec:rand_smooth}

We resolve non-smoothness of $\phi$ by
a convolution-based smoothing technique amenable to non-smooth stochastic optimization problems; see, for example, \cite{DBW}.  The intuition underlying this approach is that convolving two functions results in a new function that is at least as smooth as the smoothest of the two original functions. 
For our purposes, 
we consider convolution of $\phi$
with  the density of a multivariate 
normal distribution with mean vector $\mathbf{0}$ and covariance matrix $\sigma^2I_{k}$, $\sigma>0$, 
denoted by $\varphi_{\sigma}$.
Accordingly, we consider the smoothed objective function $\phi_{\sigma}$ defined as follows: 
\begin{equation}\label{eqn:rand_smooth} \phi_{\sigma}(\mathcal{D}):=\int_{\mathbb{R}^{k}}\phi(\mathcal{D}+x)\varphi_{\sigma}(x)\,dx=\mathbb{E}_{\xi} [\phi(\mathcal{D}+\xi)]
\end{equation}
where $\xi$ is a random vector in $\RR^{k}$ with probability density $\varphi_{\sigma}$.
In general, $\phi_{\sigma}$ is convex and differentiable whenever $\phi$ is convex.

Convexity of $\phi$ is established by the following lemma:
\begin{lema}
$\phi:{\Delta}_{k}\mapsto \mathbb{R}_+$ is convex and 1-Lipschitz with respect to~$\|\cdot\|_1$.
\label{lema:phi_Lipschitz}
\end{lema}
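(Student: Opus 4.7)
The plan is to prove the two claims separately, both of which follow from standard properties of pointwise maxima of linear functions.

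For convexity: I would observe that for each fixed $q \in \text{conv}(\mathcal{Q})$, the map $\mathcal{D} \mapsto \langle q, \mathcal{P} - \mathcal{D}\rangle$ is affine (in particular, linear) in $\mathcal{D}$. Hence $\phi$ is the pointwise supremum of a family of affine functions, and is therefore convex on $\Delta_{k}$ (this is standard; see any convex analysis reference). Non-negativity follows from the symmetry assumption $\mathcal{Q} = -\mathcal{Q}$ made in Section \ref{sec:preliminaries}: this implies $0 \in \text{conv}(\mathcal{Q})$, so $\phi(\mathcal{D}) \geq \langle 0, \mathcal{P} - \mathcal{D}\rangle = 0$.

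For the 1-Lipschitz property with respect to $\|\cdot\|_1$: given $\mathcal{D}_1, \mathcal{D}_2 \in \Delta_{k}$, I would let $q^\ast_1 \in \text{conv}(\mathcal{Q})$ be any maximizer of $\langle q, \mathcal{P}-\mathcal{D}_1\rangle$. Then, since $q^\ast_1$ is feasible for the maximization defining $\phi(\mathcal{D}_2)$,
\begin{equation*}
\phi(\mathcal{D}_1) - \phi(\mathcal{D}_2) \leq \langle q^\ast_1, \mathcal{P}-\mathcal{D}_1\rangle - \langle q^\ast_1, \mathcal{P}-\mathcal{D}_2\rangle = \langle q^\ast_1, \mathcal{D}_2-\mathcal{D}_1\rangle.
\end{equation*}
By Hölder's inequality, this is bounded by $\|q^\ast_1\|_\infty \|\mathcal{D}_2-\mathcal{D}_1\|_1$. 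Since every $q \in \mathcal{Q}$ maps into $[-1,1]$, we have $\|q\|_\infty \leq 1$ for every $q \in \mathcal{Q}$; this bound is preserved under convex combinations, so $\|q^\ast_1\|_\infty \leq 1$ for $q^\ast_1 \in \text{conv}(\mathcal{Q})$ as well. Swapping the roles of $\mathcal{D}_1$ and $\mathcal{D}_2$ and combining gives $|\phi(\mathcal{D}_1) - \phi(\mathcal{D}_2)| \leq \|\mathcal{D}_1 - \mathcal{D}_2\|_1$.

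There is no real obstacle here: both properties follow from the fact that $\phi$ is a pointwise maximum of linear functionals with uniformly bounded dual ($\ell_\infty$) norm. The only point requiring a small remark is that the bound $\|q\|_\infty \leq 1$ extends from $\mathcal{Q}$ to $\text{conv}(\mathcal{Q})$, which is immediate from convexity of the unit $\ell_\infty$-ball.
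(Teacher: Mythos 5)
Your proof is correct and follows essentially the same route as the paper's: convexity from $\phi$ being a pointwise supremum of affine functions, and Lipschitzness from H\"older's inequality combined with the bound $\|q\|_\infty \leq 1$ on $\mathrm{conv}(\mathcal{Q})$. Your remark verifying non-negativity (via $0 \in \mathrm{conv}(\mathcal{Q})$ from symmetry) is a sensible addition that the paper leaves implicit.
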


 Properties of $\phi_\sigma$ inherited from $\phi$ are summarized by the following proposition.
 
\begin{prop}
Let $\mathcal{Q}\subseteq[-1,1]^k$ be a compact and convex set, and  let $\phi(\mathcal{D}):= \max\limits_{ q\in \mathcal{Q}} \langle q, \mathcal{P}-\mathcal{D}\rangle$, the function $\phi_{\sigma}$ defined in ~\eqref{eqn:rand_smooth}, satisfies:
\begin{enumerate}
    \item[(i)] $\|\phi_{\sigma}-\phi\|_{\infty}\leq \sigma w(\mathcal{Q})$, where $w(\mathcal{Q}):=\mathbb{E}_{\xi\sim \varphi}\left[\max\limits_{q\in \mathcal{Q}}\langle q,\xi\rangle\right]$ is the Gaussian width of $\mathcal{Q}$ and $\varphi$ denotes the density of the  multivariate standard normal distribution.
normal distribution 
    \item[(ii)] $\phi_{\sigma}$ is convex and 1-Lipschitz with respect to~$\|\cdot\|_1$.
    \item[(iii)] $\phi_{\sigma}$ is $1/\sigma$-smooth with respect to $\|\cdot\|_1$.
\end{enumerate}
\end{prop}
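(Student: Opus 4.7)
My plan is to handle the three claims separately. For (i) and (ii) I would use only the definition of $\phi_{\sigma}$ as an expectation together with the symmetry of the Gaussian and the fact that $\phi$ is convex and $1$-Lipschitz (Lemma \ref{lema:phi_Lipschitz}). For (iii), which is the only nontrivial item, I would push the $\mathcal{D}$-differentiation onto the Gaussian density itself, so that no smoothness of $\phi$ beyond its $\|\cdot\|_1$-Lipschitzness is ever invoked.

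For the upper bound in (i), subadditivity of $\max$ gives, for any $\mathcal{D}$ and any $\xi$,
\[
\phi(\mathcal{D}+\xi) \leq \phi(\mathcal{D}) + \max_{q\in\text{conv}(\mathcal{Q})}\langle q,-\xi\rangle .
\]
Taking $\EE_{\xi}$ and using $-\xi\stackrel{d}{=}\xi$ together with the scaling $\xi=\sigma\tilde\xi$ with $\tilde\xi\sim\varphi$ yields $\phi_{\sigma}(\mathcal{D})-\phi(\mathcal{D})\leq \sigma w(\mathcal{Q})$; the matching lower bound $\phi_{\sigma}\geq \phi$ is Jensen's inequality for the convex function $\phi$ (using $\EE\xi=0$), which gives $\|\phi_{\sigma}-\phi\|_{\infty}\leq \sigma w(\mathcal{Q})$. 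For (ii), convexity of $\phi_{\sigma}$ is immediate because $\mathcal{D}\mapsto \phi(\mathcal{D}+\xi)$ is convex for every fixed $\xi$ and expectation preserves convexity; $1$-Lipschitzness with respect to $\|\cdot\|_1$ follows by pulling the absolute value inside the expectation and applying Lemma \ref{lema:phi_Lipschitz} pointwise.

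For (iii), the plan is to compute $\nabla\phi_{\sigma}$ directly without assuming $\phi$ is differentiable. Writing $\phi_{\sigma}(x)=\int\phi(y)\,\varphi_{\sigma}(y-x)\,dy$ and using $\partial_{x_i}\varphi_{\sigma}(y-x)=\tfrac{y_i-x_i}{\sigma^{2}}\varphi_{\sigma}(y-x)$, dominated convergence (justified by the linear growth of $\phi$ inherited from its Lipschitzness) gives, after the substitution $u=y-x$,
\[
\partial_i\phi_{\sigma}(x)=\sigma^{-2}\int \phi(x+u)\,u_i\,\varphi_{\sigma}(u)\,du .
\]
Subtracting the same identity at another point $y$ and applying Lemma \ref{lema:phi_Lipschitz} inside the integral together with $\EE|\xi_i|=\sigma\sqrt{2/\pi}\leq\sigma$ yields
\[
|\partial_i\phi_{\sigma}(x)-\partial_i\phi_{\sigma}(y)|\leq \sigma^{-2}\|x-y\|_1\!\int|u_i|\varphi_{\sigma}(u)\,du \leq \sigma^{-1}\|x-y\|_1.
\]
Taking the maximum over $i$ gives $\|\nabla\phi_{\sigma}(x)-\nabla\phi_{\sigma}(y)\|_{\infty}\leq (1/\sigma)\|x-y\|_1$, which is exactly the definition of $(1/\sigma)$-smoothness with respect to $\|\cdot\|_1$ (since its dual norm is $\|\cdot\|_\infty$).

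The only delicate step is the differentiation under the integral in (iii): because $\phi$ is merely Lipschitz, one cannot differentiate it directly. The trick I would rely on is to place all $x$-dependence on the Gaussian density, where the partial derivatives are smooth and dominated by an integrable function independent of $x$ in a neighborhood; everything else then reduces to applying Lemma \ref{lema:phi_Lipschitz} inside the expectation and exploiting the duality $(\ell_1)^{\ast}=\ell_\infty$ to translate a coordinatewise gradient-difference bound into smoothness with respect to $\|\cdot\|_1$.
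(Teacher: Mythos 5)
Your proof is correct and follows essentially the same route as the paper's: for (i) subadditivity of the max and Gaussian symmetry, for (ii) convexity and Lipschitzness are preserved under the expectation, and for (iii) you push the $\mathcal{D}$-derivative onto the Gaussian kernel, obtaining $\nabla\phi_\sigma(\mathcal{D}) = \sigma^{-2}\,\EE_\xi[\phi(\mathcal{D}+\xi)\,\xi]$ and then bounding coordinatewise via $\EE|\xi_i|\le\sigma$ — exactly the estimate the paper performs after citing Lemma~1.5 of \cite{Abernethy:2016} for the gradient identity, which you rederive directly. One small point in your favor: for (i) you also supply the lower bound $\phi_\sigma\ge\phi$ by Jensen, which closes the two-sided $\|\cdot\|_\infty$ statement; the paper's proof of the proposition only writes the one-sided inequality $\phi_\sigma-\phi\le\sigma w(\mathcal{Q})$ (it proves the other direction, via Gaussian symmetry, only later inside the proof of Theorem~\ref{thm:smoothing}).
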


\subsection{Stochastic Composite Minimization}
\label{sec:stoch_comp_min}

Randomized smoothing and regularization allow for  reformulation of the primal problem \eqref{primal_alpha} into
 the following  approximation within the complementary composite framework considered in \cite{daspremont:hal-04230893}:
\begin{equation} \label{eqn:primal_rand_smooth} 
\min_{\mathcal{D }\in \Delta_{k}} \Phi_{\alpha,\sigma}(\mathcal{D}): = \min_{\mathcal{D }\in \Delta_{k}} \left[\phi_{\sigma}(\mathcal{D}) + \alpha H(\mathcal{D}) \right],
\tag{$\textcolor{blue}{P_{\alpha,\sigma}}$}
\end{equation}
where $\phi_\sigma$ is $\frac{1}{\sigma}$-smooth and $H$ is $1$-strongly convex.

Given a Gaussian vector $\xi \sim \mathcal{N}(0, \sigma^2 I_k)$, a stochastic first-order oracle for $\phi_{\sigma}$ can be constructed as
\begin{equation}
\label{stochastic-oracle}
G_{\sigma} (\mathcal{D},\xi, S_n, \mathcal{Q}) \in \argmax_{q \in \text{conv}(\mathcal{Q})} \langle q , \mathcal{P}_n - \mathcal{D} + \xi \rangle.
\end{equation}
As shown in Section \ref{appendix-stochastic-oracle} of the appendix, this oracle is an unbiased estimator of $\nabla \phi_\sigma$ and has bounded variance. We notice that we cannot use the usual gradient descent on $\Phi_{\alpha,\sigma}$ for two reasons, the first is that our problem is constrained in the simplex $\Delta_k$ and the second is that we only have gradients related to $\nabla \phi_\sigma$ and not to the whole  $\nabla \Phi_{\alpha,\sigma}$. 

Solving equation \eqref{eqn:primal_rand_smooth} requires tools from the composite setting. For that, we claim that for any coefficients $A,B > 0,$ and vectors $g \in \mathbb{R}^{k}, \mathcal{D}'\in \Delta_k $, the following problem has a close form solution
\begin{align*}
\min_{\mathcal{D} \in {\Delta_k}} \Big\{A[\langle g  , \mathcal{D} \rangle + H(\mathcal{D})] + B D_{H}(\mathcal{D},\mathcal{D}') \Big\} 
\end{align*}
with $H$ being the negative entropy function and $D_{H}$ being the Kullback-Leibler divergence (Proof in Section \ref{proof-close-form} of the appendix). Now we have verified all the assumptions of Theorem 3.4 in \cite{daspremont:hal-04230893}  and we provide an $(\varepsilon, \delta)$-DP variation of the Accelerated Composite Mirror Descent (Algorithm \ref{Algo ACSMD}).

\begin{algorithm}
\caption{Differentially Private Complementary Accelerated Mirror-Descent }
\label{Algo ACSMD}
\begin{algorithmic}[1] 
\State \textbf{Input}: ($S_n$,  ${\cal Q}$,  $(\varepsilon, \delta)$,  $\alpha$)
\State Number of iterations: $T  =\frac{\log^{1/2} k}{\log^{1/2}(1/\delta)} \frac{\varepsilon n}{w(\mathcal{Q})}$
\State Smoothing parameter: $\sigma  =\frac{4\sqrt{T\ln(1/\delta)}}{n\varepsilon}$ 
\State Step sizes: for $1 \leq t \leq T, \quad \eta_t = t + \sqrt{\frac{4}{\alpha \sigma }} +1 $
\State Initialisation $\mathcal{D}_1 = \mathcal{D}_1^{ag} = [1/k,\ldots,1/k]^{\top}$
\For{$1 \leq t \leq T$}
    \State $\mathcal{D}_{t}^{md}  = \frac{\sum_{\tau = 1}^{t-1} \eta_\tau}{\sum_{\tau = 1}^{t} \eta_\tau} \mathcal{D}_{t}^{ag} + \frac{ \eta_t}{\sum_{\tau = 1}^{t} \eta_\tau} \mathcal{D}_{t}$
    \State $g_t  = G_{\sigma} (\mathcal{D}_{t}^{md}, \xi_t, S_n, \mathcal{Q})\,, \xi_t \sim{\cal N}(0,\sigma^2I_k)$ 
    \State $\mathcal{D}_{t+1}  = \argmin_{\mathcal{D} \in {\Delta_k}} \Big\{\eta_t [\langle g_t  , \mathcal{D} \rangle + H(\mathcal{D})] + \left(\sum_{\tau = 1}^{t-1} \eta_\tau \right) D_{H}(\mathcal{D},\mathcal{D}_{t}) \Big\}$
    \State $\mathcal{D}_{t+1}^{ag}  = \frac{\sum_{\tau = 1}^{t-1} \eta_\tau}{\sum_{\tau = 1}^{t} \eta_\tau}  \mathcal{D}_{t}^{ag} + \frac{ \eta_t}{\sum_{\tau = 1}^{t} \eta_\tau} \mathcal{D}_{t+1}$
\EndFor
\State \textbf{Output}: $\mathcal{P}^{\text{priv}}_\alpha = \mathcal{D} _{T+1}^{ag}$. 
\end{algorithmic}
\end{algorithm}
 
Algorithm \ref{Algo ACSMD} combines mirror descent with accelerated gradient descent. Mirror descent performs gradient updates in the dual space, which is particularly useful in non-Euclidean domains where standard updates may lead to infeasible or geometrically meaningless solutions (e.g., on manifolds).

Our algorithm works for any \(\alpha \geq 0\), but we offer a simplified interpretation for the case \(\alpha = 0\). Starting from an initial \(\mathcal{D}_1\), at the \(t\)-th iteration, the current distribution \(\mathcal{D}_t\) is mapped to the dual space via the mirror map \(\nabla H\). There, the update is \(\nabla H(\mathcal{D}_t) - \frac{\eta_t}{\sum_{\tau=1}^{t-1} \eta_\tau} g_t\), where \(g_t = G_{\sigma}(\mathcal{D}_t, \xi_t, S_n, \mathcal{Q})\) approximates \(\nabla \Phi_{\alpha, \sigma}\). The updated value is then back to the primal space using the inverse mirror map \(\nabla H^{-1} = \nabla H^{\ast}\). Thus, the update in the primal space is:
\[
\mathcal{D}_{t+1} = \nabla H^{\ast} \left( \nabla H(\mathcal{D}_t) - \frac{\eta_t}{\sum_{\tau=1}^{t-1} \eta_\tau} g_t \right)
\]
As demonstrated in Theorem 6.13 of \cite{intro-orobona}, this update is equivalent to
\begin{align*}
\mathcal{D}_{t+1} = \argmin\limits_{\D \in \Delta_k } \left\{  \eta_t \langle g_t, \D\rangle  +  \left(\sum_{\tau=1}^{t-1} \eta_{\tau}\right)D_{H} (\D, \D_t)\right\}\;.
\end{align*}

In general, accelerated methods improve upon the convergence rate of standard first-order methods for convex optimization problems by exploiting the convexity and the smoothness of the objective function. For this, it  combines the traditional mirror descent updates with an additional sequence of gradient evaluations at ($D_t^{md}$) to add a momentum to the mirror descent. While regular mirror descent takes steps based solely on the current gradient, accelerated mirror descent adds information from previous iterations (aggregation), allowing for faster convergence. 

Differential Privacy is introduced through the noisy oracle $ G_\sigma $, which injects calibrated noise at each iteration, ensuring that the Algorithm is $(\varepsilon, \delta)-$DP. The next result provides the accuracy and the privacy guarantee for Algorithm \ref{Algo ACSMD}.
\begin{teo}
\label{thm:RS-upperbond1}
 Algorithm \ref{Algo ACSMD} is $(\varepsilon,\delta)$-DP and the output $\mathcal{P}^{\text{priv}}_\alpha$  satisfies
\begin{align*}
&\EE\left[\textbf{Gap}_{\eqref{eqn:primal_rand_smooth}}(\mathcal{P}^{\text{priv}}_\alpha)  \right]  =  \mathcal{O} \left(
 \frac{\log^{3/4}(1/\delta)}{\alpha \log^{5/4}(k)}  \frac{w^{5/2}(\mathcal{Q})}{(n\varepsilon)^{3/2} }+ \frac{\log^{1/2}(1/\delta)}{\alpha \log^{1/2}(k)} \frac{w(\mathcal{Q})}{n \varepsilon}  \right).
\end{align*}
\end{teo}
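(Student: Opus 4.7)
The proof rests on combining two well-established tools: the Gaussian mechanism together with advanced composition for the privacy guarantee, and Theorem 3.4 of \cite{daspremont:hal-04230893} for the accuracy bound on the composite problem \eqref{eqn:primal_rand_smooth}.

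For the privacy half, I would first isolate the only data-dependent quantity touched in each iteration: the empirical distribution $\mathcal{P}_n$, which enters through the oracle $g_t \in \arg\max_{q \in \text{conv}(\mathcal{Q})}\langle q,\mathcal{P}_n - \mathcal{D}_t^{md} + \xi_t\rangle$. Replacing a single datum in $S_n$ perturbs $\mathcal{P}_n$ by at most $\sqrt{2}/n$ in $\ell_2$-norm, so $\mathcal{P}_n + \xi_t$ is exactly the Gaussian mechanism applied to a query of $\ell_2$-sensitivity $\sqrt{2}/n$; the $\arg\max$ afterward is pure post-processing. Each of the $T$ rounds is therefore $(\varepsilon_0,\delta_0)$-DP with $\varepsilon_0 \asymp 1/(\sigma n)$, and advanced composition yields the target $(\varepsilon,\delta)$-DP guarantee provided $\sigma \gtrsim \sqrt{T\log(1/\delta)}/(n\varepsilon)$, which is precisely the value prescribed by Algorithm \ref{Algo ACSMD}.

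For the accuracy half, I would check that \eqref{eqn:primal_rand_smooth} fits the complementary-composite framework: $\phi_\sigma$ is convex and $1/\sigma$-smooth w.r.t.\ $\|\cdot\|_1$ (by the proposition in Section \ref{sec:rand_smooth}), and $\alpha H$ is $\alpha$-strongly convex w.r.t.\ the same norm. The stochastic oracle $G_\sigma$ must be shown to be unbiased for $\nabla\phi_\sigma$ and to have bounded second moment in the dual norm $\|\cdot\|_\infty$: by symmetry of the Gaussian, $\mathbb{E}_\xi[G_\sigma(\mathcal{D},\xi,S_n,\mathcal{Q})] = -\nabla\phi_\sigma^{n}(\mathcal{D})$ where $\phi_\sigma^n$ is the empirical analogue of $\phi_\sigma$; taking a further expectation over $S_n$ restores unbiasedness for $\nabla\phi_\sigma$, while the residual fluctuation contributes to the variance and can be folded into the Gaussian-width bound. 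With the initialization $\mathcal{D}_1 = (1/k,\ldots,1/k)^\top$ one has $D_H(\mathcal{D}^\ast,\mathcal{D}_1)\leq \log k$. Theorem 3.4 of \cite{daspremont:hal-04230893} then produces an abstract bound of the shape
\begin{equation*}
\mathbb{E}\bigl[\textbf{Gap}_{\eqref{eqn:primal_rand_smooth}}(\mathcal{P}^{\text{priv}}_\alpha)\bigr] \;\lesssim\; \frac{\log k}{\sigma\,T^{2}} \;+\; \frac{\sigma_G^{2}}{\alpha\,T},
\end{equation*}
where $\sigma_G^{2}$ absorbs the oracle's variance in $\|\cdot\|_\infty$.

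The last step is plug-in: substituting $\sigma = 4\sqrt{T\log(1/\delta)}/(n\varepsilon)$ and $T = \sqrt{\log k/\log(1/\delta)}\cdot n\varepsilon/w(\mathcal{Q})$ into the abstract bound, the first summand collapses to the $\log^{3/4}(1/\delta)\,w^{5/2}(\mathcal{Q})/(\alpha\log^{5/4}(k)\,(n\varepsilon)^{3/2})$ rate (the extra $1/\alpha$ comes through the step sizes $\eta_t = t + \sqrt{4/(\alpha\sigma)}+1$, which accelerate in the strong-convexity regime), while the second summand yields the $\log^{1/2}(1/\delta)\,w(\mathcal{Q})/(\alpha\log^{1/2}(k)\,n\varepsilon)$ rate. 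I expect the main obstacle to be the oracle-variance estimate: the crude bound $\|G_\sigma\|_\infty \leq 1$ is too loose to recover the factor $w(\mathcal{Q})$; instead, one must quantify how sensitive the $\arg\max$ is to the Gaussian perturbation $\xi_t$, mirroring the argument that produced $\|\phi_\sigma-\phi\|_\infty \leq \sigma\,w(\mathcal{Q})$ in the smoothness proposition. A secondary but lower-risk difficulty is making the privacy composition tight enough—addressed by observing that $\mathcal{D}_t^{md}$, though a function of past iterations, depends on the data only through previously released noisy oracles and is therefore post-processing.
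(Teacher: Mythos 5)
Your overall strategy matches the paper's: the privacy comes from viewing $\mathcal{P}_n + \xi_t$ as a Gaussian mechanism applied to a quantity of $\ell_2$-sensitivity $\sqrt{2}/n$ and composing over $T$ rounds, and the accuracy comes from invoking Theorem~3.4 of \cite{daspremont:hal-04230893} after verifying the composite-oracle hypotheses. Two differences and one genuine misstep are worth flagging.

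For privacy, you route through $(\varepsilon_0,\delta_0)$-per-round DP plus advanced composition, whereas the paper works in R\'enyi DP: each round is $\left(\beta,\frac{\beta}{n^2\sigma^2}\right)$-RDP, rounds are added, and the result is converted back to $(\varepsilon,\delta)$-DP by optimizing over $\beta$. Both routes land on the same choice $\sigma\asymp\sqrt{T\log(1/\delta)}/(n\varepsilon)$; the paper's RDP route is a bit cleaner because a single Gaussian query is not \emph{pure} $\varepsilon_0$-DP, so classical advanced composition requires tracking a per-round $\delta_0$ that the RDP calculus avoids. Your observation that $\mathcal{D}_t^{md}$ depends on the data only through already-released noisy outputs is exactly the post-processing remark the paper makes.

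On the accuracy side there is a discrepancy. You write the abstract bound as
$\mathcal{O}\!\left(\frac{\log k}{\sigma T^2} + \frac{\sigma_G^2}{\alpha T}\right)$,
whereas the paper reads Theorem~3.4 of \cite{daspremont:hal-04230893} as giving
$\mathcal{O}\!\left(\frac{1}{\alpha\sigma T^2} + \frac{1}{\alpha T}\right)$.
These do not agree: your first term is off by a factor $\alpha\log k$. You acknowledge in prose that the $1/\alpha$ ``comes through the step sizes,'' and your final plug-in does recover the stated rate, but the abstract form you wrote is the non-strongly-convex mirror-descent bound with initial-distance dependence $D_H(\mathcal{D}^\ast,\mathcal{D}_1)\leq\log k$, not the strongly-convex accelerated one where the condition number $1/(\alpha\sigma)$ replaces that distance. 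If you tried to execute your version literally, the first summand would not collapse to the claimed rate.

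Finally, the obstacle you single out as the hardest---that the crude bound $\mathbb{E}_\xi\|G_\sigma-\nabla\phi_\sigma\|_\infty^2\leq 2$ is ``too loose to recover the factor $w(\mathcal{Q})$'' and that a refined argument via the sensitivity of the $\arg\max$ to $\xi_t$ is needed---is not the issue. The paper does use exactly that crude bound (the $\ell_\infty$-diameter of $\mathcal{Q}$ is at most~$2$). The factor $w(\mathcal{Q})$ in the theorem statement comes entirely from substituting the algorithm's prescribed iteration count $T=\sqrt{\log k/\log(1/\delta)}\cdot n\varepsilon/w(\mathcal{Q})$ into the abstract bound; it has nothing to do with sharpening the oracle variance. (The Gaussian width enters the \emph{choice} of $T$ because in Theorem~\ref{thm:smoothing} the approximation error $\|\phi_\sigma-\phi\|_\infty\leq\sigma\,w(\mathcal{Q})$ must be balanced against the optimization error, and that balancing fixes $T$.) So the step you predicted would be hardest is in fact trivial, while the step you glossed over---getting the abstract rate in the form $\frac{1}{\alpha\sigma T^2}+\frac{1}{\alpha T}$---is where the actual content of Theorem~3.4 lies.
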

Similarly to the approach taken with DPFW, optimizing over $\alpha$ yields
\begin{teo}
\label{thm:smoothing}
Let us consider the accuracy guarantee obtained in Theorem \ref{thm:RS-upperbond1}.

For \( n \geq \frac{w^3(\mathcal{Q}) \log(1/\delta)}{\epsilon \log^{3/2}(k)}\)
and \(\alpha^\ast = \frac{\log^{1/2}(1/\delta) w^{1/2}(\mathcal{Q}) }{\log ^{3/4}(k) \sqrt{n \varepsilon}} \), the distribution  $\mathcal{P}^{\text{priv}}:=\mathcal{P}_{\alpha^\ast}^{\text{priv}} $ satisfies
 \begin{align*}
 &\EE\left[\max\limits_{q\in \mathcal{Q}} \, \langle q, {\mathcal{P}} - \mathcal{P}^{\text{priv}} \rangle\right] =\mathcal{O}
\left( \frac{w^{1/2}(\mathcal{Q}) \log^{1/4}(k)\log^{1/4}( 1/\delta)}{ \varepsilon^{1/2} n^{1/2}} \right)\;.
 \end{align*}
\end{teo}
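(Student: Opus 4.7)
The plan is to translate the regularized, smoothed primal gap controlled in Theorem~\ref{thm:RS-upperbond1} back into a bound on the unregularized, unsmoothed objective $\phi(\mathcal{P}^{\text{priv}}) = \max_{q\in\mathcal{Q}}\langle q,\mathcal{P}-\mathcal{P}^{\text{priv}}\rangle$, and then to calibrate $\alpha^\ast$ so that all error contributions balance. The key deterministic inequality I would prove first is
\[
\phi(\mathcal{P}^{\text{priv}}) \;\leq\; \textbf{Gap}_{\eqref{eqn:primal_rand_smooth}}(\mathcal{P}^{\text{priv}}) \;+\; \alpha\log k \;+\; 2\sigma w(\mathcal{Q}).
\]
To obtain it, I would chain four routine steps: (a) by Proposition~(i), $\phi(\mathcal{P}^{\text{priv}})\leq \phi_\sigma(\mathcal{P}^{\text{priv}})+\sigma w(\mathcal{Q})$; (b) add and subtract $\alpha H(\mathcal{P}^{\text{priv}})$ to get $\phi_\sigma(\mathcal{P}^{\text{priv}})=\Phi_{\alpha,\sigma}(\mathcal{P}^{\text{priv}})-\alpha H(\mathcal{P}^{\text{priv}})\leq \Phi_{\alpha,\sigma}(\mathcal{P}^{\text{priv}})+\alpha\log k$, since $-H\leq \log k$ on $\Delta_k$; (c) write $\Phi_{\alpha,\sigma}(\mathcal{P}^{\text{priv}})=\Phi_{\alpha,\sigma}^{\star}+\textbf{Gap}_{\eqref{eqn:primal_rand_smooth}}(\mathcal{P}^{\text{priv}})$; and (d) bound $\Phi_{\alpha,\sigma}^{\star}\leq \Phi_{\alpha,\sigma}(\mathcal{P}) \leq \phi(\mathcal{P})+\sigma w(\mathcal{Q})+\alpha H(\mathcal{P})\leq \sigma w(\mathcal{Q})$, using $\phi(\mathcal{P})=0$ and $H(\mathcal{P})\leq 0$. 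Taking expectations and invoking Theorem~\ref{thm:RS-upperbond1} then delivers an expected upper bound on $\EE[\phi(\mathcal{P}^{\text{priv}})]$ as a sum of three explicit terms in $\alpha$ and $\sigma$.

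Next, I would substitute the smoothing parameter $\sigma$ prescribed by Algorithm~\ref{Algo ACSMD}. Using $T=\log^{1/2}(k)\,n\varepsilon/(\log^{1/2}(1/\delta)\,w(\mathcal{Q}))$ in $\sigma=4\sqrt{T\log(1/\delta)}/(n\varepsilon)$ yields
\[
\sigma w(\mathcal{Q})=\mathcal{O}\!\left(\frac{w^{1/2}(\mathcal{Q})\log^{1/4}(k)\log^{1/4}(1/\delta)}{\sqrt{n\varepsilon}}\right),
\]
which is already of the target order in the theorem. What remains is to choose $\alpha^\ast$ to make $\alpha\log k$ match this rate while simultaneously killing the $\alpha^{-1}$ contributions from $\textbf{Gap}_{\eqref{eqn:primal_rand_smooth}}$. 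Balancing the leading $(n\varepsilon)^{-1}$-term of Theorem~\ref{thm:RS-upperbond1} against $\alpha\log k$ produces precisely the scaling of $\alpha^\ast$ in the statement (up to logarithmic prefactors); after substitution, $\alpha^\ast\log k$ and the second term of the Gap both come out as $\mathcal{O}(w^{1/2}(\mathcal{Q})\log^{1/4}(k)\log^{1/4}(1/\delta)/\sqrt{n\varepsilon})$, matching $\sigma w(\mathcal{Q})$.

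The main obstacle is the first, higher-order $w^{5/2}(\mathcal{Q})/(n\varepsilon)^{3/2}$ summand from Theorem~\ref{thm:RS-upperbond1}: once divided by $\alpha^\ast$, it scales like $w^{2}(\mathcal{Q})/(n\varepsilon\log^{1/2}k)$ (up to $\log(1/\delta)$ factors), which is not automatically dominated by the target $1/\sqrt{n\varepsilon}$ rate. This is exactly the role of the hypothesis $n\geq w^3(\mathcal{Q})\log(1/\delta)/(\varepsilon\log^{3/2}k)$: rearranging it gives $w^{3/2}(\mathcal{Q})/(\log^{3/4}(k)\sqrt{n\varepsilon})\leq \log^{1/2}(1/\delta)$, which is precisely the algebraic condition needed to absorb the first Gap term into the second. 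Once this is verified, the final bound is the maximum of the three $\mathcal{O}(w^{1/2}(\mathcal{Q})\log^{1/4}(k)\log^{1/4}(1/\delta)/\sqrt{n\varepsilon})$ contributions, and privacy is inherited directly from Theorem~\ref{thm:RS-upperbond1}, completing the proof.
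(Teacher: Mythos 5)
Your proof is essentially the paper's argument, just reorganized into a single deterministic inequality before taking expectations; the paper achieves the same thing by chaining $\textbf{Gap}_{\eqref{primal}}(\mathcal{P}^{\text{priv}}_\alpha)\leq \textbf{Gap}_{\eqref{primal_alpha}}(\mathcal{P}^{\text{priv}}_\alpha)+\alpha\log k$ with $\textbf{Gap}_{\eqref{primal_alpha}}(\mathcal{P}^{\text{priv}}_\alpha)\leq \textbf{Gap}_{\eqref{eqn:primal_rand_smooth}}(\mathcal{P}^{\text{priv}}_\alpha)+2\sigma w(\mathcal{Q})$, and since $\phi^\star=0$ the two routes are identical. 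Your steps (a)--(d) are all correct, and the role you assign to the hypothesis on $n$ (controlling the higher-order $\alpha^{-1}\sigma^{-1}T^{-2}$ term, equivalently ensuring $T\gtrsim 1/\sigma$) is exactly how the paper uses it.

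One small algebra slip: rearranging $n\geq w^3(\mathcal{Q})\log(1/\delta)/(\varepsilon\log^{3/2}k)$ yields
\[
\frac{w^{3/2}(\mathcal{Q})}{\log^{3/4}(k)\,\sqrt{n\varepsilon}}\;\leq\;\log^{-1/2}(1/\delta),
\]
not $\leq\log^{1/2}(1/\delta)$ as you wrote. The corrected form is in fact what you need: the first $\textbf{Gap}$ term of Theorem~\ref{thm:RS-upperbond1} is dominated by the second iff $\log^{1/4}(1/\delta)\,w^{3/2}(\mathcal{Q})/(\log^{3/4}(k)\sqrt{n\varepsilon})\leq 1$, and the corrected rearrangement implies this (with one factor of $\log^{1/4}(1/\delta)$ to spare). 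Also note that the balancing $\alpha\log k\asymp \log^{1/2}(1/\delta)\,\alpha^{-1}\log^{-1/2}(k)\,w(\mathcal{Q})/(n\varepsilon)$ actually gives $\alpha^\ast\propto\log^{1/4}(1/\delta)\,w^{1/2}(\mathcal{Q})/(\log^{3/4}(k)\sqrt{n\varepsilon})$, i.e.\ exponent $1/4$ rather than the $1/2$ appearing in the theorem statement; your hedge ``up to logarithmic prefactors'' covers this, and the discrepancy appears to originate in the paper's statement rather than in your argument, but you should be aware that plugging the stated $\alpha^\ast$ literally into $\alpha^\ast\log k$ would overshoot the target rate by a factor of $\log^{1/4}(1/\delta)$.
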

In the case the query class is contained in the  Euclidean unit ball $\mathcal{B}_2^k:=\left\{ x \in \mathbb{R}^k : \|x\|_2 \leq 1 \right\}$, the previous upper bound provides the strongest accuracy guarantees compared to the methods listed in Table \ref{tab:synthetic_algorithms}, particularly as $n$ increases.  More specifically, if   $\mathcal{Q}$ is a finite subset of $\mathcal{B}_2^k$, then $w(\mathcal{Q}) \leq C \log |\mathcal{Q}|\;,$ where $C$ is a constant independent of the dimension $k$. 

\section{Conclusion}
\label{sec:conclusion}

This work presents the first significant population risk bounds for differentially private synthetic data. We introduce two algorithms: DP-Frank-Wolfe (DPFW) and DP-Randomized Smoothing with mirror descent (DPAM). Both algorithms achieve tight worst-case error bounds that are competitive with state-of-the-art methods, as shown in Table \ref{tab:synthetic_algorithms}. Notably, DPAM maintains the worst-case error for both empirical and population risk. 
For a specific class of queries, DPAM outperforms all previous methods in terms of sample size \(n\) (with a rate of \(1/\sqrt{n}\)), offering stronger guarantees by incorporating smaller factors related to other parameters compared to the established algorithms.

We conclude by pointing towards future research on establishing optimality of upper bounds. In the appendix we discuss the results of \cite{Bun:2018} on private lower bounds, and we conjecture that with proper control of the Gaussian width, DPAM is optimal up to a factor $\log(1/\delta)$.

\begin{table}[t]
\centering
\setlength{\tabcolsep}{4pt} 
\begin{tabular}{p{2cm}| p{5.8cm}} 
\toprule
\textbf{Established} & \textbf{Empirical Loss} \\
\midrule
MWEM & \( \mathcal{O}\left(\frac{d^{1/4} \log^{1/2} |\mathcal{Q}| \log^{1/2} (1/\delta)}{n^{1/2} \epsilon^{1/2}}\right) \) \\
DualQuery & \( \mathcal{O}\left(\frac{d^{1/6} \log^{1/2} |\mathcal{Q}| \log^{1/6} (1/\delta)}{n^{1/3} \epsilon^{1/3}}\right) \) \\
FEM & \( \mathcal{O}\left(\frac{d^{3/4} \log^{1/2} |Q| \log^{1/2} (1/\delta)}{n^{1/2} \epsilon^{1/2}}\right) \) \\
sepFEM & \( \mathcal{O}\left(\frac{d^{5/8} \log^{1/2} |\mathcal{Q}| \log^{1/2} (1/\delta)}{n^{1/2} \epsilon^{1/2}}\right) \) \\
DQRS & \( \mathcal{O}\left(\frac{d^{1/5} \log^{3/5} |\mathcal{Q}| \log^{1/5} (1/\delta)}{n^{2/5} \epsilon^{2/5}}\right) \) \\
DPFW(E) & \( \mathcal{O} \left(  \frac{d^{1/5}\log^{2/5}|{\cal Q}|\log^{1/5}(1/\delta) }{n^{2/5} \varepsilon^{2/5}}\right) \) \\
\midrule
\textbf{New} & \textbf{Population Loss} \\
\midrule
DPFW(P) & \( \mathcal{O} \left(  \frac{d^{1/5}\log^{2/5}|{\cal Q}|\log^{1/5}(1/\delta) }{n^{2/5} \varepsilon^{2/5}} +D_1\frac{\log^{1/2}(d)}{n^{1/2}}\right) \) \\
DPAM & \( \mathcal{O} \left( \frac{d^{1/4} w^{1/2}(\mathcal{Q}) \log^{1/4}(1/\delta)}{n^{1/2}\varepsilon^{1/2}} \right) \)\\
\bottomrule
\end{tabular}
\caption{
This table provides upper bounds on the empirical loss  \(\max_{q \in \mathcal{Q}} \langle q, \mathcal{P}_n - \mathcal{P}^{\text{priv}} \rangle\)  and the population loss \(\max_{q \in \mathcal{Q}} \langle q, \mathcal{P} - \mathcal{P}^{\text{priv}} \rangle\) of established on the two newly proposed algorithms.
For this, the data universe is  assumed to be \(\mathcal{Z} = \{0,1\}^d\). 
}  
\label{tab:synthetic_algorithms}
\end{table}

\section*{Acknowledgments}
The authors gratefully acknowledge Cristóbal Guzmán's support and, in particular, his invaluable  insights in the early stages of this work.
Annika Betken gratefully acknowledges financial support from the Dutch Research Council (NWO) through VENI grant 212.164. 

\nocite{*}

\bibliography{ref}

\newpage
\appendix
\onecolumn
\begin{center}
    \Large{\textbf{Appendix}}
\end{center}

This part provides auxiliary results needed for the proofs of the mathematical results in Section \ref{sec:prelim_A}. It, moreover, establishes the proofs of Theorems \ref{thm:utility_guarantee} and \ref{thm:upper_bound_primalgap} stated in Section \ref{sec:sol_dual} of Section \ref{sec:supp_A}, and the proof of Theorems  \ref{thm:RS-upperbond1}
and \ref{thm:smoothing}
stated in Section \ref{sec:primal}
 of the Section \ref{sec:proofs_5}.
The last section (Section \ref{sec:lower_bounds}) provides a short discussion of lower bounds for the accuracy of algorithms for private synthetic data release.

\paragraph{Notation:}

For  $\|\cdot\|$ a norm and $f:E \longrightarrow \RR$ convex (over $E\subset \RR^d$ convex), we denote with $f^\ast(y)=\sup_{x\in E} (\langle x,y\rangle - f(y))$ the Fenchel conjugate of $f$. The dual norm of  $\|\cdot\|$ is denoted by $\|x\|_\ast=\sup_{\|y\|\leq 1} \langle x, y \rangle\;.$ For $f$, we denote with $\nabla f$ and $\nabla^2 f$ the gradient vector and Hessian matrix of $f$, respectively.  For a vector $v=(v_1, \ldots , v_d)^\intercal$, the symbol diag$(v)$ denotes a diagonal matrix whose elements on the main diagonal are given by the elements of $v$ and all the remaining entries are set to zero. 
For $\mathbf{z}\in \mathcal{Z}=\{z_1, \ldots, z_k\}$, we define the vector $e_\mathbf{z}:= \left( \mathbf{1}_{\{\mathbf{z}=z_1\}},\ldots, \mathbf{1}_{\{\mathbf{z}=z_k\}} \right)^\intercal\;.$  
Any notation not explicitly introduced in this paragraph adheres to the conventions established in the main document. 

\section{Auxiliary results}\label{sec:prelim_A}

Both algorithms established in this article are based on the consideration of an optimization problem of the form
\begin{align*}
\min_{\mathcal{D} \in {\Delta_k}} \Psi_{A,B,C}(\mathcal{D}), \ \ \Psi_{A,B,C}(\mathcal{D}): =  A \langle g  , \mathcal{D} \rangle + B H(\mathcal{D}) + C D_{H},(\mathcal{D},\mathcal{D}'),
\end{align*}
 $A\in \mathbb{R}$, $B > 0$, $C \geq 0 $,   $g \in \mathbb{R}^{k}$, $\mathcal{D}'\in \Delta_k$, and where 
 $H$ denotes  the negative entropy function on $\Delta_k$   and $D_H$ the associated Bregman divergence.
A sufficient condition  guaranteeing  a unique solution to an optimization problem of this type, corresponds to strong convexity of the objective  function (here: $\Psi_{A,B, C}$).
For this, note that the first summand in the representation of $\Psi_{A,B, C}$ is linear, while the second summand corresponds to the negative entropy function which is strongly convex according to Example 2.5 in \cite{shalev2012online}.
For establishing strong convexity of $\Psi_{A,B, C}$,  it therefore suffices to show that   $D_{H}(\cdot ,D')$
is (strongly) convex. In fact, it can be shown that any strongly convex function $f$
 induces strong convexity of the Bregman divergence in its first argument:
 
\begin{lema}\label{lem:strong_conv_div}
Let \( f: \mathbb{R}^n \to \mathbb{R} \) be a differentiable, \( \mu \)-strongly convex function, i.e. for all \( x, y \in \mathbb{R}^n \),
\[
f(x) \geq f(y) + \langle \nabla f(y), x - y \rangle + \frac{\mu}{2} \| x - y \|^2.
\]
Then, the Bregman divergence \( D_f(x, y) \) 
is strongly convex in \( x \), with convexity parameter \( \mu \). 
\end{lema}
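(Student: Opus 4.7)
The plan is to observe that $D_f(\cdot,y)$ differs from $f$ only by an affine function of its first argument, and affine perturbations do not change the strong convexity modulus. Concretely, I would start by writing out
\[
D_f(x,y) = f(x) - f(y) - \langle \nabla f(y), x-y\rangle,
\]
and isolate the dependence on $x$: viewed as a function of $x$ for fixed $y$, we have
\[
D_f(x,y) = f(x) - \langle \nabla f(y), x\rangle + c(y),
\]
where $c(y) := -f(y) + \langle \nabla f(y), y\rangle$ does not depend on $x$.

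Next I would invoke the hypothesis on $f$: for any $x_1,x_2\in\mathbb{R}^n$ and $\lambda\in[0,1]$, writing $x_\lambda := \lambda x_1 + (1-\lambda)x_2$, the $\mu$-strong convexity of $f$ gives
\[
f(x_\lambda)\;\le\;\lambda f(x_1) + (1-\lambda)f(x_2) - \frac{\mu}{2}\lambda(1-\lambda)\|x_1-x_2\|^2,
\]
which is the equivalent midpoint-style characterization of $\mu$-strong convexity (derivable from the gradient inequality in the hypothesis). Since the map $x\mapsto -\langle \nabla f(y),x\rangle + c(y)$ is affine, it satisfies this inequality with equality and modulus $0$. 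Adding the two inequalities termwise yields
\[
D_f(x_\lambda,y)\;\le\;\lambda D_f(x_1,y) + (1-\lambda)D_f(x_2,y) - \frac{\mu}{2}\lambda(1-\lambda)\|x_1-x_2\|^2,
\]
which is exactly $\mu$-strong convexity of $D_f(\cdot,y)$ in its first argument.

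There is essentially no obstacle here beyond bookkeeping: the only point worth being explicit about is the equivalence between the gradient-based definition of strong convexity assumed in the statement and the midpoint-style inequality used above, but this is a standard consequence of convexity plus the quadratic lower bound (alternatively, one can differentiate twice in a smooth setting, or apply the three-point identity for Bregman divergences). Hence the conclusion follows with the same modulus $\mu$ as claimed.
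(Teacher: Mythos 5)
Your proof is correct and takes essentially the same approach as the paper's: both invoke the midpoint-style characterization of $\mu$-strong convexity of $f$, and both use the fact that the remaining part of $D_f(\cdot,y)$ is affine in the first argument and therefore contributes nothing to the modulus. The only cosmetic difference is that you isolate the affine perturbation $x\mapsto -\langle\nabla f(y),x\rangle + c(y)$ before adding the two inequalities, whereas the paper expands $D_f(\lambda x_1+(1-\lambda)x_2,y)$ directly and regroups terms.
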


\begin{proof}[Proof of Lemma \ref{lem:strong_conv_div}]
 Since \( f \) is \( \mu \)-strongly convex, for any \( x_1, x_2 \in \mathbb{R}^n \) and \( \lambda \in [0, 1] \), we have
   \[
   f(\lambda x_1 + (1 - \lambda) x_2) \leq \lambda f(x_1) + (1 - \lambda) f(x_2) - \frac{\mu}{2} \lambda(1 - \lambda) \| x_1 - x_2 \|^2.
   \]
Since the gradient is linear, for all \( y \in \mathbb{R}^n \),
   \[
   \langle \nabla f(y), \lambda x_1 + (1 - \lambda) x_2 - y \rangle = \lambda \langle \nabla f(y), x_1 - y \rangle + (1 - \lambda) \langle \nabla f(y), x_2 - y \rangle.
   \] 
   
Using the above properties, we calculate the Bregman divergence for the convex combination \( \lambda x_1 + (1 - \lambda) x_2 \) as follows:
   \[
   \begin{aligned}
   D_f(\lambda x_1 + (1 - \lambda) x_2, y) 
   &= f(\lambda x_1 + (1 - \lambda) x_2) - f(y) - \langle \nabla f(y), \lambda x_1 + (1 - \lambda) x_2 - y \rangle \\
   &\leq \left( \lambda f(x_1) + (1 - \lambda) f(x_2) - \frac{\mu}{2} \lambda(1 - \lambda) \| x_1 - x_2 \|^2 \right) \\
   &\quad - f(y) - \left( \lambda \langle \nabla f(y), x_1 - y \rangle + (1 - \lambda) \langle \nabla f(y), x_2 - y \rangle \right) \\
   &= \lambda D_f(x_1, y) + (1 - \lambda) D_f(x_2, y) - \frac{\mu}{2} \lambda(1 - \lambda) \| x_1 - x_2 \|^2.
   \end{aligned}
   \]
Thus, the Bregman divergence \( D_f(x, y) \) is strongly convex in its first argument \( x \) with parameter \( \mu \).
\end{proof}

Section \ref{sec:sol_dual} aims at solving the optimization problem 
\begin{align}
    &\max_{q \in \mbox{conv}({\cal Q})} \, \psi_{\alpha}(q),
    \end{align}
where  $\psi_\alpha(q):= \langle q, \mathcal{P} \rangle -\alpha  H^\ast \left( \frac{q}{\alpha}\right)$.
Paving the way for the proofs of the results reported in Section \ref{sec:sol_dual}, we state the following auxiliary lemma establishing properties of $\psi_\alpha$ needed for the proof of Theorem \ref{thm:utility_guarantee}.

\begin{lema}\label{smoothness_dual}
The function
 $\psi_\alpha: [-1,1]^k\longrightarrow \mathbb{R}$,  $\psi_\alpha$ is $2$-Lipschitz and  $\frac{1}{\alpha}$-smooth with respect to $\|\cdot\|_\infty$.
\end{lema}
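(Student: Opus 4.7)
The plan is to unpack $\psi_\alpha$ in terms of $\nabla H^\ast$ and its $1$-smoothness with respect to $\|\cdot\|_\infty$ (stated in the main text as a consequence of Zalinescu's theorem). Since both bounds are quantified against $\|\cdot\|_\infty$ and its dual is $\|\cdot\|_1$, the work reduces to controlling the $\ell_1$-norm of a gradient that is essentially the difference of two probability vectors, plus a single chain-rule application for the smoothness part.

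First I would compute the gradient via the chain rule, observing that the factor $1/\alpha$ introduced by differentiating the argument of $H^\ast$ cancels against the $\alpha$ in front:
\[
\nabla \psi_\alpha(q) \;=\; \mathcal{P} \;-\; \nabla H^\ast\!\bigl(q/\alpha\bigr).
\]
Because $\nabla H^\ast(y) = \bigl(e^{y_j}/\sum_l e^{y_l}\bigr)_j$ is the softmax, it lies in $\Delta_k$; likewise $\mathcal{P} \in \Delta_k$. Hence
\[
\|\nabla \psi_\alpha(q)\|_1 \;\leq\; \|\mathcal{P}\|_1 + \|\nabla H^\ast(q/\alpha)\|_1 \;=\; 2,
\]
and since $\|\cdot\|_1$ is the dual of $\|\cdot\|_\infty$, this yields the $2$-Lipschitz property in the $\|\cdot\|_\infty$ norm through the standard mean-value argument for convex domains.

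For the $\frac{1}{\alpha}$-smoothness, I would subtract the gradient expressions at two points $q_1,q_2 \in [-1,1]^k$:
\[
\nabla \psi_\alpha(q_1) - \nabla \psi_\alpha(q_2) \;=\; -\Bigl(\nabla H^\ast(q_1/\alpha) - \nabla H^\ast(q_2/\alpha)\Bigr),
\]
so that the already-cited $1$-smoothness of $H^\ast$ with respect to $\|\cdot\|_\infty$ (i.e., $\|\nabla H^\ast(y_1)-\nabla H^\ast(y_2)\|_1 \leq \|y_1-y_2\|_\infty$) applied to $y_i = q_i/\alpha$ gives
\[
\|\nabla \psi_\alpha(q_1) - \nabla \psi_\alpha(q_2)\|_1 \;\leq\; \bigl\|\tfrac{q_1}{\alpha} - \tfrac{q_2}{\alpha}\bigr\|_\infty \;=\; \tfrac{1}{\alpha}\,\|q_1-q_2\|_\infty.
\]
No step looks delicate: the Lipschitz bound is just the triangle inequality on the simplex, and the smoothness bound is a one-line reduction to a black-box fact stated earlier in the paper. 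The only care needed is to keep the primal/dual norm pairing consistent (using $\|\cdot\|_1$ as the dual of $\|\cdot\|_\infty$) and to flag that the cited $1$-smoothness of $H^\ast$ is precisely in the $\|\cdot\|_\infty$ geometry, since weaker Euclidean-based statements would give the wrong constant.
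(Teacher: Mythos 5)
Your proof is correct and takes essentially the same approach as the paper: both arguments hinge on the facts that $\nabla H^\ast$ is the softmax (hence a probability vector, giving $\|\nabla\psi_\alpha\|_1\le 2$) and that $H^\ast$ is $1$-smooth in $\|\cdot\|_\infty$. The paper works through the pointwise loss $\ell(q,\mathbf{z})$ and passes expectations, whereas you compute $\nabla\psi_\alpha$ directly, but this is a cosmetic reorganization rather than a different route.
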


\begin{proof}[Proof of Lemma \ref{smoothness_dual}] Note that  $\psi_\alpha (q):= \EE_{\mathbf{z}\sim \mathcal{P}}[\ell(q,\mathbf{z})]$ where $\ell(q,\mathbf{z}):=q(z)-\alpha H^\ast \left( \frac{q}{\alpha} \right)$. 
Let $\partial_j f $ denote the partial derivative with respect to the $j-$th entry of an $\RR^k$-valued function $f$. Then, it holds  that $\partial_j q(\mathbf{z})= \delta_{\mathbf{z},j}$, where $\delta_{a,b}=1$ if and only if $a=b$, and 0 else.  Note that $\partial_j q(\mathbf{z})$ is integrable, such that, by the dominated convergence theorem, it follows that 
\begin{align*}
  \nabla   \EE_{\mathbf{z}\sim \mathcal{P}}\left[\ell(q, \mathbf{z})\right]=\EE_{\mathbf{z}\sim \mathcal{P}}\left[ \nabla\ell(q, \mathbf{z})\right]
\end{align*}
Accordingly,  it suffices to show that $\ell(\cdot,\mathbf{z})$ is $2-$Lipschitz and $\frac{1}{\alpha}-$smooth.

\begin{enumerate}
    \item \textbf{Lipschitzness}:  For   $q_1, q_2\in [-1,1]^k$, it holds that
\begin{align*}
   | \ell(q_1, \mathbf{z})- \ell(q_2, \mathbf{z})  =& \left| q_1({\mathbf{z}}) -  q_2({\mathbf{z}}) + \alpha \left( H^\ast \left( \frac{q_2}{\alpha} \right) - H^\ast \left(\frac{q_1}{\alpha}\right) \right) \right|\\
   \leq& \sup_{\mathbf{z}\in \mathcal{Z}} \Big| q_1({\mathbf{z}}) -  q_2({\mathbf{z}}) \Big| +\alpha \left\| \frac{q_2}{\alpha}- \frac{q_1}{\alpha}\right\|_\infty \leq 2\| q_1- q_2\|_{\infty}\;.
\end{align*}
\item \textbf{Smoothness}: 
For   $q_1, q_2\in [-1,1]^k$, it holds that
\begin{align*}
\| \nabla \ell(q_1, \mathbf{z}) - \nabla \ell(q_2, \mathbf{z}) \|_1 = \Big\| e_\mathbf{z} - \nabla [H^\ast (q_1 /\alpha) ]-  e_\mathbf{z} + \nabla[ H^\ast (q_2 /\alpha)]\Big\|_1 \leq \frac{1}{\alpha} \|q_1-q_2\|_\infty \;.
\end{align*}
\end{enumerate}
\end{proof}

Algorithm  \ref{alg:DPFW} incorporates Laplace noise 
for guaranteeing differential privacy. 
For an accuracy guarantee of the algorithm, however, we need to control the effect of this mechanism. For this, 
we establish the following
maximal inequality for i.i.d. Laplace random variables.

\begin{lema} Consider $u_1, \ldots, u_k\overset{i.i.d}{\sim} \text{Lap}(\lambda)$ random variables. Then, it holds that 
 \[ \EE\left[ \max_{j=1,\ldots, k} u_j \right] \leq  2\lambda \log \left( 2k \right)\;.\]
 \label{lemma:properties_laplacians}
 \end{lema}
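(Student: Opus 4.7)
The plan is to use the standard moment-generating function (MGF) / Chernoff-style argument for bounding expectations of maxima of light-tailed random variables, specialized to the Laplace case. The key observation is that although Laplace random variables are not sub-Gaussian, they are sub-exponential, so their MGF is finite on a neighborhood of zero; this is exactly the regime where the Jensen/Chernoff machinery applies.

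First, for any $t>0$ in the range where $\EE[e^{t u_1}]$ is finite, I would apply Jensen's inequality to the convex function $x\mapsto e^{tx}$ and then bound the maximum by the sum to obtain
\begin{equation*}
\EE\!\left[\max_{j=1,\ldots,k}u_j\right]\;\le\;\frac{1}{t}\log\EE\!\left[\exp\!\left(t\max_{j}u_j\right)\right]\;\le\;\frac{1}{t}\log\!\left(\sum_{j=1}^{k}\EE[e^{t u_j}]\right)\;=\;\frac{1}{t}\log\!\left(k\,\EE[e^{t u_1}]\right).
\end{equation*}
Second, I would recall (or derive by a short direct computation using the density $f(x)=\frac{1}{2\lambda}e^{-|x|/\lambda}$) the Laplace MGF
\begin{equation*}
\EE[e^{t u_1}]=\frac{1}{1-\lambda^{2}t^{2}}\qquad\text{for all }|t|<1/\lambda.
\end{equation*}

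Third, it only remains to tune $t$. Rather than optimizing exactly, I would simply choose $t=1/(2\lambda)$, which lies safely inside the admissible range and yields $1-\lambda^{2}t^{2}=3/4$. Substituting gives
\begin{equation*}
\EE\!\left[\max_{j=1,\ldots,k}u_j\right]\;\le\;2\lambda\,\log\!\left(\tfrac{4k}{3}\right)\;\le\;2\lambda\,\log(2k),
\end{equation*}
where the final inequality uses $4/3<2$. This yields the stated bound.

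There is no real obstacle here beyond choosing $t$ judiciously; a direct optimization would give a slightly sharper leading constant, but the convenient value $t=1/(2\lambda)$ is enough to land exactly on the target $2\lambda\log(2k)$. The one point worth flagging is that the argument bounds $\EE[\max_j u_j]$ rather than $\EE[\max_j |u_j|]$, which is all that is needed in the subsequent utility analysis of the Report Noisy Max step inside DPFW; symmetry of the Laplace distribution would nevertheless allow the same technique to handle $|u_j|$ with only a factor-of-two change in the union bound.
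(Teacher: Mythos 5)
Your proof is correct and follows essentially the same route as the paper: Jensen applied to $e^{tx}$, union bound on the maximum, the Laplace MGF, and the choice $t=1/(2\lambda)$. The only difference is cosmetic: you keep the exact MGF $(1-\lambda^2 t^2)^{-1}$ and obtain the intermediate constant $4/3$, whereas the paper first discards the harmless factor $(1+t\lambda)^{-1}\le 1$ to get the cleaner bound $(1-\lambda t)^{-1}=2$ directly; both land on $2\lambda\log(2k)$.
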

\begin{proof}[Proof of Lemma \ref{lemma:properties_laplacians}]
For all $t>0$, 
$$ \exp{\left(t \EE \left[ \max_{j=1,\ldots, k} u_j \right]\right)} \leq \EE \left[ \exp{(t \max_{j =1,\ldots, k} u_j})\right] \leq \EE \left[ \sum_{i=1}^{k} \exp{(t u_i)}\right]= k \EE[ \exp{(t u_1)}].$$
The first inequality follows by Jensen's inequality applied to $\exp(\cdot)$, which is convex.  Applying  $\log (\cdot)$ to both sides ($\log$ is monotone and preserves the inequality)
$$\EE \left[ \max_{j =1,\ldots, k} u_j \right] \leq \frac{\log \left( k  \EE[ \exp{(t u_1)}]\right)}{t} \quad \text{for all }t>0.$$

Since the density function of the Laplacian distribution is known, $\EE[ \exp{(t u)}]$ can be computed exactly. In fact, it corresponds to the moment generating function. It is finite if $0< t< \frac{1}{\lambda}$, since 
$$\EE[ \exp{(t u)}]=\frac{1}{(1+t\lambda)(1-\lambda t)}\leq \frac{1}{1-\lambda t} \quad \text{ for all }0< t< \frac{1}{\lambda}. $$
By choosing $t=\frac{1}{2\lambda}$, we finally get the upper bound
\( \EE \left[ \max\limits_{j=1,\ldots, k} u_j \right] \leq  2\lambda \log \left( 2k \right)\;. \)
\end{proof}

Section \ref{sec:primal} aims at solving 
 the unregularized optimization problem 
 \begin{equation*}
\min\limits_{\mathcal{D} \in \Delta_{k}}\phi(\mathcal{D}), 
\ \text{where }
\ \phi(\mathcal{D}):=\max\limits_{q \in \mbox{conv}(\mathcal{Q})}\langle q, \mathcal{P}- \mathcal{D} \rangle \tag{\textcolor{blue}{$P$}}.
\end{equation*}
The following auxiliary Lemma
characterizes this problem as a convex optimization problem.
\begin{lema}
\label{lem:phi_convex}
The function
$\phi:{\Delta}_k\mapsto \mathbb{R}_+$, defined by $\phi(\mathcal{D})$, is convex and 1-Lipschitz with respect to ~$\|\cdot\|_1$. 
\end{lema}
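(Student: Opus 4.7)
The plan is to handle the two assertions separately since they follow from rather standard and independent facts about pointwise suprema of affine functions.

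For convexity, I would rewrite $\phi$ as a pointwise supremum of affine (hence convex) functions of $\mathcal{D}$. More precisely, for each fixed $q \in \text{conv}(\mathcal{Q})$ the map $\mathcal{D} \mapsto \langle q, \mathcal{P} - \mathcal{D}\rangle = \langle q, \mathcal{P}\rangle - \langle q, \mathcal{D}\rangle$ is affine in $\mathcal{D}$, and $\phi(\mathcal{D}) = \sup_{q \in \text{conv}(\mathcal{Q})} \langle q, \mathcal{P} - \mathcal{D}\rangle$ is the pointwise supremum of this family. Since the pointwise supremum of an arbitrary family of convex functions is convex, convexity of $\phi$ on $\Delta_k$ follows immediately.

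For the Lipschitz bound, I would apply the elementary inequality $|\sup_q f(q) - \sup_q g(q)| \leq \sup_q |f(q) - g(q)|$ to the two parametric families $f(q) = \langle q, \mathcal{P} - \mathcal{D}_1\rangle$ and $g(q) = \langle q, \mathcal{P} - \mathcal{D}_2\rangle$. This reduces the Lipschitz estimate to bounding
\begin{equation*}
|\phi(\mathcal{D}_1) - \phi(\mathcal{D}_2)| \leq \sup_{q \in \text{conv}(\mathcal{Q})} |\langle q, \mathcal{D}_2 - \mathcal{D}_1\rangle|.
\end{equation*}
By Hölder's inequality, each inner product on the right-hand side is at most $\|q\|_\infty \,\|\mathcal{D}_2 - \mathcal{D}_1\|_1$, and since $\mathcal{Q} \subseteq [-1,1]^k$ the convex hull $\text{conv}(\mathcal{Q})$ is also contained in $[-1,1]^k$, giving $\|q\|_\infty \leq 1$. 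This yields $|\phi(\mathcal{D}_1) - \phi(\mathcal{D}_2)| \leq \|\mathcal{D}_2 - \mathcal{D}_1\|_1$, which is precisely 1-Lipschitzness with respect to $\|\cdot\|_1$.

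Non-negativity, which is implicit in writing $\phi:\Delta_k \to \mathbb{R}_+$, follows by using the symmetry assumption $\mathcal{Q} = -\mathcal{Q}$ from Section \ref{sec:preliminaries}: for any $\mathcal{D}$, both $q$ and $-q$ lie in $\text{conv}(\mathcal{Q})$, so the maximum of $\langle q, \mathcal{P}-\mathcal{D}\rangle$ is non-negative. There is no real obstacle to this lemma; the only mild subtlety is to be explicit about why the inner product is controlled by $\|\cdot\|_1$, which is precisely what is needed downstream in the randomized-smoothing analysis where the smoothing is calibrated in the $\|\cdot\|_\infty$ dual norm.
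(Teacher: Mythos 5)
Your proof is correct and follows essentially the same route as the paper's: convexity from the pointwise supremum of affine functions, and 1-Lipschitzness from the supremum-difference inequality combined with H\"older and $\|q\|_\infty \leq 1$. Your brief remark verifying non-negativity (needed to justify the codomain $\mathbb{R}_+$) via the symmetry $\mathcal{Q}=-\mathcal{Q}$ is a small, welcome addition that the paper leaves implicit.
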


\begin{proof}[Proof of Lemma \ref{lem:phi_convex}]
As $\phi$ is defined as the maximum of linear functions, it is convex. Lipschitzness can be proved by the triangle inequality: For any $\mathcal{D},\mathcal{D}' \in \Delta_k $ it holds that
\[ \phi(\mathcal{D})-\phi(\mathcal{D}') \leq \max_{q\in \text{conv}(\mathcal{Q})} \langle q,\mathcal{D}-\mathcal{D}'\rangle\leq \|\mathcal{D}-\mathcal{D}' \|_1,  \]
where we used the H\"older inequality and that $q\in [-1,1]^k$ for all $q\in \mathcal{Q}$.
\end{proof}

In Section \ref{sec:stoch_comp_min}, the following  approximation to the primal problem \((P_{\alpha})\) is considered:
\begin{equation*} 
\min_{\mathcal{D }\in \Delta_{k}} \Phi_{\alpha,\sigma}(\mathcal{D}): = \min_{\mathcal{D }\in \Delta_{k}} \left[\phi_{\sigma}(\mathcal{D}) + \alpha H(\mathcal{D}) \right],
\tag{$\textcolor{blue}{P_{\alpha,\sigma}}$}
\end{equation*}
where $\phi_{\sigma}(\mathcal{D}):=\int_{\mathbb{R}^{k}}\phi(\mathcal{D}+x)\varphi_{\sigma}(x)\,dx=\mathbb{E}_{\xi} [\phi(\mathcal{D}+\xi)]$
for a Gaussian random vector  $\xi\sim \mathcal{N}(0, \sigma^2 I_k)$.

The following Proposition
characterizes this problem as a convex optimization problem:

\begin{prop}\label{prop:rand_smooth}
Let $\mathcal{Q}\subseteq[-1,1]^k$ be a compact and convex set. Then, the following holds:
\begin{enumerate}
    \item[(i)] $\|\phi_{\sigma}-\phi\|_{\infty}\leq \sigma w(\mathcal{Q})$, where $w(\mathcal{Q}):=\mathbb{E}_{\xi\sim \varphi}\left[\max\limits_{q\in \mathcal{Q}}\langle q,\xi\rangle\right]$ is the Gaussian width of $\mathcal{Q}$ and $\varphi$ denotes the density of the  multivariate standard normal distribution.
    \item[(ii)] $\phi_{\sigma}$ is convex and 1-Lipschitz with respect to~$\|\cdot\|_1$.
    \item[(iii)] $\phi_{\sigma}$ is $1/\sigma$-smooth with respect to $\|\cdot\|_1$.
\end{enumerate}
\end{prop}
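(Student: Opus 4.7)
The plan is to combine the properties of $\phi$ established in Lemma \ref{lema:phi_Lipschitz} (convexity and 1-Lipschitzness w.r.t.\ $\|\cdot\|_1$) with standard Gaussian randomized smoothing arguments in the style of \cite{DBW}. Parts (i) and (ii) reduce to swapping expectation with convexity/Lipschitz inequalities, while (iii) is the main obstacle: since $\phi$ is non-differentiable, all the regularity must be transferred onto the smooth Gaussian kernel $\varphi_\sigma$ via a Stein-type identity.

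For part (i), I would first apply Jensen's inequality together with $\EE[\xi]=0$ to obtain the lower bound $\phi_\sigma(\mathcal{D}) \geq \phi(\mathcal{D})$. For the matching upper bound, let $\tilde q = \tilde q(\xi) \in \mathcal{Q}$ be a maximizer of $q \mapsto \langle q, \mathcal{P}-\mathcal{D}-\xi\rangle$; then
\begin{equation*}
\phi(\mathcal{D}+\xi) - \phi(\mathcal{D}) \leq \langle \tilde q, \mathcal{P}-\mathcal{D}-\xi\rangle - \langle \tilde q, \mathcal{P}-\mathcal{D}\rangle = \langle \tilde q, -\xi\rangle \leq \max_{q \in \mathcal{Q}} \langle q, -\xi\rangle.
\end{equation*}
Taking expectations, using $\xi \stackrel{d}{=} -\xi$ and the change of variables $\xi = \sigma \eta$ with $\eta \sim \mathcal{N}(0,I_k)$, gives $\phi_\sigma(\mathcal{D}) - \phi(\mathcal{D}) \leq \sigma w(\mathcal{Q})$.

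For part (ii), convexity of $\phi_\sigma$ passes through expectation: for $\lambda \in [0,1]$, convexity of $\phi$ (Lemma \ref{lema:phi_Lipschitz}) yields
\begin{equation*}
\phi_\sigma(\lambda \mathcal{D}_1 + (1-\lambda)\mathcal{D}_2) = \EE\bigl[\phi(\lambda(\mathcal{D}_1+\xi) + (1-\lambda)(\mathcal{D}_2+\xi))\bigr] \leq \lambda \phi_\sigma(\mathcal{D}_1) + (1-\lambda)\phi_\sigma(\mathcal{D}_2).
\end{equation*}
Lipschitzness is equally direct: $|\phi_\sigma(\mathcal{D}_1) - \phi_\sigma(\mathcal{D}_2)| \leq \EE|\phi(\mathcal{D}_1+\xi) - \phi(\mathcal{D}_2+\xi)| \leq \|\mathcal{D}_1 - \mathcal{D}_2\|_1$.

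The hard part is (iii). The plan is to write $\phi_\sigma(\mathcal{D}) = \int \phi(y)\varphi_\sigma(y-\mathcal{D})\,dy$ and differentiate under the integral, absorbing the derivative on the kernel via $\nabla_{\mathcal{D}} \varphi_\sigma(y-\mathcal{D}) = \sigma^{-2}(y-\mathcal{D})\varphi_\sigma(y-\mathcal{D})$. This yields the Stein-type representation
\begin{equation*}
\nabla \phi_\sigma(\mathcal{D}) = \frac{1}{\sigma^2}\,\EE_\xi\bigl[\phi(\mathcal{D}+\xi)\,\xi\bigr].
\end{equation*}
Then for any $\mathcal{D}_1, \mathcal{D}_2 \in \Delta_k$ and coordinate $i$, the 1-Lipschitzness of $\phi$ (Lemma \ref{lema:phi_Lipschitz}) gives
\begin{equation*}
|\partial_i \phi_\sigma(\mathcal{D}_1) - \partial_i \phi_\sigma(\mathcal{D}_2)| = \frac{1}{\sigma^2}\bigl|\EE[(\phi(\mathcal{D}_1+\xi)-\phi(\mathcal{D}_2+\xi))\,\xi_i]\bigr| \leq \frac{\EE|\xi_i|}{\sigma^2}\,\|\mathcal{D}_1 - \mathcal{D}_2\|_1.
\end{equation*}
Since $\EE|\xi_i| = \sigma\sqrt{2/\pi} \leq \sigma$, taking the supremum over $i$ yields $\|\nabla\phi_\sigma(\mathcal{D}_1) - \nabla\phi_\sigma(\mathcal{D}_2)\|_\infty \leq \sigma^{-1}\|\mathcal{D}_1 - \mathcal{D}_2\|_1$, which is precisely $(1/\sigma)$-smoothness w.r.t.\ $\|\cdot\|_1$. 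The technical subtlety is justifying the differentiation under the integral, but this is harmless: $\phi$ is 1-Lipschitz (hence grows at most linearly) while $\|x\|\varphi_\sigma(x)$ is integrable, so dominated convergence applies.
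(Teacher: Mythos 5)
Your proof is correct and takes essentially the same route as the paper's: part (i) via the subadditivity of $q\mapsto\max\langle q,\cdot\rangle$ and symmetry of $\xi$, part (ii) by pushing convexity and Lipschitzness through the expectation, and part (iii) via the Stein-type identity $\nabla\phi_\sigma(\mathcal{D})=\sigma^{-2}\,\EE_\xi[\phi(\mathcal{D}+\xi)\,\xi]$ followed by the coordinatewise bound $\EE|\xi_i|/\sigma^2\le 1/\sigma$. The only cosmetic differences are that you derive the gradient representation directly by differentiating under the integral (where the paper cites Lemma 1.5 of Abernethy et al.), and in (i) you make the maximizer $\tilde q(\xi)$ explicit rather than invoking subadditivity in one step; the substance of the argument is identical.
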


\begin{proof}[Proof of Proposition \ref{prop:rand_smooth}]
We prove each part separately:
\begin{enumerate}
    \item[(i)] Let $\mathcal{D}\in\Delta_{ k}$. Then, it holds that 
    \begin{align*}
        \phi_{\sigma}(\mathcal{D})-\phi(\mathcal{D})&=\mathbb{E}_{\xi\sim \varphi_\sigma}\Big[\max_{q\in Q}\langle q, {\mathcal{P}} -\mathcal{D}+\xi\rangle\Big]-\max_{q\in Q}\langle q,\mathcal{P}-\mathcal{D} \rangle \\
        & \leq \mathbb{E}_{\xi \sim \varphi_\sigma}\Big[\max_{q\in Q}\langle q, \xi\rangle \Big]\\
        &=   \sigma w( Q).
    \end{align*}
    \item[(ii)] Convexity of $\phi_{\sigma}$ follows from Lemma 2.1 in \cite{bertsekas1972stochastic} due to convexity of $\phi$. 
Let $\mathcal{D}_1, \mathcal{D}_2 \in \mathbb{R}^k$. Then,  Jensen's inequality and 1-Lipschitz continuity of $\phi$ yield
\begin{align*}
|\phi_\sigma(\mathcal{D}_1) - \phi_\sigma(\mathcal{D}_2)| =& \left|\mathbb{E}_{\xi \sim \varphi_{\sigma}}\left[\phi(\mathcal{D}_1 + \xi)\right] - \mathbb{E}_{\xi \sim \varphi_{\sigma}}\left[\phi(\mathcal{D}_2 +  \xi)\right]\right|\\
\leq &\mathbb{E}_{\xi \sim \varphi}\left[|\phi(\mathcal{D}_1 + \xi) - \phi(\mathcal{D}_2 + \xi)|\right]\\
\leq &\|\mathcal{D}_1 - \mathcal{D}_2\|_1.
\end{align*}
Therefore, $\phi_\sigma$ is 1-Lipschitz with respect to the $\|\cdot\|_1$-norm.
    \item[(iii)] 
    Let $\xi=(\xi_1, \ldots, \xi_{k})^{\intercal}$ be a multivariate normal random vector with mean $\mathbf{0}$ and covariance matrix $\sigma^2I_{k}$.
According to  Lemma 1.5 in \cite{Abernethy:2016}
    $\nabla \phi_{\sigma}(\mathcal{D})=\mathbb{E}_{\xi}[\phi(\mathcal{D}+\xi)\nabla \nu(\xi)]$, where  $\nu(\xi)=\frac{\|\xi\|_2^2}{2\sigma^2}$. It then follows that
    \begin{align*}
        \|\nabla\phi_{\sigma}(\mathcal{D}_1)-\nabla\phi_{\sigma}(\mathcal{D}_2)\|_{\infty}
        &=\|\mathbb{E}_{\xi}\big[\big(\phi(\mathcal{D}_1+\xi)-\phi(\mathcal{D}_2\xi)\big)\nabla \nu(\xi)\big] \|_{\infty}\\
        &= \frac{1}{\sigma^2}\max_{j\in[ k]}\big|\mathbb{E}_{\xi}\big[\big(\phi(\mathcal{D}+\xi)-\phi(\mathcal{D}^{\prime}+\xi)\big)\xi_i\big]\big| \\
        &\leq \sup_{x\in \mathbb{R}^{k}}|\phi(\mathcal{D}_1+x)-\phi(\mathcal{D}_2+x)|\cdot \max_{j\in[ k]}\frac{\mathbb{E}_{\xi}|\xi_j|}{\sigma^2},
    \end{align*}
    where in the last step we used  $1$-Lipschitzness of $\phi$, and the Jensen inequality.
\end{enumerate}
\end{proof}

\section{Proofs for the results in  Section \ref{sec:sol_dual}}
\label{sec:supp_A}

This section establishes proofs for Theorems \ref{thm:utility_guarantee} and \ref{thm:upper_bound_primalgap} in Section \ref{sec:sol_dual} of the main document.  Section
\ref{sec:report_noisy_max}
establishes the Laplace Mechanism as a tool for the incorporation of differential privacy in algorithms for synthetic data release and reviews some key properties of differentially private algorithms.
Finally, in Sections \ref{sec:proof_thm_1}  and \ref{proof-thm-41}, we provide the proofs for Theorem \ref{thm:utility_guarantee} and \ref{thm:upper_bound_primalgap}, respectively.


\subsection{The Laplacian mechanism and key properties of differential privacy}\label{sec:report_noisy_max}
A common method for ensuring differential privacy involves adding noise to the output of a deterministic function $f$. In Algorithm \ref{alg:DPFW}, we apply such a noise-adding technique, known as the \textit{Laplace Mechanism}, which guarantees pure differential privacy. To maintain generality, we present this mechanism in the context of a generic function $f$ and dataset $S$ (for further details, see \cite{dwork2006calibrating}). Specifically, let $\varepsilon \geq 0$ and $S$ represent a dataset. For $k \geq 1$, let $f(S)\in \RR^k$ be a deterministic function of $S$. The  following (randomized) algorithm $\mathcal{A}$ is known as the \textit{Laplace Mechanism}:
\begin{equation}
\mathcal{A}(S)= f(S)+ (Y_1, \ldots, Y_k)^\intercal\;,    \label{def:Laplace_mechanism}
\end{equation}
where $Y_1, \ldots, Y_k\overset{i.i.d}{\sim} \text{Lap}\left( \frac{\Delta(f)}{\varepsilon}\right)$ and where the quantity $ \Delta(f):= \sup_{S_1 \sim S_2 \\ S_1, S_2 \in \mathcal{Z}^n} \| f(
S_1)-f(S_2) \|_{\ell^1}$ is called \textit{sensitivity of } $f$. The supremum is taken over all neighboring datasets.
The intuition behind the Laplace Mechanism is to add noise drawn from a Laplace distribution to the output of an algorithm, with the amount of noise calibrated to the algorithm's sensitivity (i.e., how much its output changes in response to a single data point). By introducing this scaled noise, the mechanism ensures that the output remains similar whether or not a particular individual is in the dataset, thus providing differential privacy. It can be shown that $\mathcal{A}$ in \eqref{def:Laplace_mechanism} is $(\varepsilon, 0)$\, -\, ; see \cite{dwork2006calibrating}. 
Suppose that $f(S) = (f_1(S), \ldots, f_k(S))^\intercal$, where each $f_j(S) \in \mathbb{R}$ is a function of the dataset $S$, and  for all $j=1,\ldots, k$  the sensitivity is of the form $\Delta(f_j) \leq L$ for some constant $L$. Then, the   so-called Report Noisy Max (RNM) variable, defined by
\begin{equation}
    \text{RNM}(S) = \argmax\limits_{i=1, \ldots, k} \left\{ f_i(S) + Y_i \right\}\;, 
    \label{eq:RNM}
\end{equation}
where $ Y_j \overset{i.i.d}{\sim} \text{Lap}\left(\frac{L}{\varepsilon}\right)$,  is $(\varepsilon, 0)$-DP; see 
Claim 3.9 in \cite{dwork2014algorithmic}.

While the Laplace mechanism introduces differential privacy into the Differentially private Frank-Wolfe algorithm (Algorithm \ref{alg:DPFW}), the iterative nature of the algorithm raises the question how much
privacy is lost due to composition of the individual iterations. 
For Algorithm 1, 
an answer to this question  will be provided in Section \ref{sec:proof_thm_1}. It will be based on the well-known
 Advanced Composition Theorem:
\begin{teo}
[Theorem III.3, \cite{5670947}] For all $\varepsilon>0$ and $ 1> \delta>0 $ with $\log(1/\delta) \geq \varepsilon^2 T$, let $\mathcal{A}=(\mathcal{A}_1,\ldots, \mathcal{A}_T)$ be a sequence of $(\varepsilon, 0)-DP$ algorithms where each $\mathcal{A}_j $ is sequentially and adaptively chosen. Then, the whole chain $\mathcal{A}$ is $(\hat{\varepsilon}, \delta)$-DP, where $
     \hat{\varepsilon}= 4\varepsilon \sqrt{2T \log(1/\delta)}\;. $
     \label{thm:advance_composition_theorem}
\end{teo}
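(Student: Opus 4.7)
The plan is to follow the classical privacy-loss-random-variable analysis combined with a martingale concentration inequality. Fix neighboring datasets $S_1,S_2$, denote by $o=(o_1,\dots,o_T)$ a possible output of the composed mechanism $\mathcal{A}$, and define the (adaptive) privacy loss at step $j$ by
$$Z_j(o) := \log\frac{\PP[\mathcal{A}_j(S_1;\,o_{<j})=o_j]}{\PP[\mathcal{A}_j(S_2;\,o_{<j})=o_j]},$$
so that the total privacy loss of the composition under $S_1$ equals $L=\sum_{j=1}^T Z_j$. I would then invoke the standard equivalence: if $\PP_{o\sim\mathcal{A}(S_1)}[L>\hat\varepsilon]\le\delta$, then $\mathcal{A}$ is $(\hat\varepsilon,\delta)$-DP. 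This reduces the theorem to a concentration statement on $L$.

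Next I would establish two properties of each increment $Z_j$, conditioned on the history $o_{<j}$. First, by $(\varepsilon,0)$-DP of $\mathcal{A}_j$, the ratio inside the logarithm lies in $[e^{-\varepsilon},e^{\varepsilon}]$, hence $|Z_j|\le\varepsilon$ almost surely. Second, a direct computation based on $\EE_{\mathcal{A}_j(S_2)}[e^{Z_j}]=1$ combined with boundedness yields $\EE[Z_j\mid o_{<j}]\le \varepsilon(e^\varepsilon-1)/2\le\varepsilon^2$. Since $\mathcal{A}_j$ is chosen adaptively, the key observation here is that the pure $\varepsilon$-DP guarantee holds pointwise in $o_{<j}$, so both estimates carry over to the adaptive setting by conditioning on the past.

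In the third step I would apply the Azuma--Hoeffding inequality to the martingale $M_t=\sum_{j\le t}\bigl(Z_j-\EE[Z_j\mid o_{<j}]\bigr)$, whose increments are bounded in absolute value by $2\varepsilon$. This yields, for every $t>0$,
$$\PP\!\bigl[L>T\varepsilon^2+t\bigr]\le \exp\!\Bigl(-\tfrac{t^2}{8T\varepsilon^2}\Bigr).$$
Setting the right-hand side equal to $\delta$ gives $t=2\varepsilon\sqrt{2T\log(1/\delta)}$. The hypothesis $\log(1/\delta)\ge\varepsilon^2 T$ implies $T\varepsilon^2\le\varepsilon\sqrt{T\log(1/\delta)}$, so both terms on the right of the tail bound can be absorbed into $\hat\varepsilon=4\varepsilon\sqrt{2T\log(1/\delta)}$, delivering $\PP[L>\hat\varepsilon]\le\delta$ and therefore $(\hat\varepsilon,\delta)$-DP of the whole chain.

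The main obstacle I anticipate is the adaptive treatment in the second step: ensuring that the $O(\varepsilon^2)$ conditional mean bound and the Azuma-martingale structure survive when each $\mathcal{A}_j$ is selected as a measurable function of the previously released outputs $o_{<j}$. The resolution is that conditioning on the past reduces the analysis at step $j$ to the non-adaptive case of a single $(\varepsilon,0)$-DP mechanism, but tracking the constants (in particular the final factor $4$ in $\hat\varepsilon$) requires careful bookkeeping and exploits precisely the assumed regime $\log(1/\delta)\ge\varepsilon^2 T$.
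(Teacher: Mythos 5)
The paper does not prove this theorem; it is quoted verbatim as a citation of Theorem III.3 in the Dwork--Rothblum--Vadhan FOCS 2010 paper and used as a black box. Your sketch reproduces precisely the argument of that reference: decompose into the privacy-loss random variables $Z_j$, verify that the tail bound $\PP_{o\sim\mathcal{A}(S_1)}[\sum_j Z_j>\hat\varepsilon]\leq\delta$ implies $(\hat\varepsilon,\delta)$-DP, bound each increment by $\varepsilon$ a.s.\ and its conditional mean by a KL-type estimate, and finish with Azuma--Hoeffding on the centered sum. Your bookkeeping at the end is also correct: $\log(1/\delta)\geq T\varepsilon^2$ does give $T\varepsilon^2\leq\varepsilon\sqrt{T\log(1/\delta)}$, and $\varepsilon\sqrt{T\log(1/\delta)}(1+2\sqrt2)\leq 4\sqrt2\,\varepsilon\sqrt{T\log(1/\delta)}$, so the advertised $\hat\varepsilon=4\varepsilon\sqrt{2T\log(1/\delta)}$ follows.

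The one place that needs more care is the conditional-mean bound. You write $\EE[Z_j\mid o_{<j}]\leq\varepsilon(e^\varepsilon-1)/2\leq\varepsilon^2$. The standard estimate (Lemma~3.18 in Dwork--Roth, used in DRV) is $D(P\|Q)\leq\varepsilon(e^\varepsilon-1)$; the extra factor $1/2$ requires a sharper symmetrization argument and is not automatic. More importantly, the step $\varepsilon(e^\varepsilon-1)/2\leq\varepsilon^2$ (equivalently $e^\varepsilon-1\leq 2\varepsilon$) is false for $\varepsilon$ larger than about $1.26$, and the theorem's hypotheses do not explicitly cap $\varepsilon$ beyond $\log(1/\delta)\geq\varepsilon^2T$. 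The original DRV statement avoids this by keeping the additive term $T\varepsilon(e^\varepsilon-1)$ explicit in $\hat\varepsilon$; the compressed statement quoted here (and hence your proof of it) is really only valid in the small-$\varepsilon$ regime, which is the regime in which the paper applies it ($\tilde\varepsilon=\varepsilon/(4\sqrt{2T\log(1/\delta)})\ll1$). If you want the argument to be airtight as stated, either carry the $T\varepsilon(e^\varepsilon-1)$ term through and observe it is dominated under the stated hypothesis together with an added assumption $\varepsilon=O(1)$, or cite the exact form of DRV and specialize. Otherwise the approach is exactly the classical one and the constants check out.
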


We conclude this section by presenting another key property of Differential Privacy we will rely on: the \emph{post-processing}: If $\mathcal{A}$ is $(\varepsilon, \delta)$-DP and $f: \mathcal{O} \longrightarrow \mathcal{O}'$ is any deterministic function, then the composition $f \circ \mathcal{A}: \mathcal{Z}^n \longrightarrow \mathcal{O}'$ is also $(\varepsilon, \delta)$-DP. In other words, no matter what operations or transformations are applied to the outputs of a differentially private algorithm, the privacy guarantees provided by $\mathcal{A}$ will still hold; see Proposition 2.1 in \cite{dwork2014algorithmic}.

\subsection{Proof of Theorem 1: Differential privacy and accuracy of Algorithm \ref{alg:DPFW}}\label{sec:proof_thm_1}

First, we establish differential privacy of the algorithm. For this, let $S_n=\{z_1, \ldots, z_n\}\overset{i.i.d}{\sim}\mathcal{P}$ and $\mathcal{Q}=\{ q^{(1)}, \ldots, q^{|\mathcal{Q}|}\}$. Recall that $q_{t+1}=q_t+\gamma(s_t-q_t)$, where $s_t$ is defined by line 9 of Algorithm \ref{alg:DPFW}. Within each iteration of the algorithm, we use  Report Noisy Max on the vector $\left(  \langle \hat{\nabla}\psi_\alpha(q_t), q^{(1)} \rangle, \ldots,  \langle \hat{\nabla}\psi_\alpha(q_t), q^{(|\mathcal{Q}|)} \rangle \right)^\intercal $, i.e.
$$\mathcal{A}(S_n):= \left( \begin{array}{c}
     \langle \hat{\nabla}\psi_\alpha(q_t), q^{(1)} \rangle  \\
     \vdots \\
      \langle \hat{\nabla}\psi_\alpha(q_t), q^{(|\mathcal{Q}|)} \rangle
\end{array}\right)+ \left( \begin{array}{c}
     u_{1,t}  \\
     \vdots \\
     u_{|\mathcal{Q}|,t}
\end{array}\right)\; \quad \text{with}\quad u_{i,t}\overset{i.i.d}{\sim}\text{Lap}\left( \lambda \right)\;, $$
where $\hat{\nabla}\psi_\alpha(q_t) =\mathcal{P}_n  -\nabla H^\ast \left( \frac{q_t}{\alpha}\right)=\frac{1}{n}\sum_{i=1}^n e_{z_i}-\nabla H^\ast \left( \frac{q_t}{\alpha}\right)\;.$ For a neighboring dataset $S_n'=\{z'_1,\ldots,z'_n\}$ of $S_n$, the corresponding $ \hat{\nabla}\psi_\alpha'(q_t)= \frac{1}{n}\sum_{i=1}^n e_{z'_i}  -\nabla H^\ast \left( \frac{q_t}{\alpha}\right)=\mathcal{P}'_n  -\nabla H^\ast \left( \frac{q_t}{\alpha}\right).$
Due to Claim 3.9 in \cite{dwork2014algorithmic} discussed in Section \ref{sec:report_noisy_max}, for the algorithm to be $(\varepsilon, 0)$-DP,
the parameter $\lambda$ must correspond to the maximum sensitivity of a single iteration within the algorithm. We consider the $i-$th iteration, i.e. the $i$-th entry of $\mathcal{A}(S_n)$, and define its deterministic part $f_i(S_n):=  \langle q^{(i)}, \hat{\nabla}\psi_\alpha(q_t) \rangle $, while 
for a neighboring set $S_n'$, we define
$f_i(S'_n):=  \langle q^{(i)}, \hat{\nabla}\psi'_\alpha(q_t) \rangle$. The corresponding sensitivity is
\begin{align*}
    \Delta (f_i(S_n))=& \sup_{S_n \sim S_n'} \left| \langle q^{(i)}, \hat{\nabla}\psi'_\alpha(q_t) - \hat{\nabla}\psi_\alpha(q_t) \rangle\right| \\
    \leq& D_1 \sup_{S_n \sim S_n'} \left\| \frac{1}{n}\sum_{z_i \in S_n} e_{z_i} - \nabla H^\ast \left( \frac{q_t}{\alpha}\right)-  \frac{1}{n}\sum_{z'_i \in S'_n} e_{z'_i} + \nabla H^\ast \left( \frac{q_t}{\alpha}\right) \right\|_\infty =  \frac{D_1}{n}=:L\;. 
\end{align*}
 As a result, since the sensitivity of each entry $f_j$ is upper bounded by a universal constant $L$, independent of $j$, if we want the Algorithm to be $(\tilde{\varepsilon},0)\,-\,$DP, we need  $\lambda=  \frac{D_1}{n\tilde{\varepsilon}}$.
Since it performs $T$ iterations, the Advanced Composition Theorem, i.e. Theorem \ref{thm:advance_composition_theorem} in Section \ref{sec:report_noisy_max}, yields a privacy guarantee of
\[ \left(  4\tilde{\varepsilon} \sqrt{2T \log \left( \frac{1}{\delta}\right)} ,\delta \right) \text{- DP}\;.\]
Hence, in order to make the output $(\varepsilon,\delta)$-DP, $\tilde{\varepsilon}$ needs to be chosen as
\( \tilde{\varepsilon} = \varepsilon\left(4\sqrt{2T \log \left( \frac{1}{\delta}\right)} \right)^{-1}.\)
This concludes the privacy analysis.

In what follows, we establish  the accuracy bound provided by Theorem \ref{thm:utility_guarantee}.
To simplify the analysis and follow the standard proof structure for the convergence of the Frank-Wolfe algorithm, we focus on \(\mathcal{J}(q) := -\psi_\alpha(q)\), which is strongly convex. Note that this means that we replace line 9 of Algorithm \ref{alg:DPFW} with 
\begin{align*}
s_t  = \argmin\limits_{s \in \mbox{conv}({\cal Q})} \{ \langle \hat{\nabla}\mathcal{J}(q_t), s \rangle + u_{s,t} \}.
\end{align*}
(For this, note that $u_{s,t}\overset{\D}{=}-u_{s,t}$ by symmetry of the centered Laplace distribution.) 

Our goal is to bound the Frank-Wolfe gap
\begin{align*}
g_\mathcal{J}(q) & := \max_{s \in \mbox{conv}({\cal Q})}\langle \nabla \mathcal{J}(q), q-s \rangle.
\end{align*}
For this, we leverage the descent lemma (which holds due to the \( L_1 \)-smoothness of \( \psi_\alpha \))

\begin{align*}
\mathcal{J}(q_{t+1}) \leq & \mathcal{J}(q_t) + \langle \nabla \mathcal{J}(q_t), q_{t+1}-q_t\rangle + \frac{1}{2\alpha} \|q_{t+1}-q_t\|^2_\infty \quad \text{($\mathcal{J}$ is $\frac{1}{\alpha}-$smooth for Lemma \ref{smoothness_dual}})\\
\leq & \mathcal{J}(q_t) + \gamma \langle \nabla \mathcal{J}(q_t) , s_t -q_t\rangle + \frac{\gamma^2 D_\infty^2}{2\alpha}  \quad \text{(Definition of }q_{t+1}\;,\, \text{ line 10 of Algorithm }\ref{alg:DPFW}) \\
= & \mathcal{J}(q_t) + \gamma \langle \nabla \mathcal{J}(q_t) , s_t -p_t \rangle + \gamma \langle \nabla \mathcal{J}(q_t) , p_t -q_t\rangle + \frac{\gamma^2 D_\infty^2}{2\alpha}\;\\
= & \mathcal{J}(q_t) + \gamma \langle \nabla \mathcal{J}(q_t) , s_t -p_t \rangle - \gamma g_\mathcal{J}(q_t ) + \frac{\gamma^2 D_\infty^2}{2\alpha}\;,
\end{align*}
where 
\begin{align*}
    p_t & =\argmin \limits_{s \in \mbox{conv}({\cal Q})} \langle \nabla \mathcal{J}(q_t),s - q_t \rangle\;. \\
\end{align*}
As a next step, we introduce the auxiliary variable \( V_t \)  defined as follows:
\begin{align*}
V_t & := \max_{{s \in \mathcal{Q}}} u_{s,t} - \min_{{s \in \mathcal{Q}} } u_{s,t}.
\end{align*}
Since, by  definition of \( s_t \), for all \( q \in \text{conv}(\mathcal{Q}) \), for all \( t = 1, \dots, T \), and for all \( s \in \mathcal{Q} \)
\[  \langle \hat{\nabla}\mathcal{J}(q),s_t \rangle + \min_{s \in \mathcal{Q}} u_{s,t} \leq 
 \langle \hat{\nabla}\mathcal{J}(q),s_t \rangle +  u_{s,t}  \leq 
\langle \hat{\nabla}\mathcal{J}(q) ,s \rangle + \max_{s \in \mathcal{Q}} u_{s,t}  \;, \] 
it holds that 
\begin{equation}
\label{eq-st-choice}
\begin{aligned}
 \quad \langle \hat{\nabla}\mathcal{J}(q),s_t -s \rangle  & \leq V_t\;  \ \text{a.s.} \ \ \forall s\in \mbox{conv}({\cal Q}).
\end{aligned}
\end{equation}
It therefore follows that 
\begin{align*}
\mathcal{J}(q_{t+1}) \leq & \mathcal{J}(q_t)  - \gamma g_\mathcal{J}(q_t) +\gamma \langle \nabla \mathcal{J}(q_t) , s_t- p_t \rangle + \frac{\gamma^2 D_\infty^2}{2\alpha} \\
= & \mathcal{J}(q_t)  - \gamma g_\mathcal{J}(q_t) +\gamma \langle \nabla \mathcal{J}(q_t) - \hat{\nabla}\mathcal{J}(q_t) , s_t- p_t \rangle + \gamma \langle \hat{\nabla}\mathcal{J}(q_t), s_t - p_t \rangle + \frac{\gamma^2 D_\infty^2}{2\alpha} \\
\leq &  \mathcal{J}(q_t)  -\gamma g_\mathcal{J}(q_t) +\gamma \langle \nabla \mathcal{J}(q_t)-\hat{\nabla}\mathcal{J}(q_t), s_t- p_t \rangle + \gamma V_t+ \frac{\gamma^2 D_\infty^2}{2 \alpha}  .
\end{align*}

Rearranging the inequality yields

  \begin{align*}
g_\mathcal{J}(q_t)\leq &
  \frac{1}{\gamma}(\mathcal{J}(q_t) - \mathcal{J}(q_{t+1})) +\Big( \| \nabla \mathcal{J}(q_t)-\hat{\nabla}\mathcal{J}(q_t) \|_{\infty} \| s_t - p_t \|_1 \Big)+  V_t+  \frac{\gamma D_\infty^2  }{2\alpha},
  \end{align*}
  which implies
\begin{align*}
\sum_{t=0} ^{T-1} g_\mathcal{J}(q_t)\leq &
  \frac{1}{\gamma} \sum_{t=0} ^{T-1} (\mathcal{J}(q_t) - \mathcal{J}(q_{t+1}))+ D_1 \sum_{t=0}^{T-1}\| \nabla \mathcal{J}(q_t)-\hat{\nabla}\mathcal{J}(q_t) \|_{\infty}+ \sum_{t=0}^{T-1}V_t+  \frac{\gamma D_\infty^2 T}{2\alpha} \;\\
  = &
  \frac{1}{\gamma} \mathcal{J}(q_0)-\mathcal{J}(q_T)+ D_1 \sum_{t=0}^{T-1}\| \nabla \mathcal{J}(q_t)-\hat{\nabla}\mathcal{J}(q_t) \|_{\infty}+ \sum_{t=0}^{T-1}V_t+  \frac{\gamma D_\infty^2 T}{2\alpha} \;. \\
\end{align*}
Moreover, note that
\begin{align*}
\|\hat{\nabla}\mathcal{J}(q_t)-\nabla \mathcal{J}(q_t)\|_\infty=&\left \| \frac{1}{n}\sum_{i=1}^n e_{z_i} - \nabla H^\ast\left(\frac{q_t}{\alpha} \right) - \EE_{\mathbf{z}\sim \mathcal{P}}[e_\mathbf{z}] +\nabla H^\ast\left(\frac{q_t}{\alpha} \right)\right\|_\infty =\left\| \frac{1}{n}\sum_{i=1}^n e_{z_i}- \EE_{\mathbf{z}\sim \mathcal{P}}[e_\mathbf{z}] \right\|_\infty \;.
\end{align*}
 Further, since $\cal J$ is $2$-Lipschitz due to Lemma \ref{smoothness_dual},  $
{\cal J}(q_0) - {\cal J}(q_t) \leq 2 \| q_0 -q_t \| \leq 2 D_\infty  $, and we can simplify the above expression  to
\begin{align*}
   \frac{1}{T}\sum_{t=0} ^{T-1} g_\mathcal{J}(q_t)\leq &
      \frac{2 D_\infty }{\gamma T}+ D_1 \left\| \frac{1}{n}\sum_{i=1}^n e_{z_i}- \EE_{\mathbf{z}\sim \mathcal{P}}[e_\mathbf{z}] \right\|_\infty + \frac{1}{T}\sum_{t=0}^{T-1}V_t+  \frac{D^2_\infty \gamma}{2\alpha}\;. 
\end{align*}
Considering the expectation of the previous expression, we obtain
\begin{align*}
 \EE \left[ \frac{1}{T}\sum_{t=0} ^{T-1} g_\mathcal{J}(q_t) \right]\leq   \frac{2 D_\infty }{\gamma T} +  D_1 \EE\Big\| \frac{1}{n}\sum_{i=1}^n e_{z_i}- \EE_{\mathbf{z}\sim \mathcal{P}}[e_\mathbf{z}] \Big\|_\infty+ \frac{1}{T}\sum_{t=0}^{T-1}\EE [V_t] +  \frac{1}{2\alpha} D^2_\infty \gamma  \;. 
\end{align*}
It follows from Corollary 2.3 in \cite{PMID:20224805}
that  $\EE\left[\| \frac{1}{n}\sum_{i=1}^n e_{z_i}- \EE_{\mathbf{z}\sim \mathcal{P}}[e_\mathbf{z}] \|_\infty \right]  \leq \mathcal{O} \left( \sqrt{\log k/n }\right)$. Additionally, according to Lemma \ref{lemma:properties_laplacians}, for all $t = 0, \ldots, T-1$, it holds that $\EE[V_t] \leq 2 \lambda \log(2|{\cal Q}|)$. We conclude that
\begin{align*}
\EE \left[ \frac{1}{T}\sum_{t=0} ^{T-1} g_\mathcal{J}(q_t) \right]\leq 
\frac{2 D_\infty }{\gamma T} +  D_1 {\cal O} \left( \sqrt{\frac{\log k}{n}} \right)  +2 \lambda \log 2 |{\cal Q}|+  \frac{1}{2\alpha} D^2_\infty \gamma  \;.
\end{align*}

Recall that $\lambda = \frac{D_1}{n\tilde{\varepsilon}}=\frac{4\sqrt{2} \sqrt{T \log(1/\delta)}}{\varepsilon n}$. Minimizing the right hand side of the above inequality with respect to $\gamma$, we find that the minimizer $\gamma^\ast = 2 \sqrt{\frac{\alpha}{T D_\infty }}$ yields
\begin{align*}
\EE \left[ \frac{1}{T}\sum_{t=0} ^{T-1} g_\mathcal{J}(q_t) \right]\leq &
2 D_\infty^{3/2} \sqrt{\frac{1}{\alpha T}} +  D_1 {\cal O} \left( \sqrt{\frac{\log k}{n}} \right)   + \frac{8\sqrt{2} \log(2|{\cal Q}| ) \sqrt{T \log(1/\delta)}}{\varepsilon n} \;.
\end{align*}

Finally, minimizing the right-hand side of the above inequality with respect to $T$ yields $T^{\ast} = \frac{ D_\infty^{3/2} \varepsilon n}{\sqrt{32\alpha \log(1/\delta) }\log (2 |\mathcal{Q}|)}$ as minimizer and 
\[  \EE[g_ {\mathcal{L}}(q_\alpha^\text{out}))] = \mathcal{O}\left( \frac{D_\infty^{3/4} \log^{1/4} (1/\delta) \log^{1/2}(|\mathcal{Q}|)}{\alpha^{1/4} (\varepsilon n)^{1/2}}  +\frac{ D_1 \log^{1/2}(k)}{\sqrt{n}}\right) \;,\]
where $q_\alpha^{\text{out}}$ denotes the output of the Frank-Wolfe Algorithm.

\subsection{Proof of Theorem \ref{thm:upper_bound_primalgap}: Upper bound on primal gap}
\label{proof-thm-41}

This section is structured as follows. We begin by proving a key technical result (Lemma \ref{KKT lemma}), which forms the foundation for the proof of Theorem \ref{thm:upper_bound_primalgap}. Theorem \ref{thm:upper_bound_primalgap}  then establishes an upper bound for 
the accuracy of 
\begin{align*}
&\mathcal{P}_\alpha^{\text{priv}}=\argmin\limits_{ \mathcal{D} \in \Delta_{k}} \left( \langle q_\alpha^{\text{out}}, \mathcal{P} - \mathcal{D}\rangle+\alpha H(\mathcal{D}) \right)\;.\\
\end{align*}
Lastly, we conclude by presenting the proof of Theorem \ref{MAINRESULTS}.

\begin{lema} \label{proof-close-form}
Consider the negative entropy function $H$ on $\Delta_k$, $H(\mathcal{D})=\sum_{1 \leq i \leq k} \mathcal{D}(i) \log (\mathcal{D}(i) )$,  and the Bregman divergence $D_{H}(\mathcal{D}, \mathcal{D}')$. Moreover, suppose that $ \mathcal{D}'\in \Delta_k $ has only strictly positive coefficients. For $A\in \mathbb{R}$, $B > 0$, $C \geq 0 $ and vectors $g \in \mathbb{R}^{k}$, the following problem 
\begin{align*}
\min_{\mathcal{D} \in {\Delta_k}} \Psi_{A,B,C}(\mathcal{D}), \text{where} \ \Psi_{A,B,C}(\mathcal{D}): =  A \langle g  , \mathcal{D} \rangle + B H(\mathcal{D}) + C D_{H}(\mathcal{D},\mathcal{D}') 
\end{align*}
has an unique solution $\mathcal{D}_\ast \in \Delta_k $ and 
\[ \mathcal{D}_\ast \propto \exp \Big( \frac{- A g_{i} - B + C \log \mathcal{D}'(i)}{B+C} \Big).
 \]
\end{lema}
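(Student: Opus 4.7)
The plan is to proceed in three steps: existence and uniqueness via strong convexity of $\Psi_{A,B,C}$, verification that the minimizer lies in the relative interior of $\Delta_k$, and then a direct Lagrange-multiplier computation to read off the closed form.

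For existence and uniqueness, I would observe that $\Psi_{A,B,C}$ is the sum of a linear term $A\langle g,\mathcal{D}\rangle$, the term $B H(\mathcal{D})$ with $B>0$ (and $H$ is $1$-strongly convex on $\Delta_k$ with respect to $\|\cdot\|_1$), and $C\,D_{H}(\cdot,\mathcal{D}')$, which is convex (in fact strongly convex by Lemma \ref{lem:strong_conv_div}). Hence $\Psi_{A,B,C}$ is strongly convex on the compact convex set $\Delta_k$, so a unique minimizer $\mathcal{D}_\ast$ exists. To rule out boundary points, I would use the assumption that $\mathcal{D}'$ has strictly positive coordinates so that $H$ and $D_H(\cdot,\mathcal{D}')$ are smooth on the relative interior, and the partial derivative $\partial_{\mathcal{D}(i)}\Psi_{A,B,C}$ contains the summand $(B+C)\log\mathcal{D}(i)$, which tends to $-\infty$ as $\mathcal{D}(i)\downarrow 0$. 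This forces the objective to decrease along any feasible direction that increases a coordinate from zero, so $\mathcal{D}_\ast(i)>0$ for every $i$.

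With the minimizer in the interior, only the equality constraint $\sum_i \mathcal{D}(i)=1$ is active, so I introduce a multiplier $\mu\in\RR$ and set the gradient of the Lagrangian
\begin{equation*}
\mathcal{L}(\mathcal{D},\mu) = A\langle g,\mathcal{D}\rangle + B H(\mathcal{D}) + C D_H(\mathcal{D},\mathcal{D}') + \mu\Bigl(\sum_{i=1}^k \mathcal{D}(i) - 1\Bigr)
\end{equation*}
to zero. Using $\nabla H(\mathcal{D})_i = \log\mathcal{D}(i)+1$ together with the identity $D_H(\mathcal{D},\mathcal{D}') = H(\mathcal{D}) - H(\mathcal{D}') - \langle \nabla H(\mathcal{D}'),\mathcal{D}-\mathcal{D}'\rangle$, which yields $\partial_{\mathcal{D}(i)} D_H(\mathcal{D},\mathcal{D}') = \log\mathcal{D}(i) - \log\mathcal{D}'(i)$, the stationarity condition becomes
\begin{equation*}
A g_i + B\bigl(\log\mathcal{D}_\ast(i)+1\bigr) + C\bigl(\log\mathcal{D}_\ast(i) - \log\mathcal{D}'(i)\bigr) + \mu = 0.
\end{equation*}
Solving for $\log\mathcal{D}_\ast(i)$, exponentiating, and absorbing the $\mu$-dependent factor into the proportionality constant fixed by $\sum_i \mathcal{D}_\ast(i) = 1$ yields the claimed formula.

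I do not anticipate a substantive obstacle. The one subtle point is the boundary analysis: the lemma's hypothesis that $\mathcal{D}'$ has strictly positive coordinates is exactly what keeps $D_H(\cdot,\mathcal{D}')$ finite and differentiable on the interior and lets the logarithmic blow-up of $\nabla H$ rule out boundary optima on every coordinate; once this is secured, the remaining computation is a mechanical application of first-order optimality.
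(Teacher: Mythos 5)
Your proof is correct, but it takes a somewhat different route than the paper's. You rule out the boundary explicitly (via the blow-up of $(B+C)\log\mathcal{D}(i)$ as $\mathcal{D}(i)\downarrow 0$), introduce a Lagrange multiplier $\mu$ for the constraint $\sum_i\mathcal{D}(i)=1$, and read off the closed form with $e^{-\mu/(B+C)}$ absorbed into the proportionality constant. The paper instead first solves the unconstrained stationarity $A g + B\nabla H(\mathcal{D}) + C[\nabla H(\mathcal{D})-\nabla H(\mathcal{D}')]=0$ over the positive orthant $(\mathbb{R}_+)^k$ (no multiplier appears, so the exponent has no $\mu$), and then establishes that the $\ell_1$-normalization of that point is the constrained minimizer by a scaling argument: it derives the identity $\Psi_{A,B,C}(\lambda\mathcal{P}) = \lambda\Psi_{A,B,C}(\mathcal{P}) + (B+C)\lambda\log\lambda\,\sum_i\mathcal{P}(i) + C(\lambda-1)\bigl(H(\mathcal{D}')-\langle\nabla H(\mathcal{D}'),\mathcal{D}'\rangle\bigr)$, observes that on the slice $\{\mathcal{P}:\lambda_\ast\mathcal{P}\in\Delta_k\}$ the extra terms are constant in $\mathcal{P}$, and concludes that minimizing over the slice is equivalent to minimizing over the orthant. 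Your Lagrangian approach is the more standard and arguably more transparent one; it does, however, require the explicit boundary-exclusion step (which you supply), whereas the paper's scaling trick sidesteps the simplex geometry entirely by reducing to an unconstrained problem. Both are sound and yield the same closed form.
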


\begin{proof}[Proof of Lemma \ref{proof-close-form}]
Note that the first summand in the representation of $\Psi_{A,B, C}$ is linear, while the second summand corresponds to the negative entropy function which is strongly convex according to Example 2.5 in \cite{shalev2012online}.
Moreover,  recall that Lemma  \ref{lem:strong_conv_div} establishes strong convexity   $D_{H}(\cdot ,D')$. As a result,   $\Psi_{A,B,C}$ is strongly convex, as well. Therefore, there is only an unique minimum  $\mathcal{D}$ in $(\mathbb{R}_{+})^{k}$ which is also the stationary point i.e. 
\begin{equation}
\label{stationnary-mirror-descent}
\nabla \Psi_{A,B,C}(\mathcal{D}) = A  g + B \nabla H(\mathcal{D}) + C [\nabla H(\mathcal{D}) - \nabla H(\mathcal{D}') ] = 0.
\end{equation}  
We evaluate  equation \eqref{stationnary-mirror-descent} in its vector form. For all $1 \leq i \leq k$,
\begin{align*}
A g_i + (B+C ) \Big[ 1 + \log \mathcal{D}(i) \Big] - C \Big( 1 + \log \mathcal{D}'(i) \Big)  = 0 ,  
\end{align*}
which yields 
\begin{align*}
(B+C) \log \mathcal{D}(i) & = - A g_{i} - B + C \log \mathcal{D}'(i) \\
\mathcal{D}(i) & = \exp \left( \frac{- A g_{i} - B + C \log \mathcal{D}'(i)}{B+C} \right).
\end{align*}
For the optimality in $\Delta_k$, we will study $\Psi(\lambda \mathcal{P})$ for any $\lambda \in \mathbb{R}, \mathcal{P} \in (\mathbb{R}_{+})^{k}$. We notice that $H(\lambda \mathcal{P}) = \lambda H(\mathcal{P}) + \lambda \log(\lambda) \sum_i \mathcal{P}(i)$, therefore 
\begin{align*}
\Psi_{A,B,C}(\lambda \mathcal{P}) & =   A\langle g, \lambda \mathcal{P} \rangle  + B H(\lambda \mathcal{P})  + C [ H(\lambda \mathcal{P})  - H(\mathcal{D}') -  \langle \nabla H(\mathcal{D}'), \lambda \mathcal{P} - \mathcal{D}' \rangle ] \\
& =  \lambda \Big[A \langle g,\mathcal{P} \rangle + B H(\mathcal{P})  + C [ H( \mathcal{P})  - H(\mathcal{D}') -  \langle \nabla H(\mathcal{D}'), \mathcal{P} - \mathcal{D}' \rangle ] \Big] \\
& + B \lambda \log(\lambda) \sum_i \mathcal{P}(i) + C \lambda \log(\lambda) \sum_i \mathcal{P}(i) + C (\lambda -1) H(\mathcal{D}')  - C (\lambda - 1) \langle \nabla H(\mathcal{D}'),  \mathcal{D}' \rangle \\
& = \lambda \Psi_{A,B,C}(\mathcal{P}) + ( B+C) \lambda \log(\lambda) \Big(\sum_i \mathcal{P}(i) \Big)  +  C (\lambda -1)  \Big( H(\mathcal{D}') - \langle \nabla H(\mathcal{D}'),  \mathcal{D}' \rangle \Big)
\end{align*}
We notice that the second term scales with $\sum_i \mathcal{P}(i)$ and the last one is independent of $\mathcal{P}$. Now we consider $\mathcal{D}_\ast \in \Delta_k $
\[ \mathcal{D}_{\ast}(i):= \frac{\mathcal{D}(i) }{\sum_{j=1}^{k} \mathcal{D}(j)}, \quad  \lambda_\ast : = \frac{1}{\sum_{j=1}^{k} \mathcal{D}(j)} \]
We know that for all $\mathcal{P} \in \frac{1}{\lambda_\ast} \Delta_k$ (i.e. $\lambda_\ast \mathcal{P} \in \Delta_k $)
\begin{align*}
\Psi_{A,B,C}(\lambda_\ast \mathcal{P}) - \Psi_{A,B,C}(\mathcal{D}_\ast) & = \Psi_{A,B,C}(\lambda_\ast \mathcal{P}) - \Psi_{A,B,C}(\lambda_\ast \mathcal{D}) \\
& = \lambda_\ast \Big (\Psi_{A,B,C}(\mathcal{P}) - \Psi_{A,B,C}(\mathcal{D}) \Big) \geq 0 
\end{align*}
with equality only if $\mathcal{P}$ = $\mathcal{D}$.
\end{proof}

\begin{cor} 
 \label{KKT lemma} 
Given the assumptions of Theorem \ref{proof-close-form}, consider
 \begin{align}
     p= \argmin\limits_{ \mathcal{D} \in \Delta_{k}} \left[\langle q, - \mathcal{D} \rangle + \alpha H(\mathcal{D}) \right] \;.
     \label{eq:optimal_p}
 \end{align} 
 Then, it holds that
 \begin{enumerate}
     \item $\langle q, p\rangle = \alpha H(p) +\alpha H^\ast \left( \frac{q}{\alpha}\right)\;,$
     \item $p=\nabla H^\ast \left( \frac{q}{\alpha}\right)\;. $
 \end{enumerate}
\end{cor}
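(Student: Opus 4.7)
The plan is to obtain the corollary as a direct specialization of Lemma \ref{proof-close-form}. In the notation of that lemma, the objective $\langle -q, \mathcal{D}\rangle + \alpha H(\mathcal{D})$ corresponds to the choice $A=1$, $g=-q$, $B=\alpha$, $C=0$ (so the Bregman term vanishes and the requirement on $\mathcal{D}'$ is vacuous). Applying the closed form produced by Lemma \ref{proof-close-form} then yields
\begin{equation*}
p \propto \exp\!\left(\frac{-Ag_i - B}{B}\right) = \exp\!\left(\frac{q_i}{\alpha} - 1\right),
\end{equation*}
and after normalization over $\Delta_k$ the $-1$ in the exponent cancels, so
\begin{equation*}
p(i) = \frac{\exp(q_i/\alpha)}{\sum_{j=1}^k \exp(q_j/\alpha)}.
\end{equation*}

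To establish assertion 2, I would simply differentiate the log-sum-exp: since $H^\ast(y) = \log\sum_j e^{y_j}$, one has $[\nabla H^\ast(y)]_i = e^{y_i}/\sum_j e^{y_j}$. Taking $y = q/\alpha$ gives exactly the formula above, so $p = \nabla H^\ast(q/\alpha)$.

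For assertion 1, I would substitute the explicit expression for $p$ into $H(p)$. Using $\log p(i) = q_i/\alpha - H^\ast(q/\alpha)$ and the fact that $\sum_i p(i) = 1$, a short computation gives
\begin{equation*}
\alpha H(p) = \alpha\sum_{i} p(i)\!\left(\tfrac{q_i}{\alpha} - H^\ast\!\left(\tfrac{q}{\alpha}\right)\right) = \langle q, p\rangle - \alpha H^\ast\!\left(\tfrac{q}{\alpha}\right),
\end{equation*}
which rearranges to the claimed identity. Alternatively, one can recognize this as the standard Fenchel equality: at the maximizer of $\langle \tfrac{q}{\alpha}, \mathcal{D}\rangle - H(\mathcal{D})$ one has $H^\ast(q/\alpha) = \langle q/\alpha, p\rangle - H(p)$, which is the same statement up to multiplying by $\alpha$.

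There is no real obstacle here; the proof is entirely mechanical once Lemma \ref{proof-close-form} is in hand. The only point requiring minor care is checking that the additive $-1$ in the exponent disappears under the simplex normalization, and that this normalized softmax coincides with the gradient of the log-sum-exp, but both are immediate from the explicit formulas.
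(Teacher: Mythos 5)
Your proof is correct, and it takes a somewhat more direct route than the paper. For assertion~2 the paper does not read the closed form straight off Lemma~\ref{proof-close-form}: it invokes the lemma only to justify that every component of $p$ is strictly positive, then sets up the Lagrangian of~\eqref{eq:optimal_p}, solves the stationarity condition to get $\nabla H(p) = q/\alpha + \mu\mathbf{1}/\alpha$, and finally checks that the equality multiplier $\mu$ is absorbed by the normalization of $\nabla H^\ast$. You skip the KKT step entirely by specializing the softmax formula in Lemma~\ref{proof-close-form} to $A=1$, $g=-q$, $B=\alpha$, $C=0$ (where indeed the $\mathcal{D}'$ hypothesis is vacuous) and matching it against the explicit gradient of log-sum-exp; this is cleaner and makes the normalization-invariance argument unnecessary. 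For assertion~1 the paper argues abstractly: the optimal value of the minimization is $-\alpha\max_{\mathcal{D}\in\Delta_k}\bigl[\langle q/\alpha,\mathcal{D}\rangle - H(\mathcal{D})\bigr] = -\alpha H^\ast(q/\alpha)$ by the very definition of the Fenchel conjugate (restricted to $\Delta_k$), which at the unique argmin rearranges to the claim. Your explicit computation with $\log p(i) = q_i/\alpha - H^\ast(q/\alpha)$ reaches the same identity and is logically equivalent; you even flag the Fenchel-equality reading as an alternative, which is exactly the paper's route. The only thing worth adding to your write-up is the observation that strong convexity of the objective guarantees the argmin is unique so that $p$ is well defined; the paper states this explicitly, and it is needed to speak of ``the'' minimizer.
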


 \begin{proof}[Proof of Corollary \ref{KKT lemma}] 
 \begin{enumerate}
     \item  Note that \begin{align*}
    \min\limits_{ \mathcal{D} \in \Delta_{k}} \left[\langle q, - \mathcal{D} \rangle + \alpha H(\mathcal{D}) \right] = -\alpha \max\limits_{ \mathcal{D} \in \Delta_{k}} \left[\langle \frac{q}{\alpha},  \mathcal{D} \rangle - H(\mathcal{D}) \right]=-\alpha  H^\ast \left( \frac{q}{\alpha}\right)\;.
\end{align*}
By definition, the argmin \( p=(p_1, \ldots, p_k) \) is the value that attains the minimum, and due to strong convexity, this value is unique. 
\item Using the properties of the Fenchel conjugate, for all $\mathcal{D}\in \Delta_{k}$ it holds  $\mathcal{D}= \nabla H^\ast \left(  \nabla H (\mathcal{D})\right)\;.$ Therefore, we only have to compute $\nabla H$, evaluated at the optimum $p$. Since $p$ is by definition the optimum, it must satisfy the  Karush–Kuhn–Tucker (KKT) conditions for \eqref{eq:optimal_p}. 
Recall that $\mathcal{D}=(\D(1), \D(2), \ldots, \D(k))$.
The Lagrangian associated to \eqref{eq:optimal_p} is
$$L(\mathcal{D}, \lambda_1, \ldots, \lambda_{k}, \mu)= -\sum_{j=1}^{k} q_j \mathcal{D}(j) + \alpha H(\mathcal{D})+  \mu \left(  1- \sum_{j=1}^{k} \mathcal{D}(j) \right)  -\sum_{j=1}^{k}\lambda_j \mathcal{D}(j)\;.$$

From Lemma \ref{proof-close-form}, for all $j=1, \ldots, k$, $p_j$ are strictly positive and $\lambda_j=0$). Therefore, $\nabla_\mathcal{D}L(\mathcal{D}, \mu):=\nabla_\mathcal{D}L(\mathcal{D}, 0,\ldots, 0, \mu)$ evaluated at the optimum $p$, 
\begin{align*}
    \nabla_\D L (p, \mu)= -q +\alpha \nabla H(p) -\mu \mathbf{1}=0\;,
\end{align*}
where $\mathbf{1}=[1,\ldots, 1]$.  Therefore, $\nabla H(p)= \frac{q}{\alpha } + \frac{\mu \mathbf{1}}{\alpha}\;,$ and $p=\nabla H^\ast \left(  \nabla H (p)\right) = \nabla H^\ast \left( \frac{q}{\alpha } + \frac{\mu \mathbf{1}}{\alpha}  \right)\;.$ The last step is to show that $\nabla H^\ast \left( \frac{q}{\alpha } + \frac{\mu \mathbf{1}}{\alpha}  \right)= \nabla H^\ast \left( \frac{q}{\alpha }  \right)\;.$ The Fenchel conjugate of the negative entropy is 
$H^\ast(y_1, \ldots, y_{k}) = \log \left( \sum_{j=1}^{k} e^{y_j}\right)\;,$ and its gradient $\nabla H^\ast(y_1, \ldots, y_{k})=\left( \sum_{j=1}^{k} e^{y_j}\right)^{-1}  \left(e^{y_1}, \ldots, e^{y_{k}} \right)^\intercal\;.$ Evaluating at  $\frac{q}{\alpha } + \frac{\mu \mathbf{1}}{\alpha}$ we get  $\nabla H^\ast(\frac{q_1}{\alpha}+\frac{\mu}{\alpha}, \ldots, \frac{q_{k}}{\alpha}+\frac{\mu}{\alpha})=\left( \sum_{j=1}^{k} e^{q_j/\alpha}\right)^{-1} e^{-\mu/\alpha}\cdot e^{+\mu/\alpha} \left(e^{q_1/\alpha }, \ldots, e^{q_{k}/\alpha} \right)^\intercal= \nabla H^\ast \left( \frac{q}{\alpha }  \right)\;.$
 \end{enumerate}
 \end{proof}

\begin{proof}[Proof of Theorem \ref{thm:upper_bound_primalgap}]
 By definition  $\textbf{Gap}_{(P)} \left( \mathcal{P}_\alpha^{\text{priv}} \right):= \phi(\mathcal{P}_\alpha^{\text{priv}})-\phi^\ast$, where $\phi^\ast=\min\limits_{\mathcal{D}\in \Delta_{k}} \phi(\mathcal{D})\;.$ Note that for all $\mathcal{D}\in \Delta_{k}$, it holds that $\phi_\alpha(\mathcal{D})=\phi(\mathcal{D})+\alpha H(\mathcal{D})$, and since $H(\mathcal{D})\leq 0$, $\phi^\ast_\alpha=\min\limits_{\mathcal{D}\in \Delta_{k}} \phi_\alpha (\mathcal{D}) \leq \min\limits_{\mathcal{D}\in \Delta_{k}} \phi (\mathcal{D})=\phi^\ast\;.$ Therefore, 
\begin{align*}
    \textbf{Gap}_{(P)}(\mathcal{P}_\alpha^{\text{priv}}) = & \phi(\mathcal{P}_\alpha^{\text{priv}})-\phi^\ast \leq \phi_\alpha(\mathcal{P}_\alpha^{\text{priv}})-\alpha H(\mathcal{P}_\alpha^{\text{priv}}) -\phi^\ast_\alpha \;.
\end{align*}
Next, the positive entropy $-H(\mathcal{D})$ attains its maximal value $\log (k)$ for $\mathcal{D}$  with  $\mathcal{D}(z)=\frac{1}{k}\, \text{for all }z\in\mathcal{Z}\;.$

It  follows that
\begin{align*}
\textbf{Gap}_{(P)}(\mathcal{P}_\alpha^{\text{priv}})  \leq &\phi_\alpha(\mathcal{P}_\alpha^{\text{priv}}) -\phi^\ast_\alpha -\alpha H(\mathcal{P}_\alpha^{\text{priv}})\\
\leq & \phi_\alpha(\mathcal{P}_\alpha^{\text{priv}})-\phi_\alpha^\ast  +\alpha \log (k)  \quad  \\
= &  \phi_\alpha(\mathcal{P}_\alpha^{\text{priv}}) -\psi_\alpha^\ast + \alpha \log(k) \quad \text{(strong duality)}\\
= &  \phi_\alpha(\mathcal{P}_\alpha^{\text{priv}})- \psi_\alpha(q_\alpha^{\text{out}}) + \psi_\alpha(q_\alpha^{\text{out}}) -\psi_\alpha^\ast + \alpha \log(k) \\
 \leq &  \phi_\alpha(\mathcal{P}_\alpha^{\text{priv}})- \psi_\alpha(q_\alpha^{\text{out}}) + \alpha \log(k) \;.\\
 \end{align*}
It remains to deal with the term $\phi_\alpha(\mathcal{P}_\alpha^{\text{priv}})- \psi_\alpha(q_\alpha^{\text{out}})$. Note that we have analytical expressions for both $\phi_\alpha(\cdot)$ and $\psi_\alpha (\cdot)\;.$ For all $\D \in \Delta_{k}$ and $q\in \text{conv}(\mathcal{Q})$,
 \begin{align*}
     \phi_\alpha(\D)= \langle \tilde{q}_{\D}, \mathcal{P}-\mathcal{D}\rangle+\alpha H(\mathcal{D})\quad \text{and}\quad \psi_\alpha(q)=\langle q, \mathcal{P}\rangle-\alpha H^\ast \left( \frac{q}{\alpha}\right)\;,
 \end{align*}
 where $\tilde{q}_\mathcal{D}$ is the vector $\tilde{q}_\mathcal{D}=\argmax\limits_{\D\in\Delta_{k}} \left( \langle q, \mathcal{P}-\mathcal{D}\rangle+\alpha H(\mathcal{D}) \right)$. Then, 
 \begin{align*}
     \phi_\alpha(\mathcal{P}_\alpha^{\text{priv}})- \psi_\alpha(q_\alpha^{\text{out}})=& \langle \tilde{q}_{\mathcal{P}_\alpha^{\text{priv}}}, \mathcal{P}-\mathcal{P}_\alpha^{\text{priv}}\rangle+\alpha H(\mathcal{P}_\alpha^{\text{priv}})-\langle q_\alpha^{\text{out}}, \mathcal{P}\rangle+\alpha H^\ast \left( \frac{q_\alpha^{\text{out}}}{\alpha}\right)\\
     =&\langle \tilde{q}_{\mathcal{P}_\alpha^{\text{priv}}}, \mathcal{P}-\mathcal{P}_\alpha^{\text{priv}}\rangle-\langle q_\alpha^{\text{out}}, \mathcal{P}\rangle+ \langle \mathcal{P}_\alpha^{\text{priv}},q_\alpha^{\text{out}} \rangle\quad (\text{Corollary }\ref{KKT lemma})\\
  =&\langle \tilde{q}_{\mathcal{P}_\alpha^{\text{priv}}} - q_\alpha^{\text{out}}, \mathcal{P}-\mathcal{P}_\alpha^{\text{priv}}\rangle  \\
  =&\left\langle \tilde{q}_{\mathcal{P}_\alpha^{\text{priv}}} - q_\alpha^{\text{out}}, \mathcal{P} -\nabla H^\ast\left( \frac{q_\alpha^{\text{out}}}{\alpha}\right)\right\rangle  \quad \text{(Corollary \ref{KKT lemma})}\\
  =&\left\langle \tilde{q}_{\mathcal{P}_\alpha^{\text{priv}}} - q_\alpha^{\text{out}}, \nabla \psi_\alpha (q_\alpha^{\text{out}})\right\rangle \\
  \leq& \max_{q\in \text{conv}(\mathcal{Q})} \left\langle q - q_\alpha^{\text{out}}, \nabla \psi_\alpha (q_\alpha^{\text{out}})\right\rangle=:g_{\psi_\alpha}\left(q_\alpha^{\text{out}} \right)\;. 
\end{align*} 
\end{proof}
\begin{proof}[Proof of Theorem \ref{MAINRESULTS}]
A combination of the utility guarantee from Theorem \ref{thm:utility_guarantee} and the upper bound of the primal gap from Theorem \ref{thm:upper_bound_primalgap} yields 
\begin{align*}
      \EE \left[ \textbf{Gap}_{(P)}(\mathcal{P}_\alpha^{\text{priv}})\right]=\mathcal{O}\left( \frac{ \log^{1/4} (1/\delta) \log^{1/2}(|\mathcal{Q}|)}{\alpha^{1/4} (\varepsilon n)^{1/2}} + D_1\sqrt{\frac{ \log(k)}{n}}\right) +\alpha \log k\;.  
   \end{align*}
Optimizing the right hand side of the previous expression with respect to $\alpha$ yields the desired result. Finally, note that the $\textbf{Gap}_{(P)}(\mathcal{P}^{\text{priv}}_{\alpha^\ast})=\max\limits_{q\in \mathcal{Q}}\langle q,\mathcal{P}- \mathcal{P}^\text{priv}\rangle$ since $\phi^\ast =0\;.$ 
\end{proof}

\section{Proofs for the results in Section \ref{sec:primal}}\label{sec:proofs_5}
This section introduces the concept of Rényi differential privacy, establishes related key technical results from optimization theory and provides proofs for  Theorems  \ref{thm:RS-upperbond1}
and \ref{thm:smoothing}

\subsection{The Gaussian Mechanism  and key properties of Rényi Differential Privacy}

 The  proof of Theorem
\ref{thm:smoothing}
 builds upon the definition of Rényi Differential Privacy, which we introduce here along with some of its key properties. We also describe the widely-used Gaussian mechanism, central to the privacy analysis of our algorithm. For a more comprehensive discussion of these topics, we refer the reader to \cite{Mironov2017RnyiDP} and \cite{dwork2014algorithmic}

\begin{defin}
For two probability distributions $f$ and $g$ supported over $\mathcal{R}$, the Rényi divergence of order $\beta>1$ is defined as 
\[ 
D_\beta(f\| g) :=\frac{1}{\beta-1} \log \mathrm{E}_{x \sim g}\left[\left(\frac{f(x)}{g(x)}\right)^\beta\right]. \]
\end{defin}

\begin{defin}
$((\beta, \epsilon)$-RDP). A randomized mechanism $f: \mathcal{Z}^n \mapsto$ $\mathcal{R}$ is said to be $\epsilon$-Rényi differentially private of order $\beta$, (denoted as $(\beta, \epsilon)$-RDP), if
\begin{equation}
\label{def:reny-dp}
    \sup_ { S_1 \sim S_2 } D_\beta\left(f(S_1) \| f\left(S_2\right)\right) \leq \epsilon \;.
\end{equation}
\end{defin}

Analogous to the Laplacian Mechanism, the so-called Gaussian Mechanism can be defined on the basis of a function 
$f:\mathcal{Z}^n\longrightarrow \RR^k$
and a dataset $S$
by 
\begin{equation}
    \label{eq:Gaussian_mechanism}
    \mathcal{A}(S)= f(S) + (Y_1, \ldots, Y_k)^\intercal\;,
\end{equation}
where $Y_j \overset{i.i.d}{\sim } \mathcal{N}\left(0,2\log(1.25/\delta)\Delta_2^2(f)/\varepsilon^2 \right)\;$, where the quantity 
        $\Delta_2(f) = \max_{S_1 \sim S_2} \| f(S_1)-f(S_2)\|_2\;$
        is called {\em $\ell^2-$sensitivity of $f$}, and where the supremum is taken over all neighboring datasets.

For any choice of $\varepsilon$ and $\delta$, the Gaussian Mechanism satisfies $(\varepsilon, \delta)$-differential privacy (see Appendix A of \cite{dwork2014algorithmic}). Moreover, it can be shown  that the Gaussian Mechanism achieves privacy in the sense of Rényi Differential Privacy (RDP). More specifically, for any $\beta>1$ and $\sigma>0$, 
\begin{equation}
    \label{eq:Gaussian_mechanism_Renyi}
    \mathcal{A}(S)= f(S) + (Y_1, \ldots, Y_k)^\intercal\;, \quad Y_j\overset{i.i.d}{\sim}\mathcal{N}(0, \sigma^2)\;,
\end{equation}
is  $\left(\beta, \frac{1}{\sigma^2}\frac{\beta}{2} \Delta_2^2(f)\right)$-RDP. In order to prove \eqref{eq:Gaussian_mechanism_Renyi}, we will utilize the following lemma, which outlines a well-known property of Gaussian distributions.
\begin{lema}\label{lem:Gaussian_renyi}
  For $\mu_1, \mu_2 \in \RR^k$ and $\Sigma $ a positive definite matrix of size $k$, let $f \sim \mathcal{N}(\mu_1, \Sigma)$ and $g \sim \mathcal{N}(\mu_2, \Sigma) $. Then, for any $\beta>0$ the following holds  
  \begin{align*}
    D_\beta ( f \, ||\, g)= \frac{\beta}{2} (\mu_1-\mu_2)^\intercal \Sigma^{-1} (\mu_1 - \mu_2)\;.  
  \end{align*}
\end{lema}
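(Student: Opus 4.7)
The plan is a direct computation from the definition of the Rényi divergence, exploiting the fact that $f$ and $g$ share the same covariance matrix so that the quadratic-in-$x$ terms in the log-ratio cancel, reducing the expectation to the moment generating function of a Gaussian evaluated at a linear functional.

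First, I would write out the densities of $f$ and $g$ and form the likelihood ratio. Because the normalizing constants and the Mahalanobis quadratic forms $x^{\intercal}\Sigma^{-1}x$ cancel between numerator and denominator, $\log(f(x)/g(x))$ collapses to an affine function of $x$:
\begin{align*}
\log\frac{f(x)}{g(x)}= (\mu_1-\mu_2)^{\intercal}\Sigma^{-1}x-\tfrac{1}{2}\mu_1^{\intercal}\Sigma^{-1}\mu_1+\tfrac{1}{2}\mu_2^{\intercal}\Sigma^{-1}\mu_2.
\end{align*}
Raising to the power $\beta$ simply multiplies this by $\beta$. The main computation is then to evaluate $\EE_{x\sim g}\!\left[\exp(\beta(\mu_1-\mu_2)^{\intercal}\Sigma^{-1}x)\right]$ using the standard Gaussian moment generating function: for $x\sim\mathcal{N}(\mu_2,\Sigma)$ and any $a\in\RR^k$,
\begin{align*}
\EE[\exp(a^{\intercal}x)]=\exp\!\left(a^{\intercal}\mu_2+\tfrac{1}{2}a^{\intercal}\Sigma a\right).
\end{align*}
I would apply this with $a=\beta\Sigma^{-1}(\mu_1-\mu_2)$, which yields $a^{\intercal}\Sigma a=\beta^{2}(\mu_1-\mu_2)^{\intercal}\Sigma^{-1}(\mu_1-\mu_2)$.

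The remaining step is algebraic simplification: combining the resulting linear contribution $\beta(\mu_1-\mu_2)^{\intercal}\Sigma^{-1}\mu_2$ with the constant $-\tfrac{\beta}{2}\mu_1^{\intercal}\Sigma^{-1}\mu_1+\tfrac{\beta}{2}\mu_2^{\intercal}\Sigma^{-1}\mu_2$ produces exactly $-\tfrac{\beta}{2}(\mu_1-\mu_2)^{\intercal}\Sigma^{-1}(\mu_1-\mu_2)$ by completing the square. Together with the quadratic contribution from $a^{\intercal}\Sigma a$, one obtains
\begin{align*}
\EE_{x\sim g}\!\left[\left(\tfrac{f(x)}{g(x)}\right)^{\beta}\right]=\exp\!\left(\tfrac{\beta(\beta-1)}{2}(\mu_1-\mu_2)^{\intercal}\Sigma^{-1}(\mu_1-\mu_2)\right).
\end{align*}
Taking logarithms and dividing by $\beta-1$ gives the claimed identity. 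There is no real obstacle here; the only place to be careful is the bookkeeping of the constant terms when regrouping the non-quadratic contributions into a single Mahalanobis-type form, and to note that the formula is stated for $\beta>1$ as required by the definition of $D_\beta$ given earlier in the paper.
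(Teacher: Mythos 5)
Your proposal is correct and takes essentially the same route as the paper's proof: write the log likelihood ratio (observing the quadratic-in-$x$ terms cancel because the covariances agree), apply the Gaussian moment generating function under $g$, and regroup the constants into a single Mahalanobis form to obtain $\exp\bigl(\tfrac{\beta(\beta-1)}{2}(\mu_1-\mu_2)^{\intercal}\Sigma^{-1}(\mu_1-\mu_2)\bigr)$ before taking $\tfrac{1}{\beta-1}\log(\cdot)$. You have in fact avoided a small slip in the paper's own write-up, which drops the outer $\exp$ in the penultimate display; your observation that the hypothesis $\beta>0$ should really read $\beta>1$ to match the definition of $D_\beta$ is also accurate.
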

\begin{proof}

For \( f\sim \mathcal{N}(\mu_1, \Sigma) \) and \( g\sim \mathcal{N}(\mu_2, \Sigma) \) it holds that
\begin{align*}
\frac{f(x)}{g(x)} =& \exp{ \left( -\frac{1}{2} \left( (x - \mu_1)^\intercal \Sigma^{-1} (x - \mu_1) - (x - \mu_2)^\intercal \Sigma^{-1} (x - \mu_2) \right) \right)} \\
=& \exp \left( -\frac{1}{2} \left( x^\intercal \Sigma^{-1} x - 2 \mu_1^\intercal \Sigma^{-1} x + \mu_1^\intercal \Sigma^{-1} \mu_1 - x^\intercal \Sigma^{-1} x + 2 \mu_2^\intercal \Sigma^{-1} x - \mu_2^\intercal \Sigma^{-1} \mu_2 \right) \right)\\
 =& \exp{\left( (\mu_1 - \mu_2)^\intercal \Sigma^{-1} x + \frac{1}{2} \left( \mu_2^\intercal \Sigma^{-1} \mu_2  - \mu_1^\intercal \Sigma^{-1} \mu_1 \right) \right)}.
\end{align*}
From this, it follows that
\begin{align*}
\mathbb{E}_{x \sim g} \left( \frac{f(x)}{g(x)} \right)^\beta =& \mathbb{E}_{x \sim \mathcal{N}(\mu_2, \Sigma)} \exp \left( \beta \left( (\mu_1 - \mu_2)^\intercal \Sigma^{-1} x + \frac{1}{2} (\mu_2^\intercal \Sigma^{-1} \mu_2 - \mu_1^\intercal \Sigma^{-1} \mu_1 ) \right) \right)\\
=&\mathbb{E}_{x \sim \mathcal{N}(\mu_2, \Sigma)} \exp \left( \beta (\mu_1 - \mu_2)^\intercal \Sigma^{-1} x \right) \exp \left( \frac{\beta}{2} (\mu_2^\intercal \Sigma^{-1} \mu_2 - \mu_1^\intercal \Sigma^{-1} \mu_1 ) \right).
\end{align*}
The second factor is a constant and can be taken out of the expectation, we focus on computing the expectation:
\[
\mathbb{E}_{x \sim \mathcal{N}(\mu_2, \Sigma)} \exp \left( \beta (\mu_1 - \mu_2)^\intercal \Sigma^{-1} x \right).
\]
Since \( x \sim \mathcal{N}(\mu_2, \Sigma) \), we know that \( x = \mu_2 + z \), where \( z \sim \mathcal{N}(0, \Sigma) \). Substituting this into the expectation:
\[
\mathbb{E}_{z \sim \mathcal{N}(0, \Sigma)} \exp \left( \beta (\mu_1 - \mu_2)^\intercal \Sigma^{-1} (\mu_2 + z) \right)=
\exp \left( \beta (\mu_1 - \mu_2)^\intercal \Sigma^{-1} \mu_2 \right) \mathbb{E}_{z \sim \mathcal{N}(0, \Sigma)} \exp \left( \beta (\mu_1 - \mu_2)^\intercal \Sigma^{-1} z \right).
\]
The expectation \( \mathbb{E}_{z \sim \mathcal{N}(0, \Sigma)} \exp(a^\intercal z) \) for a normal random variable \( z \sim \mathcal{N}(0, \Sigma) \) is given by \( \exp\left( \frac{1}{2} a^\intercal \Sigma a \right) \), where \( a \) is a vector. Thus, we have
\[
\mathbb{E}_{z \sim \mathcal{N}(0, \Sigma)} \exp \left( \beta (\mu_1 - \mu_2)^\intercal \Sigma^{-1} z \right) = \exp \left( \frac{\beta^2}{2} (\mu_2 - \mu_1)^\intercal \Sigma^{-1} (\mu_2 - \mu_1) \right).
\]

Combining everything, we get
\[
\mathbb{E}_{x \sim \mathcal{N}(\mu_2, \Sigma)} \left( \frac{f(x)}{g(x)} \right)^\beta = \exp \left( \beta (\mu_1 - \mu_2)^\intercal \Sigma^{-1} \mu_2 + \frac{\beta^2}{2} (\mu_2 - \mu_1)^\intercal \Sigma^{-1} (\mu_2 - \mu_1) + \frac{\beta}{2} (\mu_2^\intercal \Sigma^{-1} \mu_2 - \mu_1^\intercal \Sigma^{-1} \mu_1) \right).
\]
The first term and the last term simplify to
\begin{align*}
& \exp \left( \beta (\mu_1 - \mu_2)^\intercal \Sigma^{-1} \mu_2  + \frac{\beta}{2} (\mu_2^\intercal \Sigma^{-1} \mu_2 - \mu_1^\intercal \Sigma^{-1} \mu_1) \right) \\
= & \exp \left( \frac{\beta}{2} \Big[ (\mu_1 - \mu_2)^\intercal \Sigma^{-1} \mu_2  + \mu_1 ^\intercal \Sigma^{-1} \mu_2 -  \mu_2 ^\intercal \Sigma^{-1}  \mu_2 + \mu_2^\intercal \Sigma^{-1} \mu_2 - \mu_1^\intercal \Sigma^{-1} \mu_1 \Big] \right) \\
= & \exp \left( \frac{\beta}{2} \Big[ (\mu_1 - \mu_2)^\intercal \Sigma^{-1} \mu_2  + \mu_1 ^\intercal \Sigma^{-1} (\mu_2 - \mu_1 )\Big] \right) \\
= & \exp \left(- \frac{\beta}{2} \Big[ (\mu_2 - \mu_1)^\intercal \Sigma^{-1} (\mu_2 - \mu_1 )\Big] \right).
\end{align*}

In the end, we obtain
\[
\mathbb{E}_{x \sim \mathcal{N}(\mu_2, \Sigma)} \left( \frac{f(x)}{g(x)} \right)^\beta  = \frac{\beta(\beta-1)}{2} (\mu_2 - \mu_1)^\intercal \Sigma^{-1} (\mu_2 - \mu_1).
\]
Thus, the Rényi divergence is
\[
D_\beta(f \| g) = \frac{\beta}{2} (\mu_2 - \mu_1)^\intercal \Sigma^{-1} (\mu_2 - \mu_1).
\]
\end{proof}

 Lemma \ref{lem:Gaussian_renyi} shows a way of ensuring RDP for \eqref{eq:Gaussian_mechanism_Renyi}. In fact, notice that 
$$ \mathcal{A}(S) \sim \mathcal{N}\left( f(S), \sigma^2 I_{k} \right)\;.$$ Then, by  Lemma \ref{lem:Gaussian_renyi}, it follows that 
\begin{align}
\label{eq:discussion_gaussian_privacy}
     \sup_ { S_1\sim S_2 } D_\beta\left(f(S_1) \| f(S_2)\right)=&   \sup_ { S_1\sim S_2 } \frac{1}{\sigma^2}\frac{\beta}{2}   \sup_ { S_1\sim S_2 }\| f(S_1)- f(S_2)\|_2^2=  \frac{1}{\sigma^2}\frac{\beta}{2} \Delta_2^2(f)\;.\\
\end{align}
The previous expression proves that the Gaussian mechanism defined by \eqref{eq:Gaussian_mechanism_Renyi} is $\left(\beta, \frac{1}{\sigma^2}\frac{\beta}{2} \Delta_2^2(f)\right)-$RDP.

We conclude this part by reporting three major properties of RDP (\cite{Mironov2017RnyiDP}):
\begin{enumerate}
    \item \textbf{Post-processing}: If $\mathcal{A}$ is $(\beta, \epsilon)$-RDP and $g$ is a randomized mapping, then $g \circ A$ is $(\beta, \epsilon)$-RDP.\\
\item \textbf{ Adaptive Composition}: If $\mathcal{A}_1 : \mathcal{Z}^n \to \mathcal{X}_1$ is $(\beta, \epsilon_1)$-RDP, and $\mathcal{A}_2 : \mathcal{X}_1 \times \mathcal{Z}^n \to \mathcal{X}_2$ is $(\beta, \epsilon_2)$-RDP, then
\[
(\mathcal{A}_1(\cdot), \mathcal{A}_2(\mathcal{A}_1(\cdot), \cdot)) \text{ is } (\beta, \epsilon_1 + \epsilon_2)\text{-RDP.}
\]
\item \textbf{ From RDP to DP}: If $\mathcal{A}$ is $(\beta, \epsilon)$-RDP, then it is $\left( \epsilon + \frac{\log(1/\delta)}{\beta - 1}, \delta \right)$-DP for any $0 < \delta < 1$.
\end{enumerate}

\subsection{Proof of Theorems   \ref{thm:RS-upperbond1}
and \ref{thm:smoothing}}
\label{Proof:smoothing}
In this section we present the proof of privacy and convergence for Algorithm \ref{Algo ACSMD}. The privacy is injected into the algorithm through the randomized smoothing of $\phi$. The overall analysis of  Algorithm \ref{Algo ACSMD} then follows from an application of the Advanced Composition Theorem \ref{thm:advance_composition_theorem}. The only difference with the privacy analysis of the Frank-Wolfe Algorithm (see Theorem \ref{thm:utility_guarantee}) is that we use the closely related definition of  R\'enyi-differential privacy (RDP) (see \cite{Mironov2017RnyiDP}). To motivate this choice, the randomized smoothing we present is based on the introduction of Gaussian noise, which turns out to be easier to analyse in terms of RDP. Finally, the privacy guarantee will be translated in terms of the classical $(\varepsilon, \delta)$ definition of privacy. The accuracy guarantee will be a direct consequence of Theorem 3.4 in \cite{daspremont:hal-04230893}.

\label{appendix-stochastic-oracle}
More precisely,  in order to use the setting presented in \cite{daspremont:hal-04230893},   we need to verify the following properties of the smoothed gradient:
\begin{itemize}
\item $\phi_{\sigma}:\Delta_{k}\mapsto\RR$ is a convex, 1-Lipschitz and $1/\sigma$-smooth function with respect to $\|\cdot\|_1$. Given a Gaussian sampler $\xi\sim{\cal N}(0,\sigma^2I)$, we have access to a stochastic first-order  oracle for $\phi_{\sigma}$, namely
\[ G(\mathcal{D},\xi, S_n, \mathcal{Q})  \in\argmax_{ q \in \text{conv}(\mathcal{Q})} \langle q ,{\mathcal{P}} - \mathcal{D}+\xi\rangle. \]
Notice that almost surely this is a well-defined and unique assignment. 
In fact, the maximum of the  linear function $q\mapsto \langle q ,{\mathcal{P}} - \mathcal{D}\rangle$ over the  polyhedron $\text{conv}(\mathcal{Q})$ takes its maximum in at least one of its finitely many vertices. The maximizer is not unique only if one of the vertices is orthogonal to ${\mathcal{P}} - \mathcal{D}+\xi.$ As $\xi$ has a continuous distribution on $\mathbb{R}^d$, the maximizer is almost surely unique.
Further,
\begin{align}
&\mathbb{E}_{\xi}[G(\mathcal{D},\xi, S_n, \mathcal{Q}) ]=\nabla \phi_{\sigma}(\mathcal{D})\\   
&\mathbb{E}_{\xi}\|G(\mathcal{D},\xi, S_n, \mathcal{Q}) -\nabla \phi_{\sigma}( \mathcal{D})\|_{\infty}^2 \leq 2, \label{eq:bbd_variance_phi_sigma}
\end{align}
where the first equality holds by the Dominated Convergence Theorem, and the inequality holds since the $\ell_{\infty}$-diameter of $Q$ is at most 2.
    \item $H:\Delta_{k}\mapsto\RR_+$ is $1$-strongly convex with respect to $\|\cdot\|_1$ (see Example 2.5 in \cite{shalev2012online}).
\end{itemize}

\begin{proof}[Proof of Theorem \ref{thm:smoothing}]

Proposition \ref{prop:rand_smooth} guarantees $\phi_\sigma$ to be  $1/\sigma$-smooth, and from \eqref{eq:bbd_variance_phi_sigma} it follows that the variance of the estimator $\Phi(\mathcal{D},\xi)$ is bounded. Then, according to Theorem 3.4 in \cite{daspremont:hal-04230893}, the accuracy guarantee for $(P_{\alpha,\sigma})$ achieves the rate
\begin{equation}
\label{eq-smooth-1}
\EE\left[ \textbf{Gap}_{(P_{\alpha,\sigma})}(\mathcal{P}^{\text{priv}}_\alpha)  \right]  \leq  \mathcal{O}
\left( \frac{1}{\alpha \sigma T^2 } + \frac{1}{\alpha T} \right).
\end{equation}

\textbf{Choice of $\sigma$:} The parameter $\sigma$ corresponds to the variance of the Gaussian noise in the randomized smoothing, and it regulates the privacy budget of each iteration of Algorithm \ref{Algo ACSMD}. Therefore, the choice of $\sigma$ follows from the overall privacy of the Algorithm. 

The privacy enters the objective through $G_\sigma(\mathcal{D}_t^{md}, \xi_t, S_n, \mathcal{Q})=\argmax_{q\in \text{conv}(\mathcal{Q})} \langle q, \mathcal{P}_n-\mathcal{D}+\xi \rangle\;,$  which is the composition of a Gaussian mechanism and a deterministic function. More precisely, let $f$ be the $\RR^k$-valued function of the dataset $S_n$ defined as $ f(S_n):= \mathcal{P}_n -\mathcal{D}$. We define the Gaussian mechanism as $\mathcal{M}(S_n)= f(S_n)+ \xi$. The sensitivity of $f$ is then given by 
\begin{align*}
    \Delta_2(f)=& \sup_{S_n \sim S_n'} \| f(S_n)- f(S_n')\|_2 = \sup_{S_n \sim S_n'} \left\| \mathcal{P}_n - \mathcal{D} -\mathcal{P}_n' + \mathcal{D}\right\|_2 \\
    =& \sup_{S_n \sim S_n'} \left\| \frac{1}{n}\sum_{i=1}^n e_{z_i} - \frac{1}{n}\sum_{i=1}^n e_{z'_i} \right\|_2  = \frac{\sqrt{2}}{n}\;.\\
\end{align*}
The last step is justified by the fact that if $S_n$ and $S_n'$ are neighboring, then there exists only one index $i$ for which $z_i\neq z_i'$, and the corresponding $ e_{z_i}$ and $ e_{z'_i}$ have distance $\sqrt{2}$ with respect to the $\|\cdot\|_2$-norm.  Hence, an application of the Gaussian mechanism yields $\mathcal{M}$ to be $\left(\beta, \frac{1}{\sigma^2} \frac{\beta}{ n^2}\right) $-RDP (see \eqref{eq:discussion_gaussian_privacy}) . Notice the conclusion remains the same for $G_\sigma $ by the post-processing property of RDP. 

For $T$ iterations, the Advanced Composition Theorem for RDP ensures that our algorithm is $(\beta,\frac{\beta T}{n^2\sigma^2})$-RDP. Lastly, by transfering privacy from RDP to DP, the whole Algorithm is $\big(\frac{\beta T}{n^2\sigma^2}+\frac{\log 1/\delta}{\beta-1},\delta \big)$-DP. Optimal tuning with respect to $\beta>1$ (i.e.  minimizing the privacy budget $\frac{\beta T}{n^2\sigma^2}+\frac{\log 1/\delta}{\beta-1}$ over $\beta$), leads to
    \[ \beta^\ast=1+\sqrt{\frac{\log1/\delta}{T}}n\sigma\;. \]
Finally, for a chosen $\varepsilon>0$, we  obtain $\frac{\beta^\ast T}{(n\sigma)^2}+\frac{\log 1/\delta}{\beta^\ast-1}\leq \varepsilon$ by setting
    \( \sigma=\frac{4\sqrt{T\log1/\delta}}{n\varepsilon}\;. \) In fact, 
    \begin{align*}
        \varepsilon \geq \frac{\beta^\ast T}{(n\sigma)^2}+\frac{\log (1/\delta)}{\beta^\ast-1} = 2 \sqrt{\frac{T \log (1/\delta)}{n^2 \sigma^2}} + \frac{T}{n^2\sigma^2}\;.
    \end{align*}
By multiplying on both sides of the previous expression with $\sigma^2>0$, we obtain 
\begin{align*}
     \varepsilon \geq \frac{\beta^\ast T}{(n\sigma)^2}+\frac{\log (1/\delta)}{\beta^\ast-1} \Longleftrightarrow \varepsilon \sigma^2 -2 \sigma \sqrt{\frac{T\log (1/\delta)}{n^2}} - \frac{T}{n^2}\geq 0\;.
\end{align*}
The two roots of the corresponding polynomial of second degree in $\sigma$ are 
\begin{align*}
    \sigma_1 := \frac{1}{\varepsilon n }\left( \sqrt{{T \log(1/\delta)}} - \sqrt{ T\log(1/\delta) +\varepsilon T} \right)\;,\quad  \sigma_2:=\frac{1}{\varepsilon n }\left( \sqrt{{T \log(1/\delta)} }+ \sqrt{ T\log(1/\delta) +\varepsilon T} \right)\;,
\end{align*}
and any $\sigma \in (-\infty, \sigma_1]\cup [\sigma_2, +\infty)$ satisfies the inequality. Finally, note that for $\delta$ sufficiently small, $\log(1/\delta)>\varepsilon$, thus 
\begin{align*}
    \sigma_2 \leq \frac{(1+\sqrt{2}) \sqrt{T \log (1/\delta)}}{\varepsilon n} \leq   \frac{ 4 \sqrt{T \log (1/\delta)}}{\varepsilon n} \;.
\end{align*}
\textbf{Choice of $\alpha$:} The choice of $\alpha$ is based on the accuracy gap between  $(P_{\alpha,\sigma})$ and $(P)$. In the proof of Theorem \ref{thm:upper_bound_primalgap} we derived the upper bound 
\begin{align*}
    \textbf{Gap}_{(P)}(\mathcal{P}_\alpha^{\text{priv}}) = & \phi(\mathcal{P}_\alpha^{\text{priv}})-\phi^\ast \leq \phi_\alpha(\mathcal{P}_\alpha^{\text{priv}}) -\phi^\ast_\alpha +\alpha \log k = \textbf{Gap}_{(P_{\alpha})}(\mathcal{P}_\alpha^{\text{priv}}) + \alpha \log k\;.\\
\end{align*}
Taking the expectation on both sides yields
\begin{align*}
\EE[\textbf{Gap}_{(P)}(\mathcal{P}^{\text{priv}}_\alpha) ] & \leq  \EE[\textbf{Gap}_{(P_{\alpha})}(\mathcal{P}^{\text{priv}}_\alpha)] +\alpha \log k\;.
\end{align*}
Further, for any $\mathcal{D} \in \Delta_k $
\begin{align*}
\sigma w(\mathcal{Q}) & :=  \mathbb{E}_{\xi \sim \varphi_\sigma}\Big[\max_{q\in Q}\langle q, \xi\rangle \Big] \\
& \geq \mathbb{E}_{\xi\sim \varphi_\sigma}\Big[\max_{q\in Q}\langle q, {\mathcal{P}} -\mathcal{D}+\xi\rangle\Big]-\max_{q\in \mathcal{Q}}\langle q,\mathcal{P}-\mathcal{D} \rangle \\
& = \phi_\sigma(\mathcal{D}) - \phi(\mathcal{D})
\end{align*}
Moreover, due to symmetry of $\xi$, it holds that
\begin{align*}
\sigma w(\mathcal{Q}) & \geq \max_{q\in \mathcal{Q}}\langle q,\mathcal{P}-\mathcal{D} \rangle - \mathbb{E}_{\xi\sim \varphi_\sigma}\Big[\max_{q\in Q}\langle q, {\mathcal{P}} -\mathcal{D} - \xi\rangle\Big] \\
& = \phi(\mathcal{D}) - \phi_\sigma(\mathcal{D}).
\end{align*}

Then, for $\D_\alpha^\ast=\argmin\limits_{ \mathcal{D} \in \Delta_{k}} \left( \langle q_\alpha^\ast, \mathcal{P} - \mathcal{D}\rangle+\alpha H(\mathcal{D}) \right)$ ,
\begin{align*}
\EE[\textbf{Gap}_{(P_{\alpha})}(\mathcal{P}^\text{priv}_\alpha) ]  =&
\Big[ \phi(\mathcal{P}^\text{priv}_\alpha) + \alpha H(\mathcal{P}^\text{priv}_\alpha) \Big] - \Big[  \phi(\mathcal{D}^\ast_{\alpha}) + \alpha H(\mathcal{D}^\ast_{\alpha}) \Big] \\
 =&\Big[ \phi_\sigma(\mathcal{P}^\text{priv}_\alpha) + \alpha H(\mathcal{P}^\text{priv}) \Big] - \Big[  \phi_\sigma (\mathcal{D}^\ast_{\alpha}) + \alpha H(\mathcal{D}^\ast_{\alpha}) \Big] \\
&  + \Big( \phi(\mathcal{P}^\text{priv}_\alpha) - \phi_\sigma(\mathcal{P}^\text{priv}_\alpha) \Big)+ \Big( \phi_\sigma (\mathcal{D}^\ast_{\alpha}) - \phi(\mathcal{D}^\ast_{\alpha}) \Big) \\
  \leq & \EE\left[ \textbf{Gap}_{(P_{\alpha,\sigma})}(\mathcal{P}^\text{priv}_\alpha) \right] + 2 \sigma w(\mathcal{Q})\;.
\end{align*}

Together with \eqref{eq-smooth-1} this yields
\begin{align*}
\EE[\textbf{Gap}_{(P_{\alpha})}(\mathcal{P}^\text{priv}_\alpha) ] & \leq \EE\left[ \textbf{Gap}_{(P_{\alpha,\sigma})}(\mathcal{P}^\text{priv}_\alpha) \right]  + 2 \sigma w(\mathcal{Q})\;
=  \mathcal{O}
\left( \frac{1}{\alpha \sigma T^2 } + \frac{1}{\alpha T} \right)+ 2 \sigma w(\mathcal{Q})\;.
\end{align*}

Note that in the upper bound from \eqref{eq-smooth-1}, the term $O(\frac{1}{\alpha \sigma T^2})$ can be reduced in practise with other techniques (such as restarting algorithm in \cite{daspremont:hal-04230893}) such that $\mathcal{O}(\frac{1}{\alpha T})$ becomes dominant while to the best of our knowledge, there is no technique to reduce $\mathcal{O}(\frac{1}{\alpha T})$. For simplicity, in our analysis, we will focus on the case where  $\mathcal{O}(\frac{1}{\alpha T})$ is dominant compared to the other term i.e. \( \frac{1}{\sigma} =\mathcal{O}(T)\). Later, we will show that this regime is justified if $n$ is sufficiently large.

We will show that the optimal number of iterations, denoted as \(T^\ast\) satisfies this inequality and is consistent with the value stated in the theorem, i.e we will have to verify that  $ \frac{\sqrt{T\log(1/\delta)}}{\varepsilon n} \geq \frac{1}{T}$ (by rearranging the inequality and plugging $\sigma=\frac{ 4 \sqrt{T \log (1/\delta)}}{\varepsilon n} $). 

Therefore, we obtain
\begin{align*}
\EE[\textbf{Gap}_{(P_{\alpha})}(\mathcal{P}^\text{priv}_\alpha) ] \leq   \mathcal{O}
\left(  \frac{1}{\alpha T} \right)+ 2 \sigma w(\mathcal{Q})\;.
\end{align*}

By combining all the previous inequalities and substituting the value of \( \sigma \) obtained from the privacy analysis, we arrive at

\begin{align*}
\EE[\textbf{Gap}_{(P)}(\mathcal{P}^{\text{priv}}_\alpha) ] & \leq  \EE[\textbf{Gap}_{(P_{\alpha})}(\mathcal{P}^{\text{priv}}_\alpha)] +\alpha \log k \leq \mathcal{O}
\left(  \frac{1}{\alpha T} +\alpha \log k  +  \frac{\sqrt{T\log(1/\delta)}}{\varepsilon n} \ w(\mathcal{Q}) \right)\;. 
\end{align*}

Next, we optimize the right hand side of the upper bound above, obtaining over $\alpha$, yielding to $\alpha^\ast= \frac{1}{\sqrt{T \log k}}$ and 

\begin{align*}
\EE[\textbf{Gap}_{(P)}(\mathcal{P}^{\text{priv}}_\alpha) ]\leq \mathcal{O}
\left(  \frac{\sqrt{\log k}}{\sqrt{T}}  +  \frac{\sqrt{T\log(1/\delta)}}{\varepsilon n} \ w(\mathcal{Q}) \right)\;. 
\end{align*}
Finally, a last minimization over $T$ yields
\[ T^{\ast}=\sqrt{\frac{\log k}{\log(1/\delta)}}\frac{n\varepsilon}{w(\mathcal{Q})}\;,\]
and 
\[ \mathcal{O} \left( \frac{w(\mathcal{Q})^2 \sqrt{\log(1/\delta)} }{n \epsilon \sqrt{\log k}}+ \frac{\sqrt{w(\mathcal{Q})}\log^{1/4} (k)  \log^{1/4}(1/\delta)}{n^{1/2}\varepsilon^{1/2}} \right)\;. \]

The second term of the previous expression is dominant when $n\to \infty\;.$

Finally, note that the condition $T^{\ast}\geq \frac{1}{\sigma}$ is satisfied whenever $n \geq \frac{w^3(\mathcal{Q}) \log(1/\delta)}{\epsilon \log^{3/2}(k)}$, which is one of the assumptions of Theorem \ref{thm:smoothing}. 
\end{proof}

\section{Private lower bounds}\label{sec:lower_bounds}

In this section, we provide evidence for the efficiency of our rates by studying lower bounds for  private synthetic data release. We note that previous work has established such lower bounds (see, e.g., \cite{Ullman:2011,Bun:2018, Steinke:2016}), and, in fact, we will make use of some of these results. The population private lower bound is based on a lower bound previously established for the empirical DP synthetic data problem. We summarize this result below:

\begin{teo} \label{thm-bun}
[Lemma 2.11, Theorem 5.16 in \cite{Bun:2018}] For all $\gamma>0$, $\log(k)\geq 6\log(1/\gamma)$, and $k\geq \log(k)/\gamma^2$, there exists a family of queries $\mathcal{Q}$ and a support $\mathcal{Z}$ of size $k$ such that any $(\varepsilon,\delta)$-DP algorithm that takes as input a dataset $S_n$, and outputs ${\cal A}(S_n)$ such that
\( \mathbb{E}_{{\cal A}}\Big[ \max_{q\in \mbox{conv} ({\cal Q})} \langle q,\mathcal{P}_n-{\cal A}(S_n)\rangle\Big] \leq \gamma\;,\)
requires at least $n=\tilde\Omega\left(\frac{\sqrt{\log(k)}\log|\mathcal{Q}|}{\varepsilon\gamma^2} \right) \;.$

\end{teo}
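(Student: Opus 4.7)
The plan is to prove this lower bound via the fingerprinting codes framework, which is by now the standard tool for private query-release lower bounds. The high-level idea: construct a random distribution over ``hard'' datasets such that any algorithm whose output is $\gamma$-accurate on $\mathcal{Q}$ in the empirical sense must contain enough information about the dataset to \emph{trace} a specific record $z_i \in S_n$; an $(\varepsilon,\delta)$-DP algorithm, however, cannot reveal such individual-level information, and combining the two forces $n$ to be large. Concretely, I would fix a parameter $d \asymp \log(k)$ and let the data universe be (a subset of) $\{0,1\}^d$, and take $\mathcal{Q}$ to be a family of signed indicator queries derived from a Tardos-type fingerprinting codebook $C \in \{0,1\}^{n \times m}$ with $m \asymp \log|\mathcal{Q}|/\gamma^2$. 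The required size conditions $\log k \geq 6\log(1/\gamma)$ and $k \geq \log(k)/\gamma^2$ are exactly what is needed to embed this codebook into $\mathcal{Z}$ without collisions.

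First I would formalize the reduction from query-release to tracing. Each query $q_j \in \mathcal{Q}$ essentially computes a signed average of the $j$-th column of the codebook induced by $S_n$, so $\gamma$-accuracy of $\mathcal{A}(S_n)$ on every $q \in \mbox{conv}(\mathcal{Q})$ is equivalent, up to constants, to recovering the column sums of $C$ within error $\gamma n$ in each coordinate. Invoking the fingerprinting property (Bun--Steinke--Ullman, building on Tardos), one obtains a randomized \emph{tracer} $T$ which, on input $\mathcal{A}(S_n)$ and a candidate record $z$, satisfies two bounds: (i) with probability at least $1/2$ over the codebook and $\mathcal{A}$, at least one $z_i \in S_n$ is flagged by $T$; (ii) a record $z^\ast$ drawn independently from the codebook distribution is flagged with probability at most (say) $1/4$. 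The quantitative gap between (i) and (ii), after dividing by $n$, yields a per-index advantage of order $1/n$.

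Next I would apply the differential privacy constraint: for any fixed $z^\ast$, DP guarantees that for each $i$, $\Pr[T(\mathcal{A}(S_n), z_i) = 1] \leq e^\varepsilon \Pr[T(\mathcal{A}(S_n^{(i)}), z^\ast) = 1] + \delta$, where $S_n^{(i)}$ is the neighboring dataset with $z_i$ replaced by $z^\ast$. Averaging over $i \in [n]$ and using independence of $z^\ast$ from the rest of the dataset, the right-hand side is close to $e^\varepsilon \cdot \Pr[T \text{ flags a fresh record}] + \delta$. Combining with the fingerprinting guarantees (i) and (ii) gives an inequality of the form $\tfrac{1}{2} \leq n(e^\varepsilon - 1)\cdot \tfrac{1}{4} + n\delta + \tfrac{1}{4}$, valid only when $n \cdot \varepsilon \gtrsim 1$ in the relevant regime. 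Tracking constants carefully through the construction, where the codebook has length $m \asymp \log|\mathcal{Q}|/\gamma^2$ and the alphabet/dimension contributes a $\sqrt{\log k}$ factor through the soundness parameter of the Tardos code, yields the claimed $n = \tilde\Omega(\sqrt{\log k}\,\log|\mathcal{Q}|/(\varepsilon\gamma^2))$.

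The main obstacle is obtaining the correct exponent on $\log k$. A naive marginals lower bound, using $d$ one-way marginal queries on $\{0,1\}^d$, would give only $\log k = d$ in place of $\sqrt{\log k}$; the improved $\sqrt{\log k}$ factor requires the more delicate Bun--Steinke--Ullman construction, in which the codebook entries are biased according to a carefully chosen distribution on $[0,1]$ (a variant of the Tardos distribution) and the tracer is a thresholded inner product. Showing that this tracer simultaneously achieves the completeness property (i) under $\gamma$-accuracy and the soundness property (ii) against independent records, while maintaining the tight dependence on all parameters, is the technical heart of the argument; the remaining reduction to DP is comparatively mechanical.
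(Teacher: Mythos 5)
This statement is not proved in the paper at all: it is imported verbatim as a citation (Lemma 2.11 and Theorem 5.16 of Bun, Ullman, and Vadhan) and used as a black box to derive the population lower bound of Theorem~\ref{thm:lower_bound} via the resampling reduction in Lemma~\ref{lower-lemma}. So there is no ``paper's own proof'' to compare against; the only honest comparison is against the cited reference.

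Your sketch does correctly identify the fingerprinting-codes framework as the engine behind this bound, and the overall architecture you describe --- reduce $\gamma$-accurate query release to recovering column sums of a Tardos-type codebook, build a tracer with a completeness/soundness gap, then invoke the $(\varepsilon,\delta)$-DP inequality between neighboring datasets and sum over coordinates --- is the right shape. But as written this is a plan, not a proof. You explicitly defer ``the technical heart of the argument'' (simultaneously achieving completeness under $\gamma$-accuracy and soundness against fresh records with the tight parameter dependence), which is precisely the content of the cited lemmas; without it, the claimed exponents on $\gamma$, $\log|\mathcal{Q}|$, and $\log k$ are asserted rather than derived, and the domain-size conditions $\log k \geq 6\log(1/\gamma)$ and $k \geq \log(k)/\gamma^2$ are only gestured at. A smaller but genuine confusion: the remark that ``a naive marginals lower bound \ldots would give only $\log k = d$ in place of $\sqrt{\log k}$'' has the comparison backwards. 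The Bun--Ullman--Vadhan one-way-marginals lower bound already yields the $\sqrt{d}$ dependence (and that dependence is tight for $(\varepsilon,\delta)$-DP); the improvement in the group/query-release version is in the $\log|\mathcal{Q}|$ and $1/\gamma^2$ factors, not in upgrading a $d$ to a $\sqrt{d}$. Since a larger exponent on $d$ would be a \emph{stronger} sample-complexity lower bound, claiming the naive argument gives $d$ and the refined one gives $\sqrt{d}$ does not parse. If you intend to supply an actual proof rather than an attribution, the cited source is the place to go; if, as the paper does, you only need the statement, it is cleaner to import it directly.
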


In particular, 
Theorem \ref{thm-bun} implies  that for any \((\varepsilon, \delta)\)-DP algorithm \(\mathcal{A}\) that satisfies the assumptions of Theorem \ref{thm-bun} and outputs a synthetic data  distribution $\mathcal{A}(S_n)$ with \( \mathbb{E}_{{\cal A}}\Big[ \max_{q\in \mbox{conv} ({\cal Q})} \langle q,\mathcal{P}_n-{\cal A}(S_n)\rangle\Big] \leq \gamma\;,\) it must hold that 
\[
\gamma \geq \tilde\Omega\left(\log^{1/4} (k)\sqrt{\frac{\log|\mathcal{Q}|}{\varepsilon n}} \right).
\]

 This result establishes a fundamental lower bound on the accuracy \(\gamma\) for such algorithms, indicating that \(\gamma\) cannot be arbitrarily small without a corresponding increase in the dataset size \(n\).

Next we provide a reduction from empirical to population guarantees for DP algorithms. This reduction is based on a known resampling argument used in DP stochastic convex optimization \cite{BFTT19} and stochastic variational inequalities \cite{Boob2021OptimalAF}.

\begin{lema} [Appendix C in \cite{BFTT19}] \label{lower-lemma}
Let \(\mathcal{P}\) be a distribution supported on \(\mathcal{Z}\), and let \(\mathcal{P}_n\) denote the empirical distribution corresponding to a dataset \(S_n \in \mathcal{Z}^n\). Suppose there exists an \((\varepsilon/[4\log(1/\delta)], e^{-\varepsilon}\delta/[8\log(1/\delta)])\)-differentially private (DP) algorithm \(\mathcal{A}\) such that

\[
\mathbb{E}_{\mathcal{A}} \left[ \max_{q \in \text{conv}(\mathcal{Q})} \langle q, \mathcal{P} - \mathcal{A}(S_n) \rangle \right] \leq \gamma,
\]
where \(\gamma > 0\) is a fixed accuracy parameter.
Then, there exists an \((\varepsilon, \delta)\)-DP algorithm \(\mathcal{B}\) such that

\[
\mathbb{E}_{\mathcal{B}} \left[ \max_{q \in \text{conv}(\mathcal{Q})} \langle q, \mathcal{P}_n - \mathcal{B}(S_n) \rangle \right] \leq \gamma.
\]

\end{lema}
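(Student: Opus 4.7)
The plan is to construct $\mathcal{B}$ by the classical bootstrap (sample-with-replacement) reduction. On input $S_n = (z_1,\ldots,z_n)$, the algorithm $\mathcal{B}$ draws indices $I_1,\ldots,I_n$ independently and uniformly from $[n]$, forms $\tilde{S}_n := (z_{I_1},\ldots,z_{I_n})$, and outputs $\mathcal{B}(S_n) := \mathcal{A}(\tilde{S}_n)$. The argument then splits cleanly into an accuracy step and a privacy step, with the latter containing essentially all the difficulty.

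\textbf{Accuracy step.} Conditionally on $S_n$, the bootstrap sample $\tilde{S}_n$ is an i.i.d.\ sample of size $n$ drawn from the empirical distribution $\mathcal{P}_n$. Applying the hypothesized population guarantee of $\mathcal{A}$ with the underlying distribution taken to be $\mathcal{P}_n$ itself immediately yields
$$\mathbb{E}_{\mathcal{B}}\left[\max_{q \in \text{conv}(\mathcal{Q})} \langle q, \mathcal{P}_n - \mathcal{B}(S_n) \rangle \right] = \mathbb{E}_{\tilde{S}_n, \mathcal{A}}\left[\max_{q \in \text{conv}(\mathcal{Q})} \langle q, \mathcal{P}_n - \mathcal{A}(\tilde{S}_n) \rangle \right] \leq \gamma.$$

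\textbf{Privacy step.} For neighboring $S_n \sim S_n'$ differing at some position $i_0$, I couple the two bootstrap constructions by reusing the same random indices $I_1,\ldots,I_n$. The resulting coupled samples $\tilde{S}_n$ and $\tilde{S}_n'$ then differ exactly at those positions $\{j : I_j = i_0\}$; let $K$ denote this random cardinality, so that $K \sim \text{Binomial}(n,1/n)$. Writing $\varepsilon_0 = \varepsilon/[4\log(1/\delta)]$ and $\delta_0 = e^{-\varepsilon}\delta/[8\log(1/\delta)]$ for the privacy parameters of $\mathcal{A}$, group privacy for $(\varepsilon_0,\delta_0)$-DP algorithms yields, conditionally on $K = k$, that the two output distributions are $(k\varepsilon_0,\, k e^{(k-1)\varepsilon_0}\delta_0)$-indistinguishable. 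One then splits the probability of any output event at a threshold $k^\ast$ of order $\log(1/\delta)$: for $K \leq k^\ast$ the conditional privacy parameters are controlled by group privacy, while for $K > k^\ast$ a Chernoff tail bound on the binomial absorbs the remainder of the $\delta$-budget.

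The main obstacle is precisely this two-regime calibration. The numerical constants $4$ and $8$, and the factor $e^{-\varepsilon}$ multiplying $\delta$ in $\delta_0$, are tuned exactly so that simultaneously (i) $k^\ast \varepsilon_0 \leq \varepsilon/2$, (ii) the exponential blow-up $e^{k^\ast \varepsilon_0}$ arising from the group-privacy expansion of the $\delta$-term is compensated by the $e^{-\varepsilon}$ factor and yields at most $\delta/2$, and (iii) the Chernoff tail $\Pr[K > k^\ast]$ is at most $\delta/2$. Once this bookkeeping is carried out, integrating the conditional indistinguishability against the law of $K$ and combining the two regimes produces $\Pr[\mathcal{B}(S_n) \in T] \leq e^{\varepsilon}\Pr[\mathcal{B}(S_n') \in T] + \delta$ for every measurable $T$, establishing the desired $(\varepsilon,\delta)$-DP of $\mathcal{B}$.
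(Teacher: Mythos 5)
Your proposal takes essentially the same route as the paper: both define $\mathcal{B}$ by bootstrap resampling $n$ i.i.d.\ points from the empirical distribution of $S_n$ and then running $\mathcal{A}$ on the resample, and both obtain the accuracy bound by observing that the resample is an i.i.d.\ sample from $\mathcal{P}_n$, so the population guarantee of $\mathcal{A}$ applies with $\mathcal{P}_n$ playing the role of the underlying distribution. The paper defers the privacy step entirely to Appendix C of \cite{BFTT19}, whereas you spell out the content of that step (coupling the bootstrap indices, the $\mathrm{Binomial}(n,1/n)$ count of positions hitting the differing index, group privacy conditionally on that count, and a Chernoff tail at a $\Theta(\log(1/\delta))$ threshold to absorb the residual $\delta$-budget); this is exactly the argument the citation supplies, so the two proofs agree in substance.
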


\begin{proof}
Given algorithm ${\cal A}$ as in the statement and dataset $S_n\in\mathcal{Z}^n$, consider the following algorithm ${\cal B}$, which takes $S_n$ as input, and does the following: first, it samples $n$ independent copies from the empirical distribution associated to $S_n$. Calling this new dataset $T_n$, it then runs ${\cal A}$ on $T_n$ and outputs ${\cal B}(S_n):={\cal A}(T_n)$.

First, ${\cal B}$ is $(\varepsilon,\delta)$-DP with respect to  $S_n$: this follows from a simple analysis of the probability of repeating examples from $S_n$ in dataset $T_n$, together with the group privacy property (see Appendix C in \cite{BFTT19}). Next,
\begin{align*}
& \mathbb{E}_{\cal B}\big[\max_{q\in \mbox{conv} ({\cal Q})} \langle q,\mathcal{P}_n-{\cal B}(S_n)\rangle \big] \\
= & \mathbb{E}_{{\cal A},T_n \sim (\mathcal{P}_n)^n}\big[\max_{q\in \mbox{conv} ({\cal Q})}\langle q,\mathcal{P}_n-{\cal A}(T_n)\rangle\big] \leq \gamma,
\end{align*}
where the last inequality holds by the population accuracy of ${\cal A}$.
\end{proof}

Combining Theorem \ref{thm-bun} and Lemma \ref{lower-lemma}, we obtain the following lower bound for population loss.

\begin{teo}\label{thm:lower_bound}
For all $\gamma>0$, $\log(k)\geq 6\log(1/\gamma)$, and $k\geq \log(k)/\gamma^2$, there exists a family of queries $\mathcal{Q}$ and a support $\mathcal{Z}$ of size $k$ such that any $(\varepsilon, \delta)$-DP algorithm that takes as input a dataset $S_n \sim {\mathcal{P}}^n$, and outputs ${\cal A}(S_n)$ such that
\[ \mathbb{E}_{{\cal A}, S_n}\Big[ \max_{q\in \mbox{conv} ({\cal Q})} \langle q,\mathcal{P}-{\cal A}(S_n)\rangle\Big] \leq \gamma\;, \]
satisfies
\[ \gamma=\tilde\Omega\left(\frac{\log^{1/4}( k)\log^{1/2}|\mathcal{Q}|}{\varepsilon^{1/2} n^{1/2}} \right). \]
\end{teo}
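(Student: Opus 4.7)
\vspace{0.5em}

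The plan is to establish the population lower bound by reduction to the empirical lower bound already recalled in Theorem~\ref{thm-bun}. The key conceptual point is that an $(\varepsilon,\delta)$-DP algorithm with small \emph{population} error can be converted into a (slightly less private) algorithm with small \emph{empirical} error via Lemma~\ref{lower-lemma}; then the known empirical lower bound forces $n$ to be large as a function of the accuracy~$\gamma$, which, solved for $\gamma$, gives the desired rate.

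More concretely, I would argue by contrapositive. Fix a family $(\mathcal{Q},\mathcal{Z})$ of the type guaranteed by Theorem~\ref{thm-bun}, and suppose for contradiction that some $(\varepsilon,\delta)$-DP algorithm $\mathcal{A}$ satisfies
\[
\mathbb{E}_{\mathcal{A},S_n}\Bigl[\max_{q\in\mathrm{conv}(\mathcal{Q})}\langle q,\mathcal{P}-\mathcal{A}(S_n)\rangle\Bigr]\leq\gamma.
\]
The next step is to choose a pair $(\varepsilon',\delta')$ such that the parameter requirement in Lemma~\ref{lower-lemma} is satisfied when the ``input'' privacy pair is $(\varepsilon,\delta)$; solving $\varepsilon=\varepsilon'/[4\log(1/\delta')]$ and $\delta=e^{-\varepsilon'}\delta'/[8\log(1/\delta')]$ yields $\varepsilon'=\Theta(\varepsilon\log(1/\delta'))$ and $\delta'=\Theta(\delta\log(1/\delta))$, i.e.\ $\varepsilon'=\tilde{O}(\varepsilon)$ and $\delta'=\tilde{O}(\delta)$ up to logarithmic factors. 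Lemma~\ref{lower-lemma} then produces an $(\varepsilon',\delta')$-DP algorithm $\mathcal{B}$ that attains the \emph{empirical} accuracy bound
\[
\mathbb{E}_{\mathcal{B}}\Bigl[\max_{q\in\mathrm{conv}(\mathcal{Q})}\langle q,\mathcal{P}_n-\mathcal{B}(S_n)\rangle\Bigr]\leq\gamma.
\]

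Applying Theorem~\ref{thm-bun} to $\mathcal{B}$ (whose hypotheses $\log(k)\geq 6\log(1/\gamma)$ and $k\geq\log(k)/\gamma^2$ are inherited by assumption) yields
\[
n=\tilde{\Omega}\!\left(\frac{\sqrt{\log k}\,\log|\mathcal{Q}|}{\varepsilon'\gamma^2}\right)=\tilde{\Omega}\!\left(\frac{\sqrt{\log k}\,\log|\mathcal{Q}|}{\varepsilon\,\gamma^2}\right),
\]
where the second equality absorbs the $\log(1/\delta)$-factor from $\varepsilon'$ into the $\tilde{\Omega}$. Solving the resulting inequality for $\gamma$ gives
\[
\gamma=\tilde{\Omega}\!\left(\frac{\log^{1/4}(k)\,\log^{1/2}|\mathcal{Q}|}{\varepsilon^{1/2}n^{1/2}}\right),
\]
which is exactly the stated bound.

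The only real obstacle is the bookkeeping of constants and logarithmic factors in the parameter translation $(\varepsilon,\delta)\mapsto(\varepsilon',\delta')$: one must verify that the equations defining $(\varepsilon',\delta')$ admit a solution satisfying the side conditions $\log(k)\geq 6\log(1/\gamma)$ and $k\geq\log(k)/\gamma^2$ of Theorem~\ref{thm-bun} in the target regime, and that the $\tilde\Omega$-hidden poly-logarithmic terms do not accidentally cancel the main dependence on $\varepsilon$. Everything else is either a verbatim application of Lemma~\ref{lower-lemma}, a direct quotation of Theorem~\ref{thm-bun}, or elementary algebra.
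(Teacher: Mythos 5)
Your proof is correct and follows exactly the route the paper takes: its entire proof of this theorem is the one-sentence ``Combining Theorem~\ref{thm-bun} and Lemma~\ref{lower-lemma}, we obtain the following lower bound for population loss,'' and you supply precisely that combination, applying the resampling reduction to turn a population-accurate $(\varepsilon,\delta)$-DP algorithm into an empirically accurate one at slightly worse privacy, feeding the result into the sample-complexity bound of Theorem~\ref{thm-bun}, and solving for $\gamma$ while absorbing the $\log(1/\delta)$ factor into the $\tilde\Omega$. (Your ``by contrapositive / for contradiction'' framing is unnecessary --- you are simply applying the theorem's hypothesis directly --- but this is a matter of phrasing and does not affect correctness.)
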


Let $\mathcal{Q}$ denote the query class derived from Theorem \ref{thm:lower_bound}, and let $\mathcal{A}(\cdot)$ represent Algorithm \ref{Algo ACSMD}, which satisfies $(\varepsilon, \delta)$-differential privacy. For a fixed $\delta > 0$ and $\mathcal{P}^{\text{priv}} = \mathcal{A}(S_n)$,  the theorem gives
\[  \EE\left[\max\limits_{q\in \mathcal{Q}} \, \langle q, {\mathcal{P}} - \mathcal{P}^{\text{priv}} \rangle\right] =\tilde\Omega\left(\frac{\log^{1/4}( k)\log^{1/2}|\mathcal{Q}|}{\varepsilon^{1/2} n^{1/2}} \right).  \]

We observe that this lower bound differs by a factor of approximately $w^{1/2}(\mathcal{Q}) \log^{1/4}(1/\delta)$ compared to the upper bound established by Algorithm \ref{Algo ACSMD} in Theorem \ref{thm:smoothing}, which is
 \begin{align*}
 \EE\left[\max\limits_{q\in \mathcal{Q}} \, \langle q, {\mathcal{P}} - \mathcal{P}^{\text{priv}} \rangle\right] = \mathcal{O}  \left( \frac{w^{1/2}(\mathcal{Q}) \log^{1/4}(k)\log^{1/4}( 1/\delta)}{ \varepsilon^{1/2} n^{1/2}} \right)\;.
 \end{align*}

In particular, since the lower bound does not explicitly depend on \(\delta\), our upper bound on the utility guarantee is optimal when \(\mathcal{Q} \subset \mathcal{B}^k_2\). As mentioned towards the end of the paper, in this case, \(w(\mathcal{Q}) = \mathcal{O}(\log |\mathcal{Q}|)\), which aligns with the upper bound established in Theorem \ref{thm:lower_bound} for a fix $\delta>0\;.$

\end{document}